\newenvironment{lyxlist}[1]
	{\begin{list}{}
		{\settowidth{\labelwidth}{#1}
		 \setlength{\leftmargin}{\labelwidth}
		 \addtolength{\leftmargin}{\labelsep}
		 }}
	{\end{list}}
\theoremstyle{plain}
\newtheorem{thm}{\protect\theoremname}
\theoremstyle{plain}
\newtheorem{lem}[thm]{\protect\lemmaname}
\providecommand{\lemmaname}{Lemma}
\providecommand{\theoremname}{Theorem}
\begin{document}
\title{\textbf{Questionable and Unquestionable in Quantum Mechanics}}
\author{László E. Szabó$^{(1)}$, Márton Gömöri$^{(1)(2)}$ and Zalán Gyenis$^{(3)}$\\
~\\
{\normalsize\emph{$^{(1)}$Department of Logic, Institute of Philosophy,
Eötvös University, Budapest}}\\
{\normalsize\emph{$^{(2)}$Institute of Philosophy, Research Centre
for the Humanities, Budapest}}\\
{\normalsize\emph{$^{(3)}$Department of Logic, Jagiellonian University,
Krakow}}~}
\date{~}
\maketitle
\begin{abstract}
According to the Kolmogorovian Censorship Hypothesis, everything that
quantum theory says about the world in the language of the quantum
mechanical Hilbert space formalism is actually about relationships
between ordinary relative frequencies expressible in operational terms
using classical Kolmogorovian probability theory. In other words,
a quantum theoretical description of a system should in principle
be translatable into a purely operational--probabilistic description.
However, our goal in this paper is different; we do not want to deal
with the problem how to translate the known theory of quantum mechanics
into operational terms, or to reconstruct the theory from postulates
which can be interpreted in operational terms. Our aim is somewhat
broader and points in the opposite direction. We start with a general
scheme for the operational description of an arbitrary physical system.
The description is based solely on the notion of observable events
(measurement operations and measurement results) and on general, empirically
established simple laws concerning their relative frequency. These
laws are so simple and fundamental that they apply equally to any
physical system---no plausibly conceivable physical system is known
that would violate our basic assumptions. In the first part of the
paper, we outline the basic elements of such an operational--probabilistic
theory; such as the notion of state, the mathematical description
of state space, and the basic notions of dynamics. All these notions
are expressed in classical terms, within the framework of Kolmogorovian
probability theory, and, since our goal is not necessarily to reproduce
standard quantum mechanics, we try to avoid making assumptions that
are restrictive and would not hold in the most general case. In the
second part of the paper, we discuss how this operational--probabilistic
description compares to the quantum mechanical description and to
what extent the standard Hilbert space quantum mechanics can be regarded
as a reformulation of the general operational--probabilistic theory.
\end{abstract}

\section{Introduction}

The main motivation of this paper is the so called Kolmogorovian Censorship
Hypothesis, which was first formulated in (Szabó 1995). For long decades
it had been the widely accepted view that quantum mechanics not only
teaches us that the world is essentially of probabilistic nature,
but also that the notion of probability itself, as it is used in quantum
mechanics, is a new one, essentially different from the traditional
Kolmogorovian notion of probability. The main difference, and the
source of many other differences, is that the event algebra in quantum
probability theory is not a Boolean algebra, but a more general algebraic
structure, an orthomodular lattice, isomorphic with the subspace lattice
of a Hilbert space. On the one hand, this insight has helped us to
understand more deeply the curiosities of quantum theory, but on the
other hand, it has raised a number of additional foundational problems.
Nevertheless, it was agreed that classical Kolmogorovian probability
theory was unable to accommodate quantum phenomena and that some form
of quantum probability theory was inevitable.

The Kolmogorovian Censorship Hypothesis challenged this picture. The
hypothesis is a generalization of the observation that whenever quantum
probabilities appear in the description of real physical phenomena,
they appear in combination with classical Kolmogorovian probabilities,
such that, on the surface, the probabilities of all observable, or
at least ontologically occurring, physical events are classical Kolmogorovian
probabilities, which can be interpreted as ordinary relative frequencies.

Take the simplest example. Let $A$ be a quantum observable, a self-adjoint
operator with spectral decomposition $A=\sum_{i}\alpha_{i}P_{i}$
where $\alpha_{i}$ denotes an eigenvalue and $P_{i}$ denotes the
corresponding spectral projector. Let the system be in quantum state
$W$. The most fundamental probabilistic claim of quantum theory is
that the quantum probability of getting value $\alpha_{i}$ in an
$A$-measurement is
\[
q\left(\alpha_{i}\right)=\text{tr\ensuremath{\left(WP_{i}\right)}}
\]
However, comparing this prediction of the theory with the laboratory
observations, what we actually verify is that 
\begin{equation}
p\left(\left[\alpha_{i}\right]\right)=\text{tr}\left(WP_{i}\right)p(a)\label{eq:trace}
\end{equation}
where $p\left(\left[\alpha_{i}\right]\right)$ is the relative frequency
of the outcome event $\left[\alpha_{i}\right]$, say, that the pointer's
position is ``$\alpha_{i}$,'' and $p(a)$ is the relative frequency
of the event $a$ consisting in that an $A$-measurement is performed.
That is, quantum probability $\text{tr}\left(WP_{i}\right)$ never
stands naked, ``facing the tribunal of sense experience,'' but is
surrounded by classical Kolmogorovian probabilities.

While this problem does not arise if we restrict ourselves to the
spectral projectors of a single observable, it is easy to see that
an arbitrary set of quantum probabilities, collectively, cannot constitute
the relative frequencies of events. (Due to violation of Bell--type
inequalities, Pitowsky~1989, Ch.~2. See Szabó~1998; 2001; 2008.)
Don't be misled by the fact that in expressions like (\ref{eq:trace}),
the quantum probability can be interpreted as a classical conditional
probability,
\[
\text{tr}\left(WP_{i}\right)=\frac{p\left(\left[\alpha_{i}\right]\right)}{p(a)}=p\left(\left[\alpha_{i}\right]|a\right)
\]
assuming that $p(a)\neq0$ and $p\left(\left[\alpha_{i}\right]\wedge a\right)=p\left(\left[\alpha_{i}\right]\right)$,
and that the conditional probability can be interpreted as another
probability function on the same event algebra. The different quantum
probabilities can be interpreted as classical conditional probabilities,
but they may belong to different conditioning events, and therefore
do not together form a classical probability function over the event
algebra in question; or over an arbitrary event algebra in general.

This means that there can be no events or any states of affairs in
the ontology of the physical world whose relative frequencies, counted
on the Humean mosaic, are equal to quantum probabilities. According
to the Kolmogorovian Censorship Hypothesis, however, quantum probabilities
can always be expressed in terms of the relative frequencies of such
real events. Typically, as suggested by (\ref{eq:trace}), in terms
of the relative frequencies of the outcomes of measurements and the
relative frequencies of executions of measurements.

It should be noted that the possibility of interpreting quantum probability
as formulated in the Hypothesis can be formally proved under different
special conditions (Bana and Durt 1997; Szabó 2001; Rédei 2010; Hofer-Szabó
\emph{et al}.~2013, Ch.~9).

Thus, according to the Kolmogorovian Censorship Hypothesis, everything
that quantum theory says about the world in the language of the quantum
mechanical Hilbert space formalism is actually about relationships
between ordinary relative frequencies expressible in operational terms
using classical Kolmogorovian probability theory. In other words,
a quantum theoretical description of a system should in principle
be translatable into a purely operational--probabilistic description.

To translate quantum mechanics into operational terms is not a new
idea, of course (e.g. Jauch and Piron~1963; Ludwig~1970; Foulis
and Randall~1974; Davies~1976; Busch, Grabowski, and Lahti~1995;
Spekkens~2005; Barum \emph{et al.}~2007; 2008; Hardy~2008; Aerts~2009;
Abramsky and Heunen~2016; Schmid, Spekkens, and Wolfe~2018). Nevertheless,
what such a translation of standard quantum mechanics looks like is
not a self-evident question. However, we do not wish to discuss this
question here. Our goal is different; we do not want to translate
the known theory of quantum mechanics into operational terms, or to
reconstruct the theory from postulates which can be interpreted in
operational terms. The aim of this paper is somewhat broader and points
in the opposite direction.

We start with a general scheme for the operational description of
an arbitrary physical system. The description is based solely on the
notion of observable events (measurement operations and measurement
results) and on general, empirically established simple laws concerning
their relative frequency. These laws are so simple and fundamental
that they apply equally to any physical system, whether it is traditionally
considered as classical or quantum, or even ``more general than quantum''
(Müller~2021, Sec.~2). In other words, our goal is not necessarily
to reproduce standard quantum mechanics, and therefore we try to avoid
making assumptions that are restrictive and would not hold in the
most general case---no plausibly conceivable physical system is known
that would violate our basic assumptions. In this sense, our operational--probabilistic
model significantly differs from the other similar approaches in the
above mentioned literature; including the most recent GPT approaches
(Hardy~2008; Holevo~2011; Müller~2021). The main differences are
briefly highlighted in several places throughout the paper; on two
key aspects we reflect in more detail in Appendices 2 and 3.

Although our main---technically non-trivial---result concerns how
this operational--probabilistic description compares to the quantum
mechanical description and to what extent the standard Hilbert space
quantum mechanics can be regarded as a reformulation of the general
operational--probabilistic theory, we would like to draw the reader's
attention to the first part of the paper in which we outline the basic
elements of such an operational--probabilistic theory; such as the
notion of state, the mathematical description of state space, and
the basic notions of dynamics. All these notions are expressed in
classical terms, within the framework of Kolmogorovian probability
theory. It remains an intriguing question how to continue this project,
how to apply it, for example, for the description of well-known quantum
mechanical systems with the known symmetries and dynamics, etc.; to
which the authors plan to return somewhere else.

The paper is structured as follows. We describe a typical empirical
scenario in the following way: One can perform different measurement
operations on a physical system, each of which may have different
possible outcomes. The performance of a measuring operation is regarded
as a physical event on par with the measurement outcomes. Empirical
data are, exclusively, the observed relative frequencies of how many
times different measurement operations are performed and how many
times different outcome events occur, including the joint performances
of two or more measurements and the conjunctions of their outcomes.
In terms of the observed relative frequencies we stipulate two empirical
conditions, \textbf{(E1)} and \textbf{(E2)}, which are simple, plausible,
and empirically testable.

Of course, the observed relative frequencies essentially depend on
the frequencies with which the measurement operations are performed;
that is, on circumstances external to the physical system under consideration;
for example, on the free choice of a human. Under a further empirically
testable assumption about the observed frequencies, \textbf{(E3)},
we can isolate a notion which is independent of the relative frequencies
of the measurement operations and can be identified with the system's
own state; in the sense that it characterizes the system's probabilistic
behavior against all possible measurement operations. The largest
part of our further investigation is at the level of generality defined
by assumptions \textbf{(E1)}--\textbf{(E3)}.

In Section~\ref{sec:The-state-of}, we derive important theorems,
solely from conditions \textbf{(E1)}--\textbf{(E3)}, concerning the
possible states of the system. In Section~\ref{sec:Dynamics}, we
characterize the time evolution of these states, first in its most
general form under conditions \textbf{(E1)}--\textbf{(E3)} alone,
then on the basis of a further, empirically testable assumption \textbf{(E4)}.
Section~\ref{sec:Hidden} considers various possible ontological
pictures consistent with our probabilistic notion of state.

All these investigations are expressed in terms of relative frequencies,
which by definition satisfy the Kolmogorovian axioms of classical
probability theory. This means that any physical system---traditionally
categorized as classical or quantum, or ``more general than quantum''---that
can be described in operational terms can be described within classical
Kolmogorovian probability theory; including the system's state, time
evolution or ontology. In the second part of the paper, at the same
time, we will show that anything that can be described in these operational
terms can, if we wish, be represented in the Hilbert space quantum
mechanical formalism. It will be proved that there always exists:
\begin{itemize}
\item a suitable Hilbert space, such that
\item the outcomes of each measurement can be represented by a system of
pairwise orthogonal closed subspaces, spanning the whole Hilbert space,
\item the states of the system can be represented by pure state operators
with suitable state vectors, and
\item the probabilities of the measurement outcomes can be reproduced by
the usual trace formula of quantum mechanics.
\end{itemize}
Moreover, if appropriate, one can label the possible outcomes of a
measurement with numbers, and talk about them as the measured values
of a physical quantity. Each such quantity
\begin{itemize}
\item can be associated with a suitable self-adjoint operator, such that
\item the expectation value of the quantity, in all states of the system,
can be reproduced by the usual trace formula applied to the associated
self-adjoint operator,
\item the possible measurement results are exactly the eigenvalues of the
operator, and
\item the corresponding outcome events are represented by the eigenspaces
pertaining to the eigenvalues respectively, according to the spectral
decomposition of the operator in question.
\end{itemize}
This suggests that the basic postulates of quantum theory are in fact
analytic statements: they do not tell us anything about a physical
system beyond the fact that the system can be described in operational
terms. This is almost true. Nevertheless, it must be mentioned that
the quantum-mechanics-like representation we will obtain is not completely
identical with standard quantum mechanics. The interesting fact is
that most of the deviations from the quantum mechanical folklore,
discussed in Section~\ref{sec:Questionable-and-Unquestionable},
are related to exactly those issues in the foundations of quantum
mechanics that have been hotly debated for long decades.

\section{The General Operational Schema}

Consider a general experimental scenario: we can perform different
measurement operations denoted by $a_{1},a_{2},\ldots a_{r},\ldots a_{m}$
on a physical system. We shall use the same notation $a_{r}$ for
the physical event that the measurement operation $a_{r}$ happened.
Each measurement $a_{r}$ may have different outcomes denoted by $X_{1}^{r},X_{2}^{r},\ldots X_{n_{r}}^{r}$.
Let $M=\sum_{r=1}^{m}n_{r}$, and let $I^{M}$ denote the following
set of indices:
\begin{eqnarray}
I^{M} & = & \left\{ _{i}^{r}\,\left|\,1\leq r\leq m,1\leq i\leq n_{r}\right.\right\} \label{eq:jeloles-I^M}
\end{eqnarray}
Sometimes we perform two or more measurement operations simultaneously---that
is, in the same run of the experiment. So we also consider the double,
triple, and higher conjunctions of measurement operations and the
possible outcome events. In general, we consider the free Boolean
algebra $\mathcal{A}$ generated by the set of all measurement operation
and measurement outcome events 
\begin{eqnarray}
G & = & \left\{ a_{r}\right\} _{r=1,2,\ldots m}\cup\left\{ X_{i}^{r}\right\} _{_{i}^{r}\in I^{M}}\label{eq:generator}
\end{eqnarray}
with the usual Boolean operations, denoted by $\wedge$, $\vee$ and
$\neg$. Introduce the following concise notation: let $S_{max}^{M}$
denote the set of the indices of all double, triple, and higher conjunctions
of the outcome events in $G$. That is, for example, $_{i_{1}i_{2}\ldots i_{L}}^{r_{1}r_{2}\ldots r_{L}}\in S_{max}^{M}$
will stand for the conjunction $X_{i_{1}}^{r_{1}}\wedge X_{i_{2}}^{r_{2}}\ldots\wedge X_{i_{L}}^{r_{L}}$,
etc.

The event algebra $\mathcal{A}$ has $2^{M+m}$ atoms, each having
the form of
\begin{eqnarray}
\varDelta_{\vec{\varepsilon},\vec{\eta}} & = & \left(\underset{_{i}^{r}\in I^{M}}{\wedge}\left[X_{i}^{r}\right]^{\varepsilon_{i}^{r}}\right)\wedge\left(\stackrel[s=1]{m}{\wedge}\left[a_{s}\right]^{\eta_{s}}\right)\label{eq:atoms-0}
\end{eqnarray}
where $\vec{\varepsilon}=\left(\varepsilon_{i}^{r}\right)\in\left\{ 0,1\right\} ^{M}$,
$\vec{\eta}=\left(\eta_{s}\right)\in\left\{ 0,1\right\} ^{m}$, and
\begin{eqnarray*}
\left[X_{i}^{r}\right]^{\varepsilon_{i}^{r}} & = & \begin{cases}
X_{i}^{r} & \text{if }\varepsilon_{i}^{r}=1\\
\neg X_{i}^{r} & \text{if }\varepsilon_{i}^{r}=0
\end{cases}\\
\left[a_{s}\right]^{\eta_{s}} & = & \begin{cases}
a_{s} & \text{if }\eta_{s}=1\\
\neg a_{s} & \text{if }\eta_{s}=0
\end{cases}
\end{eqnarray*}
And, of course, all events in algebra $\mathcal{A}$ can be uniquely
expressed as a disjunction of atoms.

Assume that we can repeat the same experimental situation as many
times as needed; that is, we can prepare the same (or identical) physical
system in the same way and we can repeat the same measuring operations
with the same (or identical) measuring devices, etc. In every run
of the experiment we observe which measurement operations are performed
and which outcome events occur, including the joint performances of
two or more measurements and the conjunctions of their outcomes. In
this way, we observe the \emph{relative frequencies} of all elements
of the event algebra $\mathcal{A}$. Let $p$ denote this relative
frequency function on $\mathcal{A}$. Obviously, $\left(\mathcal{A},p\right)$
constitutes a classical probability model satisfying the Kolmogorovian
axioms. Since the relative frequencies on the whole event algebra
are uniquely determined by the relative frequencies of the atoms,
$p$ can be uniquely given by
\begin{eqnarray}
p(\varDelta_{\vec{\varepsilon},\vec{\eta}}) & \,\,\,\,\,\, & \vec{\varepsilon}\in\left\{ 0,1\right\} ^{M};\,\vec{\eta}\in\left\{ 0,1\right\} ^{m}\label{eq:0a}
\end{eqnarray}
The observed relative frequencies on $\mathcal{A}$ are considered
\emph{the} empirical data, exclusively.

We do not make \emph{a priori} assumptions about these relative frequencies.
Any truth about them will be regarded as empirical fact observed in
the experiment. For example, we do not assume that the stipulated
set of measurements $a_{1},a_{2},\ldots a_{r},\ldots a_{m}$, or a
subset of them, is sufficient to ``fully characterize the system's
state'' (in the GPT terminology: \textquotedblleft fiducial\textquotedblright{}
measurements, Müller~2021, p.~23). The reason is we have no operationally
meaningful notion of ``state'' prior to setting up a collection
of measurements (such a notion will be defined in the next section).
If there is any redundancy in $a_{1},a_{2},\ldots a_{r},\ldots a_{m}$,
this will be reflected in the observed relative frequencies on $\mathcal{A}$.

Another example is the fact that two or more measurements $a_{r_{1}},a_{r_{2}},\ldots a_{r_{L}}$
cannot be performed simultaneously; which reveals in the observed
fact that $p\left(a_{r_{1}}\wedge a_{r_{2}}\ldots\wedge a_{r_{L}}\right)$
\emph{always} equals $0$. Though, this ``always'' needs some further
explanation. For, it is obviously true that the frequencies $p(a_{r})$
sensitively depend on the will of the experimenter. Therefore, it
can be the case that $p\left(a_{r_{1}}\wedge a_{r_{2}}\ldots\wedge a_{r_{L}}\right)=0$
simply because the experimenter never chooses to perform the measurements
$a_{r_{1}},a_{r_{2}},\ldots a_{r_{L}}$ simultaneously. At least at
first sight this seems to significantly differ from the situation
when a certain combination of experiments are never performed due
to objective reasons; because the simultaneous performance of the
measurement operations is---as we usually express---impossible.
Without entering into the metaphysical disputes about possibility--impossibility,
we only say that the impossibility of a combination of measurements
is a contingent fact of the world; the measuring devices and the measuring
operations are such that the joint measurement $a_{r_{1}}\wedge a_{r_{2}}\ldots\wedge a_{r_{L}}$
never occurs. Let us denote by $\mathfrak{I}\subset\mathcal{P}\left(\left\{ 1,2,\ldots m\right\} \right)$
(where $\mathcal{P}\left(A\right)$ is the power set of set $A$)
the set of indices of such ``impossible'' conjunctions. That is,
for all $2\leq L\leq m$, 
\begin{eqnarray}
p\left(a_{r_{1}}\wedge a_{r_{2}}\ldots\wedge a_{r_{L}}\right) & = & 0\,\,\,\,\,\,\,\,\,\,\,\,\,\text{if }\left\{ r_{1},r_{2},\ldots r_{L}\right\} \in\mathfrak{\mathrm{\mathfrak{I}}}\label{eq:nem-a}
\end{eqnarray}
In contrast, let $\mathfrak{P}\subset\mathcal{P}\left(\left\{ 1,2,\ldots m\right\} \right)$
denote the set of indices of the ``possible'' conjunctions:
\begin{eqnarray*}
\mathfrak{P} & = & \Bigl\{\left\{ r_{1},r_{2},\ldots r_{L}\right\} \in\mathcal{P}\left(\left\{ 1,2,\ldots m\right\} \right)\Bigl|\,2\leq L\leq m;\left\{ r_{1},r_{2},\ldots r_{L}\right\} \notin\mathfrak{I}\Bigr\}
\end{eqnarray*}

\begin{lyxlist}{0.0.0}
\item [{\textbf{(E1)}}] We assume, as empirically observed fact, that every
conjunction of measurements that is possible does occur with some
non-zero frequency:
\begin{eqnarray}
p\left(a_{r_{1}}\wedge a_{r_{2}}\ldots\wedge a_{r_{L}}\right) & > & 0\,\,\,\,\,\,\,\,\,\,\text{if }\left\{ r_{1},r_{2},\ldots r_{L}\right\} \in\mathfrak{P}\label{eq:nem-b}
\end{eqnarray}
We also assume that for all $1\leq r\leq m$,
\begin{eqnarray}
p\left(a_{r}\right) & > & 0\label{eq:nem-c}
\end{eqnarray}
\end{lyxlist}
Similarly to (\ref{eq:jeloles-I^M}), we introduce the following sets
of indices:
\begin{eqnarray*}
S & = & \left\{ _{i_{1}i_{2}\ldots i_{L}}^{r_{1}r_{2}\ldots r_{L}}\in S_{max}^{M}\,|\,\left\{ r_{1},r_{2},\ldots r_{L}\right\} \in\mathfrak{P}\right\} \\
S_{\mathfrak{I}} & = & \left\{ _{i_{1}i_{2}\ldots i_{L}}^{r_{1}r_{2}\ldots r_{L}}\in S_{max}^{M}\,|\,\left\{ r_{1},r_{2},\ldots r_{L}\right\} \in\mathfrak{I}\right\} 
\end{eqnarray*}

\begin{lyxlist}{0.0.0}
\item [{\textbf{(E2)}}] The following assumptions are also regarded as
empirically observed regularities: for all $_{i}^{r},{}_{i'}^{r'}\in I^{M}$
and $\left\{ r_{1},r_{2},\ldots r_{L}\right\} \in\mathfrak{P}$,

\begin{eqnarray}
p\left(a_{r}\wedge X_{i}^{r}\right) & = & p\left(X_{i}^{r}\right)\label{eq:1a}\\
\text{if }r=r'\text{ and }i\not=i'\text{ then }p\left(X_{i}^{r}\wedge X_{i'}^{r'}\right) & = & 0\label{eq:1b}\\
\sum_{\begin{array}{c}
k\\
\left(_{k}^{r}\in I^{M}\right)
\end{array}}p\left(X_{k}^{r}|a_{r}\right) & = & 1\label{eq:osszeg=00003D1}\\
\sum_{\begin{array}{c}
k_{1}\ldots k_{L}\\
\left(_{k_{1}\ldots k_{L}}^{r_{1}\ldots r_{L}}\in S\right)
\end{array}}p\left(X_{k_{1}}^{r_{1}}\wedge\ldots\wedge X_{k_{L}}^{r_{L}}|a_{r_{1}}\wedge\ldots\wedge a_{r_{L}}\right) & = & 1\label{eq:1d}
\end{eqnarray}
where $p\left(\,|\,\right)$ denotes the usual conditional relative
frequency defined by the Bayes rule---$p\left(a_{r}\right)\neq0$
and $p\left(a_{r_{1}}\wedge a_{r_{2}}\ldots\wedge a_{r_{L}}\right)\neq0$,
due to (\ref{eq:nem-b})--(\ref{eq:nem-c}). That is to say, an outcome
event does not occur without the performance of the corresponding
measurement operation; it is never the case that two different outcomes
of the same measurement occur simultaneously; whenever a measurement
operation is performed, one of the possible outcomes occurs; whenever
a conjunction of measurement operations is performed, one of the possible
outcome combinations occurs.
\end{lyxlist}
In the picture we suggest, an outcome of a measurement is, primarily,
a physical event, an occurrence of a certain state of affairs at the
end of the measuring process; rather than obtaining a numeric value
of a quantity. To give an example, the state of affairs when the rotated
coil of a voltmeter takes a new position of equilibrium with the distorted
spring is ontologically prior to the number on the scale to which
its pointer points at that moment. Nevertheless, in some cases the
measurement outcomes are labeled by real numbers that are interpreted
as the ``measured value'' of a real-valued physical quantity: 
\begin{equation}
\alpha_{r}:X_{i}^{r}\mapsto\alpha_{i}^{r}\in\mathbb{R}\label{eq:labeling-1}
\end{equation}
In this case, at least formally, it may make sense to talk about conditional
expectation value, that is the average of the measured values, given
that the measurement is performed:
\begin{eqnarray*}
\left\langle \alpha_{r}\right\rangle  & = & \sum_{i=1}^{n_{r}}\alpha_{i}^{r}p\left(X_{i}^{r}|a_{r}\right)
\end{eqnarray*}
About all labelings $\alpha_{r}$ we will assume that $\alpha_{i}^{r}\neq\alpha_{j}^{r}$
for $i\neq j$.

\section{The State of the System \protect\label{sec:The-state-of}}

Of course, the relative frequency $p$ in $\left(\mathcal{A},p\right)$
depends not only on the behavior of the physical system after a certain
physical preparation but also on the autonomous decisions of the experimenter
to perform this or that measurement operation. One can hope a scientific
description of the system only if the two things can be separated.
Whether this is possible is a contingent fact of the empirically observed
reality, reflected in the observed relative frequencies.

Let $\left|S\right|$ denote the number of elements of $S$. Consider
the following vector:
\begin{eqnarray}
\vec{Z} & = & \left(Z_{i}^{r},Z_{i_{1}\ldots i_{L}}^{r_{1}\ldots r_{L}}\right)\in\mathbb{R}^{M+\left|S\right|}\label{eq:state}
\end{eqnarray}
where 
\begin{equation}
Z_{i}^{r}=p\left(X_{i}^{r}|a_{r}\right)\,\,\,\,\,\,\,\,\,\,\,\,\,\,{}_{i}^{r}\in I^{M}\label{eq:conditional00}
\end{equation}
and
\begin{equation}
Z_{i_{1}\ldots i_{L}}^{r_{1}\ldots r_{L}}=p\left(X_{i_{1}}^{r_{1}}\wedge\ldots\wedge X_{i_{L}}^{r_{L}}|a_{r_{1}}\wedge\ldots\wedge a_{r_{L}}\right)\,\,\,\,\,\,{}_{i_{1}\ldots i_{L}}^{r_{1}\ldots r_{L}}\in S\label{eq:conditional11}
\end{equation}
In general, even if the physical preparation of the system is identical
in every run of the experiment, the conditional relative frequencies
on the right hand sides of (\ref{eq:conditional00})--(\ref{eq:conditional11}),
hence the values of $Z_{i}^{r}$ and $Z_{i_{1}\ldots i_{L}}^{r_{1}\ldots r_{L}}$,
may vary if the actual frequencies $\left\{ p\left(a_{r}\right)\right\} _{1\leq r\leq m}$
and $\left\{ p\left(a_{r_{1}}\wedge\ldots\wedge a_{r_{L}}\right)\right\} _{\left\{ r_{1},\ldots r_{L}\right\} \in\mathfrak{P}}$
vary, for example, upon the experimenter's decisions.

However, we make the following stipulation as observed empirical fact:
\begin{lyxlist}{0.0.0}
\item [{\textbf{(E3)}}] For all physical preparations, keeping the preparation
fixed, $\vec{Z}$ is independent of the actual nonzero values of $\left\{ p\left(a_{r}\right)\right\} _{1\leq r\leq m}$
and $\left\{ p\left(a_{r_{1}}\wedge\ldots\wedge a_{r_{L}}\right)\right\} _{\left\{ r_{1},\ldots r_{L}\right\} \in\mathfrak{P}}$.
\end{lyxlist}
In other words, what \textbf{(E3)} says is that for all fixed physical
preparations, the observed relative frequencies are such that
\begin{align}
p\left(X_{i}^{r}\right) & =Z_{i}^{r}p\left(a_{r}\right)\,\,\,\,\,\,\,\,\,\,\,{}_{i}^{r}\in I^{M}\label{eq:pi_is_such-1}\\
p\left(X_{i_{1}}^{r_{1}}\wedge\ldots\wedge X_{i_{L}}^{r_{L}}\right) & =\begin{cases}
Z_{i_{1}\ldots i_{L}}^{r_{1}\ldots r_{L}}p\left(a_{r_{1}}\wedge\ldots\wedge a_{r_{L}}\right) & _{i_{1}\ldots i_{L}}^{r_{1}\ldots r_{L}}\in S\\
0 & _{i_{1}\ldots i_{L}}^{r_{1}\ldots r_{L}}\in S_{\mathfrak{I}}
\end{cases}\label{eq:pi_is_such-2}
\end{align}
 with \emph{one and the same} $\vec{Z}=\left(Z_{i}^{r},Z_{i_{1}\ldots i_{L}}^{r_{1}\ldots r_{L}}\right)$.
$\vec{Z}$ is therefore determined only by the physical preparation.
Notice that \textbf{(E3)} does not exclude that different physical
preparations lead to the same $\vec{Z}$.

$\vec{Z}$ can be regarded as a characterization of the system's \emph{state}
right after the given physical preparation, in the sense that it characterizes
the system's probabilistic behavior against all possible measurement
operations. This characterization is complete in the following sense:
\begin{thm}
\label{State--together}State $\vec{Z}$ together with arbitrary relative
frequencies of measurements, $\left\{ p\left(a_{r}\right)\right\} _{1\leq r\leq m}$
and $\left\{ p\left(a_{r_{1}}\wedge\ldots\wedge a_{r_{L}}\right)\right\} _{\left\{ r_{1},\ldots r_{L}\right\} \in\mathfrak{P}}$,
uniquely determine the relative frequency function $p$ on the whole
event algebra $\mathcal{A}$.
\end{thm}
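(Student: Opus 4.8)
The plan is to exploit the fact that a Kolmogorovian probability function on the free Boolean algebra $\mathcal{A}$ is completely fixed once its values on all atoms $\varDelta_{\vec\varepsilon,\vec\eta}$ are known, so it suffices to show that every atom value is a function of $\vec Z$ and the given measurement frequencies. First I would use \textbf{(E2)} to discard the vast majority of atoms. If some measurement $a_s$ is switched on ($\eta_s=1$) but accompanied by two distinct outcomes of $s$, or by none, the atom sits inside an event of probability zero by \ref{eq:1b} respectively \ref{eq:osszeg=00003D1}; if $a_s$ is switched off ($\eta_s=0$) yet one of its outcomes is recorded, the atom vanishes by \ref{eq:1a}; and if the set $R=\{s:\eta_s=1\}$ of performed measurements is impossible, i.e.\ $R\in\mathfrak I$, the atom vanishes by \ref{eq:nem-a}. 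Hence only the \emph{admissible} atoms survive: those indexed by a possible (or singleton, or empty) set $R$ together with a single outcome $i_s$ for each $s\in R$.

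For such an admissible atom I would show that its probability collapses to $p\!\left(\bigwedge_{s\in R}X_{i_s}^{s}\wedge\bigwedge_{s\notin R}\neg a_s\right)$: all the negative clauses forced by the atom---the $\neg X_j^s$ for the non-selected outcomes of performed measurements and for all outcomes of non-performed ones---are redundant up to probability zero, again by \ref{eq:1a} and \ref{eq:1b}, while the positive clauses $a_s$ for $s\in R$ are implied by $X_{i_s}^s\le a_s$. The remaining task is to evaluate this probability. I would expand the conjunction of negations $\bigwedge_{s\notin R}\neg a_s$ by inclusion--exclusion, rewriting it as an alternating sum over subsets $U\subseteq R^{c}$ of the events $\bigwedge_{s\in R}X_{i_s}^{s}\wedge\bigwedge_{s\in U}a_s$. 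In each such term I would replace every bare $a_s$ (for $s\in U$) by the disjunction of its outcomes---legitimate up to probability zero, since \ref{eq:osszeg=00003D1} together with \ref{eq:1a} and \ref{eq:1b} gives $a_s=\bigvee_j X_j^{s}$---thereby turning it into a sum, over outcome choices on $U$, of pure outcome-conjunction probabilities $p\!\left(\bigwedge_{s\in R\cup U}X_{\cdot}^{s}\right)$.

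The last step is to apply \textbf{(E3)} in the form \ref{eq:pi_is_such-1}--\ref{eq:pi_is_such-2}, which expresses each outcome-conjunction probability as the corresponding component of $\vec Z$ times the measurement frequency $p\!\left(\bigwedge_{s\in R\cup U}a_s\right)$; when $R\cup U$ is impossible the frequency is zero and the term drops out, so no undefined component of $\vec Z$ is ever called upon. Assembling these, the admissible atom probability becomes an explicit alternating sum of products of $\vec Z$-components and measurement frequencies $p\!\left(\bigwedge_{s\in T}a_s\right)$ with $R\subseteq T$, all of which are among the given data. Since the non-admissible atoms are identically zero and the admissible ones are thus determined, $p$ is determined on all of $\mathcal{A}$.

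I expect the main obstacle to be the bookkeeping in the second paragraph: one must keep the inclusion--exclusion signs, the auxiliary summation over outcomes on $U$, and the ``up to probability zero'' replacements simultaneously under control, and verify that impossible conjunctions consistently contribute zero so that the formula never invokes a $\vec Z$-component that the state does not carry. Everything else is routine Boolean and Kolmogorovian manipulation.
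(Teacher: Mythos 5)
Your proposal is correct, and it reaches the conclusion by a genuinely different route at the crucial step. You and the paper agree on the setup: reduce the problem to the atoms $\varDelta_{\vec{\varepsilon},\vec{\eta}}$, use \textbf{(E2)} (via (\ref{eq:1a}), (\ref{eq:1b}), (\ref{eq:osszeg=00003D1}) and (\ref{eq:nem-a})) to annihilate all atoms except those pairing a possible set $R$ of performed measurements with one outcome per performed measurement, and then feed in \textbf{(E3)} in the form (\ref{eq:pi_is_such-1})--(\ref{eq:pi_is_such-2}) to express outcome-conjunction probabilities as $\vec{Z}$-components times measurement frequencies. Where you diverge is in how the surviving atom probabilities are recovered from the conjunction probabilities: the paper assembles the system of $2^{M}$ linear equations (\ref{eq:rendszer-1})--(\ref{eq:rendszer-3}) in the unknowns $\delta_{\vec{\varepsilon}}$ and argues \emph{abstractly} that it is invertible (linear independence from the uniqueness of atomic decomposition and the fact that only conjunctions appear on the right), whereas you \emph{explicitly invert} that map by inclusion--exclusion over the non-performed measurements, followed by the measure-zero substitution $a_{s}=\bigvee_{j}X_{j}^{s}$, yielding a closed-form alternating sum for each admissible atom. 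Your version buys an explicit formula and, in effect, a constructive verification of the invertibility that the paper asserts rather tersely; the paper's version buys brevity and avoids the sign- and zero-term bookkeeping you correctly flag as the main hazard. One small point of care in writing yours up fully: when you discard the negative clauses $\neg X_{j}^{s}$ "up to probability zero," you are using that the atom differs from the reduced event by a set contained in a finite union of null events, which is exactly what (\ref{eq:1a})--(\ref{eq:osszeg=00003D1}) supply; spelled out, this is sound.
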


\begin{proof}
Using the same notations we introduced in (\ref{eq:atoms-0}), each
atom has the form of
\begin{equation}
\varDelta_{\vec{\varepsilon},\vec{\eta}}=\underbrace{\left(\underset{_{i}^{r}\in I^{M}}{\wedge}\left[X_{i}^{r}\right]^{\varepsilon_{i}^{r}}\right)}_{\varGamma_{\vec{\varepsilon}}}\wedge\left(\stackrel[s=1]{m}{\wedge}\left[a_{s}\right]^{\eta_{s}}\right)\label{eq:atoms}
\end{equation}
Notice that the part $\varGamma_{\vec{\varepsilon}}=\underset{_{i}^{r}\in I^{M}}{\wedge}\left[X_{i}^{r}\right]^{\varepsilon_{i}^{r}}$
in (\ref{eq:atoms}) uniquely determines the whole $\varDelta_{\vec{\varepsilon},\vec{\eta}}$,
whenever $p\left(\varDelta_{\vec{\varepsilon},\vec{\eta}}\right)\neq0$.
Namely, due to (\ref{eq:1a}) and (\ref{eq:osszeg=00003D1}),
\begin{equation}
p\left(\varDelta_{\vec{\varepsilon},\vec{\eta}}\right)\neq0\text{ implies that for all \ensuremath{1\leq r\leq m}}\text{, }\sum_{i=1}^{n_{r}}\varepsilon_{i}^{r}=0\text{ iff }\eta_{r}=0\label{eq:X->egesz}
\end{equation}
In other words, for each $\vec{\varepsilon}\in\left\{ 0,1\right\} ^{M}$
there is exactly one $\vec{\eta}\in\left\{ 0,1\right\} ^{m}$ for
which (\ref{eq:1a}) and (\ref{eq:osszeg=00003D1}) do not imply that
$p\left(\varDelta_{\vec{\varepsilon},\vec{\eta}}\right)=0$.  Let
us denote it by $\vec{\eta}\left(\vec{\varepsilon}\right)$; and,
for the sake of brevity, introduce the following notation: $\delta_{\vec{\varepsilon}}=p\left(\varDelta_{\vec{\varepsilon},\vec{\eta}\left(\vec{\varepsilon}\right)}\right)$.
(It is not necessarily the case that $\delta_{\vec{\varepsilon}}\neq0$.
For example, the empirical fact (\ref{eq:1b}) will be accounted for
in terms of the values on the right hand side of (\ref{eq:rendszer-3})
below.)

It must be also noticed that $\left\{ \varGamma_{\vec{\varepsilon}}\right\} _{\vec{\varepsilon}\in\left\{ 0,1\right\} ^{M}}$
constitute the atoms of the free Boolean algebra $\mathcal{A}^{M}$
generated by the set $\left\{ X_{i}^{r}\right\} _{_{i}^{r}\in I^{M}}$.
Events $X_{i}^{r}$ and $X_{i_{1}}^{r_{1}}\wedge\ldots\wedge X_{i_{L}}^{r_{L}}$
on the right hand sides of (\ref{eq:conditional00})--(\ref{eq:conditional11})
are elements of $\mathcal{A}^{M}$, and have therefore a unique decomposition
into disjunction of atoms of $\mathcal{A}^{M}$. Accordingly, taking
into account (\ref{eq:X->egesz}), we have
\begin{alignat}{3}
\sum_{\vec{\varepsilon}\in\left\{ 0,1\right\} ^{M}}\delta_{\vec{\varepsilon}} & \,\,=\,\, &  & 1\label{eq:rendszer-1}\\
\sum_{\vec{\varepsilon}\in\left\{ 0,1\right\} ^{M}}{}_{i}^{r}R_{\vec{\varepsilon}}\,\,\delta_{\vec{\varepsilon}} & \,\,=\,\, &  & p\left(X_{i}^{r}\right)=Z_{i}^{r}p\left(a_{r}\right) & \,\,\,\,\,\,\,\, & _{i}^{r}\in I^{M}\\
\sum_{\vec{\varepsilon}\in\left\{ 0,1\right\} ^{M}}{}_{i_{1}\ldots i_{L}}^{r_{1}\ldots r_{L}}R_{\vec{\varepsilon}}\,\,\delta_{\vec{\varepsilon}}\, & \,=\,\, &  & p\left(X_{i_{1}}^{r_{1}}\wedge\ldots\wedge X_{i_{L}}^{r_{L}}\right)\nonumber \\
 &  &  & =Z_{i_{1}\ldots i_{L}}^{r_{1}\ldots r_{L}}p\left(a_{r_{1}}\wedge\ldots\wedge a_{r_{L}}\right) &  & _{i_{1}\ldots i_{L}}^{r_{1}\ldots r_{L}}\in S\\
\sum_{\vec{\varepsilon}\in\left\{ 0,1\right\} ^{M}}{}_{i_{1}\ldots i_{L}}^{r_{1}\ldots r_{L}}R_{\vec{\varepsilon}}\,\,\delta_{\vec{\varepsilon}} & \,\,=\,\, &  & p\left(X_{i_{1}}^{r_{1}}\wedge\ldots\wedge X_{i_{L}}^{r_{L}}\right)=0 &  & _{i_{1}\ldots i_{L}}^{r_{1}\ldots r_{L}}\in S_{\mathfrak{I}}\label{eq:rendszer-3}
\end{alignat}
with
\begin{eqnarray*}
_{i}^{r}R_{\vec{\varepsilon}} & = & \begin{cases}
1 & \text{if }\varGamma_{\vec{\varepsilon}}\subseteq X_{i}^{r}\\
0 & \text{if }\varGamma_{\vec{\varepsilon}}\nsubseteq X_{i}^{r}
\end{cases}\\
_{i_{1}\ldots i_{L}}^{r_{1}\ldots r_{L}}R_{\vec{\varepsilon}} & = & \begin{cases}
1 & \text{if }\varGamma_{\vec{\varepsilon}}\subseteq X_{i_{1}}^{r_{1}}\wedge\ldots\wedge X_{i_{L}}^{r_{L}}\\
0 & \text{if }\varGamma_{\vec{\varepsilon}}\nsubseteq X_{i_{1}}^{r_{1}}\wedge\ldots\wedge X_{i_{L}}^{r_{L}}
\end{cases}
\end{eqnarray*}
where $\subseteq$ is meant in the sense of the partial ordering in
$\mathcal{A}^{M}$.

Now, (\ref{eq:rendszer-1})--(\ref{eq:rendszer-3}) constitute a
system of $1+M+\left|S_{max}^{M}\right|=2^{M}$ linear equations with
$2^{M}$ unknowns $\delta_{\vec{\varepsilon}},\vec{\varepsilon}\in\left\{ 0,1\right\} ^{M}$.
The equations are linearly independent due to the uniqueness of decomposition
into disjunction of atoms of $\mathcal{A}^{M}$, and due to the fact
that there are only conjunctions on the right hand side. (A similar
equation for, say, $X_{i_{1}}^{r_{1}}\vee X_{i_{2}}^{r_{2}}$ could
be expressed as the sum of equations for $X_{i_{1}}^{r_{1}}$ and
$X_{i_{2}}^{r_{2}}$ minus the one for $X_{i_{1}}^{r_{1}}\wedge X_{i_{2}}^{r_{2}}$.)
Therefore, the system has a unique solution for all $\delta_{\vec{\varepsilon}}$,
that is, for the relative frequencies of $\left\{ \varDelta_{\vec{\varepsilon},\vec{\eta}\left(\vec{\varepsilon}\right)}\right\} _{\vec{\varepsilon}\in\left\{ 0,1\right\} ^{M}}$.
The rest of the atoms of $\mathcal{A}$ have zero relative frequency.
\end{proof}
Thus the notion of state we introduced aligns with the traditional
notion of state in a probabilistic and operational context. In fact,
it corresponds precisely to Lucien Hardy's formulation, which has
been widely used in recent GPT-like approaches:
\begin{quote}
The state associated with a particular preparation is defined to be
(that thing represented by) any mathematical object that can be used
to determine the probability associated with the outcomes of any measurement
that may be performed on a system prepared by the given preparation.
(2008, p.~2)
\end{quote}
To avoid misunderstandings, however, it is worthwhile pointing out
two important differences. On the one hand, in our definition the
state determines not only the probability associated with the outcomes
of any measurement but also the probability associated with the outcomes
of any \emph{conjunction} of measurements that can be jointly performed
on the system. By contrast, in the GPT approach measurement conjunctions
are only introduced in connection with ``composite systems,'' in
a way that one measurement is performed on one ``subsystem'' and
simultaneously another measurement on a different ``subsystem''
(Müller~2021, pp.~24-28). In this paper we do not want to talk about
``composite systems'' and ``subsystems,'' and just note that any
operationally meaningful conception of those will be based on the
notion of an overall state which comprises the statistics of all measurements
performable on the total system, including all possible measurement
\emph{conjunctions}.

On the other hand, it must be clear that the state, in itself, does
not determine the probabilities of the measurement outcome events;
only the state of the system $\vec{Z}$ and the relative frequencies
of the measurements $\left\{ p\left(a_{r}\right)\right\} _{1\leq r\leq m}$
and $\left\{ p\left(a_{r_{1}}\wedge\ldots\wedge a_{r_{L}}\right)\right\} _{\left\{ r_{1},\ldots r_{L}\right\} \in\mathfrak{P}}$
together. And the fact that the frequencies of the measurements in
(\ref{eq:pi_is_such-1})--(\ref{eq:pi_is_such-2}) can be arbitrary
does not imply that the components of $\vec{Z}$ 
\[
\left\{ Z_{i}^{r},Z_{i_{1}\ldots i_{L}}^{r_{1}\ldots r_{L}}\right\} _{\begin{array}{l}
_{i}^{r}\in I^{M};\,{}_{i_{1}\ldots i_{L}}^{r_{1}\ldots r_{L}}\in S\end{array}}
\]
constitute relative frequencies of the corresponding outcome events
\[
\left\{ X_{i}^{r},X_{i_{1}}^{r_{1}}\wedge\ldots\wedge X_{i_{L}}^{r_{L}}\right\} _{\begin{array}{l}
_{i}^{r}\in I^{M};\,{}_{i_{1}\ldots i_{L}}^{r_{1}\ldots r_{L}}\in S\end{array}}
\]
 (or events whatsoever), as will be shown in Section~\ref{sec:Hidden}.

It is essential in our present analysis that the measurement operations
are treated on par with the outcome events; they belong to the ontology.
However, as it is clearly seen from (\ref{eq:rendszer-1})--(\ref{eq:rendszer-3}),
the notion of $\vec{Z}$ detaches the ``system's contribution''
to the totality of statistical facts observed in the measurements
from the ``experimenter's contribution''.

Still, the state of the system depends not only on the features intrinsic
to the system in itself, but also on the content of $\mathfrak{I}$,
i.e., which combinations of measuring operations cannot be performed
simultaneously. This means that the measuring devices and measuring
operations, by means of which we establish the empirically meaningful
semantics of our physical description of the system, play a \emph{constitutive}
role in the notion of state\emph{ attributed to the system}. This
kind of constitutive role of the semantic conventions is however completely
natural in all empirically meaningful physical theories (Szabó~2020).

The following lemma will be important for our further investigations:
\begin{lem}
\label{Lemma:ineq}For all states,
\begin{eqnarray}
Z_{i_{1}\ldots i_{L}}^{r_{1}\ldots r_{L}} & \leq & \min\left\{ Z_{i_{\gamma_{1}}\ldots i_{\gamma_{L-1}}}^{r_{\gamma_{1}}\ldots r_{\gamma_{L-1}}}\right\} _{\left\{ \gamma_{1},\ldots\gamma_{L-1}\right\} \subset\left\{ 1,\ldots L\right\} }\label{eq:L-be_tartozas-1}
\end{eqnarray}
where \textup{$_{i_{1}\ldots i_{L}}^{r_{1}\ldots r_{L}}\in S$}.
\end{lem}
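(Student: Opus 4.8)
The plan is to reduce the minimum to a single ``drop one index'' inequality and then extract everything from the measurement-frequency invariance encoded in \textbf{(E3)}. Every $(L-1)$-element subset $\left\{\gamma_{1},\ldots,\gamma_{L-1}\right\}\subset\left\{1,\ldots,L\right\}$ arises by deleting exactly one index, and the entire scheme is symmetric under relabeling the measurements, so it suffices to prove $Z_{i_{1}\ldots i_{L}}^{r_{1}\ldots r_{L}}\leq Z_{i_{1}\ldots i_{L-1}}^{r_{1}\ldots r_{L-1}}$ (the case where the last index is dropped). Running this over all choices of which index to delete and taking the minimum then gives the stated bound.

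For the single inequality I would first convert conditional frequencies into joint ones via the state equations (\ref{eq:pi_is_such-1})--(\ref{eq:pi_is_such-2}) that \textbf{(E3)} guarantees. Writing $E_{F}=X_{i_{1}}^{r_{1}}\wedge\ldots\wedge X_{i_{L}}^{r_{L}}$, $E_{R}=X_{i_{1}}^{r_{1}}\wedge\ldots\wedge X_{i_{L-1}}^{r_{L-1}}$ and $M_{F}=a_{r_{1}}\wedge\ldots\wedge a_{r_{L}}$, $M_{R}=a_{r_{1}}\wedge\ldots\wedge a_{r_{L-1}}$, these say $p(E_{F})=Z_{i_{1}\ldots i_{L}}^{r_{1}\ldots r_{L}}\,p(M_{F})$ and $p(E_{R})=Z_{i_{1}\ldots i_{L-1}}^{r_{1}\ldots r_{L-1}}\,p(M_{R})$. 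Since $E_{F}=E_{R}\wedge X_{i_{L}}^{r_{L}}\subseteq E_{R}$, monotonicity of $p$ yields $p(E_{F})\leq p(E_{R})$, hence
\[
Z_{i_{1}\ldots i_{L}}^{r_{1}\ldots r_{L}}\,p(M_{F})\;\leq\;Z_{i_{1}\ldots i_{L-1}}^{r_{1}\ldots r_{L-1}}\,p(M_{R}),
\qquad\text{i.e.}\qquad
Z_{i_{1}\ldots i_{L}}^{r_{1}\ldots r_{L}}\;\leq\;Z_{i_{1}\ldots i_{L-1}}^{r_{1}\ldots r_{L-1}}\,\frac{p(M_{R})}{p(M_{F})},
\]
where $p(M_{F})>0$ because $\left\{r_{1},\ldots,r_{L}\right\}\in\mathfrak{P}$ (as ${}_{i_{1}\ldots i_{L}}^{r_{1}\ldots r_{L}}\in S$) together with \textbf{(E1)}.

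The main obstacle is that the two states are conditioned on \emph{different} measurement events, so the factor $p(M_{R})/p(M_{F})=1+p(M_{R}\wedge\neg a_{r_{L}})/p(M_{F})$ is in general strictly larger than $1$ and the display alone is too weak. This is exactly where \textbf{(E3)} is decisive: the numbers $Z_{i_{1}\ldots i_{L}}^{r_{1}\ldots r_{L}}$ and $Z_{i_{1}\ldots i_{L-1}}^{r_{1}\ldots r_{L-1}}$ are the \emph{same} for every admissible assignment of the measurement frequencies, so the displayed inequality must hold for all of them at once. I would therefore exhibit a sequence of admissible assignments driving the ratio to $1$: concentrate almost all of the frequency on runs performing exactly $\left\{r_{1},\ldots,r_{L}\right\}$ (an admissible run, since this set lies in $\mathfrak{P}$ and hence, having $p(M_{F})>0$, contains no impossible sub-combination), while allotting an arbitrarily small mass $\varepsilon$ to one admissible run realizing each remaining conjunction in $\mathfrak{P}$ and each singleton, so that \textbf{(E1)} is met. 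Then $p(M_{F})\to 1$ while $p(M_{R}\wedge\neg a_{r_{L}})=O(\varepsilon)\to 0$, so $p(M_{R})/p(M_{F})\to 1$; since the two states stay fixed along the sequence, the bound collapses to $Z_{i_{1}\ldots i_{L}}^{r_{1}\ldots r_{L}}\leq Z_{i_{1}\ldots i_{L-1}}^{r_{1}\ldots r_{L-1}}$.

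The one subtlety to address is that \textbf{(E1)} forces every possible conjunction to carry positive frequency, which might appear to pin $p(M_{R}\wedge\neg a_{r_{L}})$ away from $0$ if some possible $Q\supseteq\left\{r_{1},\ldots,r_{L-1}\right\}$ turns impossible upon adjoining $r_{L}$. This is harmless, because \textbf{(E1)} demands only positivity and not a uniform lower bound: such conjunctions may be given frequency $\varepsilon$, and their contribution to $p(M_{R}\wedge\neg a_{r_{L}})$ vanishes in the limit. (If $Z_{i_{1}\ldots i_{L-1}}^{r_{1}\ldots r_{L-1}}=0$ the claim is immediate, since then $p(E_{R})=0$ forces $p(E_{F})=0$ and hence $Z_{i_{1}\ldots i_{L}}^{r_{1}\ldots r_{L}}=0$.)
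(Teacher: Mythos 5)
Your proof is correct and follows essentially the same route as the paper's: both convert the conditional frequencies to joint ones via the state equations, use monotonicity of $p$ under conjunction, and then invoke \textbf{(E3)} to argue that since $\vec{Z}$ is fixed while the measurement frequencies vary, the factor $p\left(a_{r_{\gamma_{1}}}\wedge\ldots\wedge a_{r_{\gamma_{L-1}}}\right)/p\left(a_{r_{1}}\wedge\ldots\wedge a_{r_{L}}\right)$ can be driven to its infimum $1$. Your version is in fact slightly more careful than the paper's, which divides by $Z_{i_{\gamma_{1}}\ldots i_{\gamma_{L-1}}}^{r_{\gamma_{1}}\ldots r_{\gamma_{L-1}}}$ without addressing the case where it vanishes and asserts rather than constructs the frequency assignments attaining (in the limit) the ratio $1$.
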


\begin{proof}
It is known that similar inequality holds for arbitrary relative frequencies.
Therefore, 
\begin{eqnarray}
p\left(X_{i_{1}}^{r_{1}}\wedge\ldots\wedge X_{i_{L}}^{r_{L}}\right)\leq\min\left\{ p\left(X_{i_{\gamma_{1}}}^{r_{\gamma_{1}}}\wedge\ldots\wedge X_{i_{\gamma_{L-1}}}^{r_{\gamma_{L-1}}}\right)\right\} _{\left\{ \gamma_{1},\ldots\gamma_{L-1}\right\} \subset\left\{ 1,\ldots L\right\} }\label{eq:L-be_tartozas-1-1}
\end{eqnarray}
for all $_{i_{1}\ldots i_{L}}^{r_{1}\ldots r_{L}}\in S_{max}^{M}$,
and
\begin{eqnarray}
p\left(a_{r_{1}}\wedge\ldots\wedge a_{r_{L}}\right)\leq\min\left\{ p\left(a_{r_{\gamma_{1}}}\wedge\ldots\wedge a_{r_{\gamma_{L-1}}}\right)\right\} _{\left\{ \gamma_{1},\ldots\gamma_{L-1}\right\} \subset\left\{ 1,\ldots L\right\} }\label{eq:L-be_tartozas-1-2}
\end{eqnarray}
for all $2\leq L\leq m$, $1\leq r_{1},\ldots r_{L}\leq m$. It follows
from the definition of state that 
\begin{eqnarray}
p\left(X_{i_{1}}^{r_{1}}\wedge\ldots\wedge X_{i_{L}}^{r_{L}}\right) & = & Z_{i_{1}\ldots i_{L}}^{r_{1}\ldots r_{L}}p\left(a_{r_{1}}\wedge\ldots\wedge a_{r_{L}}\right)\label{eq:freq1}\\
p\left(X_{i_{\gamma_{1}}}^{r_{\gamma_{1}}}\wedge\ldots\wedge X_{i_{\gamma_{L-1}}}^{r_{\gamma_{L-1}}}\right) & = & Z_{i_{\gamma_{1}}\ldots i_{\gamma_{L-1}}}^{r_{\gamma_{1}}\ldots r_{\gamma_{L-1}}}p\left(a_{r_{\gamma_{1}}}\wedge\ldots\wedge a_{r_{\gamma_{L-1}}}\right)\label{eq:freq2}
\end{eqnarray}
for all $_{i_{1}\ldots i_{L}}^{r_{1}\ldots r_{L}}\in S$ and $\left\{ \gamma_{1},\ldots\gamma_{L-1}\right\} \subset\left\{ 1,\ldots L\right\} $.
Consequently, from (\ref{eq:L-be_tartozas-1-1}) we have
\begin{eqnarray}
\frac{Z_{i_{1}\ldots i_{L}}^{r_{1}\ldots r_{L}}}{Z_{i_{\gamma_{1}}\ldots i_{\gamma_{L-1}}}^{r_{\gamma_{1}}\ldots r_{\gamma_{L-1}}}} & \leq & \frac{p\left(a_{r_{\gamma_{1}}}\wedge\ldots\wedge a_{r_{\gamma_{L-1}}}\right)}{p\left(a_{r_{1}}\wedge\ldots\wedge a_{r_{L}}\right)}\label{eq:keresztbe}
\end{eqnarray}
 Since, according to the definition of state, (\ref{eq:freq1})--(\ref{eq:freq2})
hold for all possible relative frequencies $\left\{ p\left(a_{r}\right)\right\} _{1\leq r\leq m}$
and $\left\{ p\left(a_{r_{1}}\wedge\ldots\wedge a_{r_{L}}\right)\right\} _{\left\{ r_{1},\ldots r_{L}\right\} \in\mathfrak{P}}$,
inequality (\ref{eq:keresztbe}) must hold for the minimum value of
the right hand side, which is equal to $1$, due to (\ref{eq:L-be_tartozas-1-2}).
And this is the case for all $_{i_{1}\ldots i_{L}}^{r_{1}\ldots r_{L}}\in S$
and $\left\{ \gamma_{1},\ldots\gamma_{L-1}\right\} \subset\left\{ 1,\ldots L\right\} $.
\end{proof}
It is of course an empirical question what states a system has after
different physical preparations. In what follows, we will answer the
question: what can we say about the ``space'' of theoretically possible
states of a system? Where by ``theoretically possible states'' we
mean all vectors constructed by means of definition (\ref{eq:state})--(\ref{eq:conditional11})
from arbitrary relative frequencies satisfying\textbf{ }(\ref{eq:nem-b})--(\ref{eq:1d})
and (\ref{eq:pi_is_such-1})--(\ref{eq:pi_is_such-2}). Here we should
note that the general probabilistic description includes the possibility---again,
as an eventual empirical fact observed from the frequencies (\ref{eq:0a})---that
the system is deterministic, meaning that $\vec{Z}\in\left\{ 0,1\right\} ^{M+\left|S\right|}$,
or at least it behaves deterministically in some states.

We will show that the possible state vectors constitute a closed convex
polytope in $\mathbb{R}^{M+\left|S\right|}$, which we will denote
by $\varphi\left(M,S\right)$. First we will prove an important lemma.
\begin{lem}
\label{Lemma:convex}If $\vec{Z}_{1}$ and $\vec{Z}_{2}$ are possible
states then their convex linear combination $\vec{Z}_{3}=\lambda_{1}\vec{Z}_{1}+\lambda_{2}\vec{Z}_{2}$
($\lambda_{1},\lambda_{2}\geq0\,\,\,\,\lambda_{1}+\lambda_{2}=1$)
also constitutes a possible state.
\end{lem}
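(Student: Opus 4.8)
The plan is to realize the combined vector $\vec{Z}_3$ directly by a convex mixture of frequency functions. Since $\vec{Z}_1$ and $\vec{Z}_2$ are possible states, there exist relative frequency functions $p_1,p_2$ on $\mathcal{A}$, each satisfying \textbf{(E1)}--\textbf{(E3)}, whose state vectors are $\vec{Z}_1$ and $\vec{Z}_2$ respectively. The natural candidate realizing $\vec{Z}_3$ is $p_3=\lambda_1 p_1+\lambda_2 p_2$, which is automatically a Kolmogorovian probability function on $\mathcal{A}$, so non-negativity and normalization come for free.

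The crux is that the components of a state are \emph{conditional} frequencies $p(X_i^r\mid a_r)$, and conditional frequencies do not mix linearly unless the conditioning events carry equal weight in $p_1$ and $p_2$. First I would therefore exploit \textbf{(E3)} to align the measurement statistics: by \textbf{(E3)} the state $\vec{Z}_1$ is independent of the actual nonzero values $\{p_1(a_r)\}$ and $\{p_1(a_{r_1}\wedge\ldots\wedge a_{r_L})\}$, and likewise for $\vec{Z}_2$. Hence I fix one common admissible family of measurement frequencies $\{q(a_r)\}$, $\{q(a_{r_1}\wedge\ldots\wedge a_{r_L})\}$ (nonzero exactly on $\mathfrak{P}$ and vanishing on $\mathfrak{I}$, as \textbf{(E1)} and (\ref{eq:nem-a}) require), and, using Theorem~\ref{State--together}, take $p_1,p_2$ to be the unique frequency functions having these measurement frequencies and states $\vec{Z}_1,\vec{Z}_2$.

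With common measurement frequencies in place the verification is routine. The purely linear conditions (\ref{eq:1a})--(\ref{eq:1b}) and the vanishing clause of (\ref{eq:pi_is_such-2}) on $S_{\mathfrak{I}}$ survive the non-negative combination $p_3=\lambda_1 p_1+\lambda_2 p_2$ by linearity. For the normalizations (\ref{eq:osszeg=00003D1}) and (\ref{eq:1d}) one notes that (\ref{eq:1a}) forces $\sum_k p_j(X_k^r)=p_j(a_r)$, so the numerators and the shared denominator in $p_3(X_k^r\mid a_r)$ again sum to $1$. Finally, writing $Z_{j,i}^r$ for the components of $\vec{Z}_j$ and using (\ref{eq:pi_is_such-1}) together with the fact that (\ref{eq:1a}) makes each $X_i^r$ frequency-contained in $a_r$, one computes
\[
p_3\!\left(X_i^r\mid a_r\right)=\frac{\lambda_1 p_1(X_i^r)+\lambda_2 p_2(X_i^r)}{\lambda_1 p_1(a_r)+\lambda_2 p_2(a_r)}=\frac{\lambda_1 Z_{1,i}^r\,q(a_r)+\lambda_2 Z_{2,i}^r\,q(a_r)}{q(a_r)}=\lambda_1 Z_{1,i}^r+\lambda_2 Z_{2,i}^r,
\]
and the analogous computation for the conjunction components, so that the state vector of $p_3$ is exactly $\vec{Z}_3=\lambda_1\vec{Z}_1+\lambda_2\vec{Z}_2$.

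The main obstacle I anticipate is precisely this conditioning subtlety: a careless $p_3=\lambda_1 p_1+\lambda_2 p_2$ with \emph{mismatched} measurement frequencies would deliver a weighted average of $\vec{Z}_1$ and $\vec{Z}_2$ with weights $\lambda_j p_j(a_r)$ rather than $\lambda_j$, and would not even be well defined as a single state in the sense of \textbf{(E3)}. To close this, I would rerun the construction with any other common admissible choice $q'$ (available by \textbf{(E3)} for both $\vec{Z}_1$ and $\vec{Z}_2$) and appeal to the uniqueness in Theorem~\ref{State--together}: the resulting $p_3$ is a legitimate frequency function for every admissible measurement frequency, whence $\vec{Z}_3$ itself satisfies \textbf{(E3)} and is a genuine possible state.
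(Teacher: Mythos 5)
Your proposal is correct and follows essentially the same route as the paper's own proof: both exploit \textbf{(E3)} to equalize the measurement frequencies of the two preparations, form the convex combination $p_{3}=\lambda_{1}p_{1}+\lambda_{2}p_{2}$, observe that the conditional frequencies then combine linearly, and conclude that $\vec{Z}_{3}$ qualifies as a state because the common measurement frequencies were arbitrary. Your explicit appeal to Theorem~\ref{State--together} and the closing remark about rerunning the construction for every admissible choice of measurement frequencies merely make explicit what the paper leaves implicit.
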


\begin{proof}
According to the definition of state, the observed relative frequencies
of the measurement outcomes in the two states are
\begin{alignat}{1}
p_{1}\left(X_{i}^{r}\right) & =Z_{1}{}_{i}^{r}p_{1}\left(a_{r}\right)\label{eq:pi_is_such-1-1}\\
p_{1}\left(X_{i_{1}}^{r_{1}}\wedge\ldots\wedge X_{i_{L}}^{r_{L}}\right) & =Z_{1}{}_{i_{1}\ldots i_{L}}^{r_{1}\ldots r_{L}}p_{1}\left(a_{r_{1}}\wedge\ldots\wedge a_{r_{L}}\right)\label{eq:pi_is_such-2-1}
\end{alignat}
and 
\begin{alignat}{1}
p_{2}\left(X_{i}^{r}\right) & =Z_{2}{}_{i}^{r}p_{2}\left(a_{r}\right)\label{eq:pi_is_such-1-2}\\
p_{2}\left(X_{i_{1}}^{r_{1}}\wedge\ldots\wedge X_{i_{L}}^{r_{L}}\right) & =Z_{2}{}_{i_{1}\ldots i_{L}}^{r_{1}\ldots r_{L}}p_{2}\left(a_{r_{1}}\wedge\ldots\wedge a_{r_{L}}\right)\label{eq:pi_is_such-2-2}
\end{alignat}
for all $_{i}^{r}\in I^{M}$ and $_{i_{1}\ldots i_{L}}^{r_{1}\ldots r_{L}}\in S$.
Due to \textbf{(E3)}, $p_{1}\left(a_{r}\right)$ and $p_{1}\left(a_{r_{1}}\wedge\ldots\wedge a_{r_{L}}\right)$
as well as $p_{2}\left(a_{r}\right)$ and $p_{2}\left(a_{r_{1}}\wedge\ldots\wedge a_{r_{L}}\right)$
can be arbitrary relative frequencies satisfying (\ref{eq:nem-a})--(\ref{eq:nem-c}).
Therefore, without loss of generality, we can take the case of
\begin{alignat*}{2}
 &  &  & p_{1}\left(a_{r}\right)=p_{2}\left(a_{r}\right)=p_{0}\left(a_{r}\right)\\
 &  &  & p_{1}\left(a_{r_{1}}\wedge\ldots\wedge a_{r_{L}}\right)=p_{2}\left(a_{r_{1}}\wedge\ldots\wedge a_{r_{L}}\right)=p_{0}\left(a_{r_{1}}\wedge\ldots\wedge a_{r_{L}}\right)
\end{alignat*}

Now, consider the convex linear combination $p_{3}=\lambda_{1}p_{1}+\lambda_{2}p_{2}$.
Obviously, $p_{3}$ satisfies (\ref{eq:nem-b})--(\ref{eq:osszeg=00003D1}),
and
\begin{eqnarray*}
p_{3}\left(a_{r}\right) & = & p_{0}\left(a_{r}\right)\\
p_{3}\left(a_{r_{1}}\wedge\ldots\wedge a_{r_{L}}\right) & = & p_{0}\left(a_{r_{1}}\wedge\ldots\wedge a_{r_{L}}\right)
\end{eqnarray*}
Accordingly, we have
\begin{align*}
p_{3}\left(X_{i}^{r}\right) & =\lambda_{1}p_{1}\left(X_{i}^{r}\right)+\lambda_{2}p_{2}\left(X_{i}^{r}\right)=\left(\lambda_{1}Z_{1}{}_{i}^{r}+\lambda_{2}Z_{2}{}_{i}^{r}\right)p_{3}\left(a_{r}\right)\\
p_{3}\left(X_{i_{1}}^{r_{1}}\wedge\ldots\wedge X_{i_{L}}^{r_{L}}\right) & =\lambda_{1}p_{1}\left(X_{i_{1}}^{r_{1}}\wedge\ldots\wedge X_{i_{L}}^{r_{L}}\right)+\lambda_{2}p_{2}\left(X_{i_{1}}^{r_{1}}\wedge\ldots\wedge X_{i_{L}}^{r_{L}}\right)\\
 & =\left(\lambda_{1}Z_{1}{}_{i_{1}\ldots i_{L}}^{r_{1}\ldots r_{L}}+\lambda_{2}Z_{2}{}_{i_{1}\ldots i_{L}}^{r_{1}\ldots r_{L}}\right)p_{3}\left(a_{r_{1}}\wedge\ldots\wedge a_{r_{L}}\right)
\end{align*}
 This means that $\vec{Z}_{3}=\lambda_{1}\vec{Z}_{1}+\lambda_{2}\vec{Z}_{2}$
satisfies condition (\ref{eq:pi_is_such-1})--(\ref{eq:pi_is_such-2}),
as $p_{3}\left(a_{r}\right)$ and $p_{3}\left(a_{r_{1}}\wedge\ldots\wedge a_{r_{L}}\right)$
can be arbitrary frequencies satisfying (\ref{eq:nem-a})--(\ref{eq:nem-c}).
That is, $\vec{Z}_{3}$ complies with the definition of state, meaning
that $\vec{Z}_{3}$ is a possible state of the system.
\end{proof}

\subsection{The State Space -- Polytope View\protect\label{sec:The-State-Space-polytope}}

Now we turn to the question of the ``space'' of possible states.
Let $\vec{e}\,_{i}^{r},\vec{e}\,{}_{i_{1}\ldots i_{L}}^{r_{1}\ldots r_{L}}\in\mathbb{R}^{M+\left|S\right|}$
denote the $_{i}^{r}$-th and $_{i_{1}\ldots i_{L}}^{r_{1}\ldots r_{L}}$-th
coordinate base vector of $\mathbb{R}^{M+\left|S\right|}$, where
$_{i}^{r}\in I^{M}$ and $_{i_{1}\ldots i_{L}}^{r_{1}\ldots r_{L}}\in S$,
and let $\vec{f}=\left(f_{i}^{r},f_{i_{1}\ldots i_{L}}^{r_{1}\ldots r_{L}}\right)\in\mathbb{R}^{M+\left|S\right|}$
denote an arbitrary vector.

The empirical facts \textbf{(E1)}--\textbf{(E3)}, partly through
Lemmas~\ref{Lemma:ineq}--\ref{Lemma:convex}, imply that the possible
state vectors constitute a closed convex polytope $\varphi\left(M,S\right)\subset\mathbb{R}^{M+\left|S\right|}$
defined by the following system of linear inequalities:
\begin{eqnarray}
f_{i}^{r} & \geq & 0\label{eq:elso=000020ineq-1}\\
f_{i}^{r} & \leq & 1\label{eq:f=00003D1}\\
f_{i_{1}\ldots i_{L}}^{r_{1}\ldots r_{L}} & \geq & 0\label{eq:konjunkcio-pozitiv-1}\\
f_{i_{1}\ldots i_{L}}^{r_{1}\ldots r_{L}}-f_{i_{\gamma_{1}}\ldots i_{\gamma_{L-1}}}^{r_{\gamma_{1}}\ldots r_{\gamma_{L-1}}} & \leq & 0\,\,\,\,\,\,\,\,\,\,\left\{ \gamma_{1},\ldots\gamma_{L-1}\right\} \subset\left\{ 1,\ldots L\right\} \label{eq:utolso-elotti-1}\\
\sum_{\begin{array}{c}
k\\
\left(_{k}^{r}\in I^{M}\right)
\end{array}}f_{k}^{r} & = & 1\label{eq:elso+-1}\\
\sum_{\begin{array}{c}
k_{1},k_{2}\ldots k_{L}\\
\left(_{k_{1}\ldots k_{L}}^{r_{1}\ldots r_{L}}\in S\right)
\end{array}}f_{k_{1}\ldots k_{L}}^{r_{1}\ldots r_{L}} & = & 1\label{eq:masodik+}\\
f_{i'_{1}\ldots i'_{L}}^{r'_{1}\ldots r'_{L}} & = & 0\,\,\,\,\,\,\,\,\,\,{}_{i'_{1}\ldots i'_{L}}^{r'_{1}\ldots r'_{L}}\in S_{0}\label{eq:utolso+-1}
\end{eqnarray}
for all $_{i}^{r}\in I^{M}$, $_{i_{1}\ldots i_{L}}^{r_{1}\ldots r_{L}}\in S$,
and 
\begin{eqnarray*}
S_{0} & = & \left\{ _{i_{1}\ldots i_{L}}^{r_{1}\ldots r_{L}}\in S\Bigl|r_{\gamma_{1}}=r_{\gamma_{2}},i_{\gamma_{1}}\neq i_{\gamma_{2}},\left\{ \gamma_{1},\gamma_{2}\right\} \subset\left\{ 1,\ldots L\right\} \right\} 
\end{eqnarray*}
Denote by $l\left(M,S\right)\subset\mathbb{R}^{M+\left|S\right|}$
the closed convex polytope defined by the first group of inequalities
(\ref{eq:elso=000020ineq-1})--(\ref{eq:utolso-elotti-1}). As is
well known (Pitowsky~1989, pp.~51 and 65), the vertices of $l\left(M,S\right)$
are all the vectors $\vec{v}\in\mathbb{R}^{M+\left|S\right|}$ such
that
\begin{itemize}
\item [(a)]$v_{i}^{r},v_{i_{1}\ldots i_{L}}^{r_{1}\ldots r_{L}}\in\left\{ 0,1\right\} $
for all $_{i}^{r}\in I^{M}$ and $_{i_{1}\ldots i_{L}}^{r_{1}\ldots r_{L}}\in S$.
\item [(b)]$v_{i_{1}\ldots i_{L}}^{r_{1}\ldots r_{L}}\leq\underset{^{_{\left\{ \gamma_{1},\gamma_{2},\ldots\gamma_{L-1}\right\} \subset\,\left\{ 1,2,\ldots L\right\} }}}{\prod}v_{i_{\gamma_{1}}\ldots i_{\gamma_{L-1}}}^{r_{\gamma_{1}}\ldots r_{\gamma_{L-1}}}$
for all $_{i_{1}\ldots i_{L}}^{r_{1}\ldots r_{L}}\in S$.
\end{itemize}
A vertex is called classical if the equality holds everywhere in (b),
and non-classical otherwise.

Obviously, $\varphi\left(M,S\right)\subseteq l\left(M,S\right)$.
What can be said about the vertices of $\varphi\left(M,S\right)$?
\begin{lem}
\label{The-vertices-of-fi}The vertices of $\varphi\left(M,S\right)$
are all the vectors $\vec{f}\in\varphi\left(M,S\right)$ such that
$f_{i}^{r},f_{i_{1}\ldots i_{L}}^{r_{1}\ldots r_{L}}\in\left\{ 0,1\right\} $
for all $_{i}^{r}\in I^{M}$ and $_{i_{1}\ldots i_{L}}^{r_{1}\ldots r_{L}}\in S$.
\end{lem}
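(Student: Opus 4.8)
My plan is to prove the two implications separately, leaning on the inclusion $\varphi(M,S)\subseteq l(M,S)$ and the already-cited description of the vertices of $l(M,S)$ (Pitowsky's conditions (a)--(b)).

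For the easy implication, I would first record that $\varphi(M,S)$ sits inside the unit cube $[0,1]^{M+|S|}$: the coordinates are nonnegative by (\ref{eq:elso=000020ineq-1}) and (\ref{eq:konjunkcio-pozitiv-1}), and chaining the monotonicity inequalities (\ref{eq:utolso-elotti-1}) down through sub-tuples to a single index gives $f_{i_1\ldots i_L}^{r_1\ldots r_L}\le f_{i_1}^{r_1}\le 1$ via (\ref{eq:f=00003D1}). Consequently any $\vec f\in\varphi(M,S)$ all of whose coordinates equal $0$ or $1$ is a vertex of the enclosing cube, hence an extreme point of the smaller polytope $\varphi(M,S)$ that contains it, i.e. a vertex of $\varphi(M,S)$.

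For the converse, let $\vec f$ be a vertex of $\varphi(M,S)$. Since $\vec f\in l(M,S)$ and $l(M,S)$ is the convex hull of its $0/1$ vertices $\vec u_k$, I would write $\vec f=\sum_k\lambda_k\vec u_k$ with all $\lambda_k>0$, taking the $\vec u_k$ to be the vertices of the minimal face of $l(M,S)$ containing $\vec f$. The target is to show that every such $\vec u_k$ already lies in $\varphi(M,S)$, i.e. satisfies the extra equality constraints (\ref{eq:elso+-1})--(\ref{eq:utolso+-1}) that cut $\varphi$ out of $l$. The vanishing constraint (\ref{eq:utolso+-1}) transfers at once: if $f_\beta=0$ for $\beta\in S_0$ while every $(u_k)_\beta\ge0$ and $\lambda_k>0$, then each $(u_k)_\beta=0$. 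Once all $\vec u_k\in\varphi(M,S)$, the vertex $\vec f$ is a convex combination of points of $\varphi(M,S)$, and extremality forces $\vec f=\vec u_k$ for every $k$, so $\vec f$ has $0/1$ coordinates, as required.

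I expect the main obstacle to be transferring the \emph{normalization} equalities (\ref{eq:elso+-1}) and (\ref{eq:masodik+}) to the individual $\vec u_k$: one must show $\sum_i(u_k)_i^r=1$ and the corresponding joint sums equal $1$ for \emph{each} $k$, not merely on the $\lambda$-weighted average. This is precisely where the product/monotonicity structure of the $l(M,S)$-vertices (condition (b), together with Lemma~\ref{Lemma:ineq}) and the exclusivity encoded by $S_0$ must do the work: on the minimal face the pattern of coordinates pinned to $0$ and of single coordinates pinned to $1$ is constant, and the argument has to rule out that any $\vec u_k$ registers zero or two outcomes for some measurement while still averaging to the normalized $\vec f$. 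Essentially all the content of the lemma lives in this step; should the convex-combination bookkeeping prove awkward, the natural fallback is the direct extreme-point test---assuming a fractional coordinate and constructing a nonzero direction $\vec d$ with $\vec f\pm\vec d\in\varphi(M,S)$, using the homogeneous forms of (\ref{eq:elso+-1})--(\ref{eq:utolso+-1}) together with the non-tight inequalities at $\vec f$---which yields the same conclusion. Note that the convexity established in Lemma~\ref{Lemma:convex} already certifies that $\varphi(M,S)$ is the relevant polytope, so no separate check that $\vec f$ is a genuine state is needed.
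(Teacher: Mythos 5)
Your forward direction (every $0/1$ point of $\varphi(M,S)$ is a vertex) is fine and matches the paper's one-line argument. The converse, however, contains a genuine gap, and you have in fact located it yourself: the step ``every vertex $\vec u_k$ of the minimal face $F$ of $l(M,S)$ containing $\vec f$ already lies in $\varphi(M,S)$'' is never proved, and it cannot be proved by the averaging argument you sketch. The transfer does work for (\ref{eq:utolso+-1}) (nonnegative coordinates averaging to zero must each vanish), but for the normalization equalities (\ref{eq:elso+-1})--(\ref{eq:masodik+}) the coordinate sums $\sum_i (u_k)_i^r$ are not bounded above by $1$ anywhere in the definition of $l(M,S)$, so a weighted average equal to $1$ is compatible with some $\vec u_k$ registering zero outcomes and others two. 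Worse, the intermediate claim is simply false for general points of $\varphi(M,S)$: for two jointly performable two-outcome measurements, the point with $f_i^r=\tfrac12$ and $f_{i_1 i_2}^{1\,2}=\tfrac14$ lies in $\varphi(M,S)$ and satisfies every defining inequality of $l(M,S)$ strictly, so its minimal face is all of $l(M,S)$, whose vertices include the zero vector, which violates (\ref{eq:elso+-1}). The claim can therefore hold only for vertices of $\varphi(M,S)$ --- and only because those turn out to be $0/1$ points, i.e.\ because the lemma is true --- so your route is circular: establishing the intermediate claim requires essentially the full content of the lemma. Your fallback (exhibiting a nonzero direction $\vec d$ with $\vec f\pm\vec d\in\varphi(M,S)$ at any fractional point) is the right idea but is only named, not carried out.

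The paper's proof does exactly the work you defer. It invokes the active-constraint criterion of Appendix~1: $\vec f$ is a vertex of $\varphi(M,S)$ iff the normals of the constraints active at $\vec f$, including the equality normals indexed by $I^{+}$, span $\mathbb{R}^{M+\left|S\right|}$. Assuming some coordinate of $\vec f$ is fractional, it shows by an explicit linear-algebra argument (the matrix $\beta$ whose diagonal entries contain freely adjustable parameters $\tau$, hence can be made invertible) that the contributions of $\vec{\omega}_{5|r}$, $\vec{\omega}_{6|r_{1}\ldots r_{L}}$ and $\vec{\omega}_{7|\cdot}$ can be eliminated, so the active \emph{inequality} normals alone already span $\mathbb{R}^{M+\left|S\right|}$; this forces $\vec f$ to be a vertex of $l(M,S)$ and hence a $0/1$ vector, a contradiction. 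Some argument of this kind --- or an explicit construction of the perturbation direction $\vec d$ --- is what your proposal is missing.
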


\begin{proof}
One direction is trivial: if $\vec{f}\in\varphi\left(M,S\right)$
and $f_{i}^{r},f_{i_{1}\ldots i_{L}}^{r_{1}\ldots r_{L}}\in\left\{ 0,1\right\} $
for all $_{i}^{r}\in I^{M}$ and $_{i_{1}\ldots i_{L}}^{r_{1}\ldots r_{L}}\in S$,
then $\vec{f}$ is a vertex. For, if there exist $\vec{f}',\vec{f}''\in\varphi\left(M,S\right)$
such that $\vec{f}=\lambda\vec{f}'+(1-\lambda)\vec{f}''$ with some
$0<\lambda<1$, then obviously $\vec{f}'=\vec{f}''=\vec{f}$.

The proof of the other direction is quite involved. For a more concise
notation, introduce the following sets of indices:
\begin{eqnarray*}
I & = & \Bigl\{1|_{i}^{r},\,2|{}_{i}^{r},\,3|{}_{i_{1}\ldots i_{L}}^{r_{1}\ldots r_{L}},\,4|_{i_{1}\ldots i_{L}}^{r_{1}\ldots r_{L}}|{}_{i_{\gamma_{1}}\ldots i_{\gamma_{L-1}}}^{r_{\gamma_{1}}\ldots r_{\gamma_{L-1}}},\,5|r,\,6|r_{1}\ldots r_{L},\,7|{}_{i'_{1}\ldots i'_{L}}^{r'_{1}\ldots r'_{L}}\,\Bigl|\,\text{for all }\\
 &  & _{i}^{r}\in I^{M},\,\text{\ensuremath{_{i_{1}\ldots i_{L}}^{r_{1}\ldots r_{L}}}\ensuremath{\ensuremath{\in}S},}{}_{i'_{1}\ldots i'_{L}}^{r'_{1}\ldots r'_{L}}\in S_{0},\text{and }\left\{ \gamma_{1},\ldots\gamma_{L-1}\right\} \subset\left\{ 1,\ldots L\right\} \Bigr\}\\
I^{0} & = & \Bigl\{1|_{i}^{r},\,2|{}_{i}^{r},\,3|{}_{i_{1}\ldots i_{L}}^{r_{1}\ldots r_{L}},\,4|_{i_{1}\ldots i_{L}}^{r_{1}\ldots r_{L}}|{}_{i_{\gamma_{1}}\ldots i_{\gamma_{L-1}}}^{r_{\gamma_{1}}\ldots r_{\gamma_{L-1}}}\,\Bigl|\,\text{for all }_{i}^{r}\in I^{M},{}_{i_{1}\ldots i_{L}}^{r_{1}\ldots r_{L}}\in S,\\
 &  & \text{and }\left\{ \gamma_{1},\ldots\gamma_{L-1}\right\} \subset\left\{ 1,\ldots L\right\} \Bigr\}\\
I^{+} & = & \Bigl\{5|r,\,6|r_{1}\ldots r_{L},\,7|_{i'_{1}\ldots i'_{L}}^{r'_{1}\ldots r'_{L}}\,\left|\,\text{for all }1\leq r\leq m,\,\left\{ r_{1}\ldots r_{L}\right\} \in\mathfrak{P},\right.\\
 &  & \text{and }{}_{i'_{1}\ldots i'_{L}}^{r'_{1}\ldots r'_{L}}\in S_{0}\Bigr\}
\end{eqnarray*}
Obviously, $I=I^{0}\cup I^{+}$ and $I^{0}\cap I^{+}=\emptyset$.
Rewrite (\ref{eq:elso=000020ineq-1})--(\ref{eq:utolso+-1}) in the
following standard form:
\begin{eqnarray}
\left\langle \vec{\omega}_{\mu},\vec{f}\right\rangle -b_{\mu} & \leq & 0\,\,\,\,\,\text{for all }\mu\in I^{0}\label{eq:l(mS)-system-1}\\
\left\langle \vec{\omega}_{\mu},\vec{f}\right\rangle -b_{\mu} & = & 0\,\,\,\,\,\text{for all }\mu\in I^{+}
\end{eqnarray}
with the following $\vec{\omega}_{\mu}\in\mathbb{R}^{M+\left|S\right|}$
and $b_{\mu}\in\mathbb{R}$:
\begin{eqnarray}
\vec{\omega}_{1|_{i}^{r}} & = & (0\ldots0\,\overset{\underset{\smallsmile}{_{i}^{r}}}{-1}\,0\ldots0)\label{eq:omegak-1}\\
b_{1|_{i}^{r}} & = & 0\label{eq:omegak-2}\\
\vec{\omega}_{2|_{i}^{r}} & = & (0\ldots0\,\overset{\underset{\smallsmile}{_{i}^{r}}}{1}\,0\ldots0)\label{eq:omegak-3}\\
b_{2|_{i}^{r}} & = & 1\label{eq:omegak-4}\\
\vec{\omega}_{3|_{i_{1}\ldots i_{L}}^{r_{1}\ldots r_{L}}} & = & (0\ldots0\,\overset{\underset{\smallsmile}{_{i_{1}\ldots i_{L}}^{r_{1}\ldots r_{L}}}}{-1}\,0\ldots0)\label{eq:omegak-5}\\
b_{3|_{i_{1}\ldots i_{L}}^{r_{1}\ldots r_{L}}} & = & 0\label{eq:omegak-6}\\
\vec{\omega}_{4|_{i_{1}\ldots i_{L}}^{r_{1}\ldots r_{L}}|{}_{i_{\gamma_{1}}\ldots i_{\gamma_{L-1}}}^{r_{\gamma_{1}}\ldots r_{\gamma_{L-1}}}} & = & (0\ldots0\,\overset{\underset{\smallsmile}{_{i_{\gamma_{1}}\ldots i_{\gamma_{L-1}}}^{r_{\gamma_{1}}\ldots r_{\gamma_{L-1}}}}}{-1}\,0\ldots0\,\overset{\underset{\smallsmile}{_{i_{1}\ldots i_{L}}^{r_{1}\ldots r_{L}}}}{1}\,0\ldots0)\label{eq:omegak-7}\\
b_{4|_{i_{1}\ldots i_{L}}^{r_{1}\ldots r_{L}}|{}_{i_{\gamma_{1}}\ldots i_{\gamma_{L-1}}}^{r_{\gamma_{1}}\ldots r_{\gamma_{L-1}}}} & = & 0\label{eq:omegak-8}\\
\vec{\omega}_{5|r} & = & (0\ldots0\,\overset{\underset{\smallsmile}{_{1}^{r}}}{1}\,\overset{\underset{\smallsmile}{_{2}^{r}}}{1}\,\overset{\underset{\smallsmile}{_{3}^{r}}}{1}\ldots\overset{\underset{\smallsmile}{_{n_{r}}^{r}}}{1}\,0\ldots0)\label{eq:omegak-9}\\
b_{5|r} & = & 1\label{eq:omegak-10}\\
\vec{\omega}_{6|r_{1}\ldots r_{L}} & = & (0\ldots0\,\overset{\underset{\smallsmile}{_{\,\,1\,\,\,1\,\,\ldots\,1}^{r_{1}r_{2}\ldots r_{L}}}}{1}\,\,\overset{\underset{\smallsmile}{_{\,\,2\,\,\,1\,\,\ldots\,1}^{r_{1}r_{2}\ldots r_{L}}}}{1}\ldots\overset{\underset{\smallsmile}{_{n_{r_{1}}\,n_{r_{2}}\ldots n_{r_{L}}}^{\,r_{1}\,\,\,\,r_{2}\,\,\,\ldots r_{L}}}}{1}\,0\ldots0)\label{eq:omegak-13}\\
b_{6|r_{1}\ldots r_{L}} & = & 1\label{eq:omegak-14}\\
\vec{\omega}_{7|_{i'_{1}\ldots i'_{L}}^{r'_{1}\ldots r'_{L}}} & = & (0\ldots0\,\overset{\underset{\smallsmile}{_{i'_{1}\ldots i'_{L}}^{r'_{1}\ldots r'_{L}}}}{1}\,0\ldots0)\label{eq:omegak-11}\\
b_{7|_{i'_{1}\ldots i'_{L}}^{r'_{1}\ldots r'_{L}}} & = & 0\label{eq:omegak-12}
\end{eqnarray}
where $_{i}^{r}\in I^{M}$, $_{i_{1}\ldots i_{L}}^{r_{1}\ldots r_{L}}\in S$,
$\left\{ \gamma_{1},\ldots\gamma_{L-1}\right\} \subset\left\{ 1,\ldots L\right\} $,
and $_{i'_{1}\ldots i'_{L}}^{r'_{1}\ldots r'_{L}}\in S_{0}$. Notice
that $l\left(M,S\right)$ is defined by (\ref{eq:l(mS)-system-1}).

For an arbitrary $\vec{f}\in l\left(M,S\right)$ we define the following
sets:
\begin{eqnarray*}
I_{\vec{f}} & = & \left\{ \mu\in I\left|\,\left\langle \vec{\omega}_{\mu},\vec{f}\right\rangle -b_{\mu}=0\right.\right\} \\
I_{\vec{f}}^{0} & = & \left\{ \mu\in I^{0}\left|\,\left\langle \vec{\omega}_{\mu},\vec{f}\right\rangle -b_{\mu}=0\right.\right\} 
\end{eqnarray*}
Notice that if $\vec{f}\in\varphi\left(M,S\right)$, then $I_{\vec{f}}=I_{\vec{f}}^{0}\cup I^{+}$,
due to the fact that (\ref{eq:elso+-1})--(\ref{eq:utolso+-1}) can
be satisfied only with equality.

$I_{\vec{f}}$ constitutes the so called `active index set' for $\vec{f}\in\varphi\left(M,S\right)$;
and according to a known theorem (see Appendix~1){\small{}  }$\vec{f}$
is a vertex of $\varphi\left(M,S\right)$ if and only if 
\begin{equation}
\mathrm{span}\left\{ \vec{\omega}_{\mu}\right\} _{\mu\in I_{\vec{f}}}=\mathbb{R}^{M+\left|S\right|}\label{eq:span-3}
\end{equation}
Similarly, a vector $\vec{f}\in l\left(M,S\right)$ is a vertex of
$l\left(M,S\right)$ if and only if 
\begin{equation}
\mathrm{span}\left\{ \vec{\omega}_{\mu}\right\} _{\mu\in I_{\vec{f}}^{0}}=\mathbb{R}^{M+\left|S\right|}\label{eq:span-1-1}
\end{equation}
For all $\vec{f}\in l\left(M,S\right)$ define 
\begin{eqnarray*}
J_{\vec{f}} & = & \left\{ _{i}^{r}\left|_{i}^{r}\in I^{M}\text{ and }0<f_{i}^{r}<1\right.\right\} \\
J'_{\vec{f}} & = & \left\{ _{i_{1}\ldots i_{L}}^{r_{1}\ldots r_{L}}\left|_{i_{1}\ldots i_{L}}^{r_{1}\ldots r_{L}}\in S\text{ and }0<f{}_{i_{1}\ldots i_{L}}^{r_{1}\ldots r_{L}}<1\right.\right\} 
\end{eqnarray*}
Notice that for all $_{i}^{r}\in I^{M}$ and $_{i_{1}\ldots i_{L}}^{r_{1}\ldots r_{L}}\in S$,
\begin{eqnarray}
\vec{e}\,_{i}^{r}\in\mathrm{span}\left\{ \vec{\omega}_{\mu}\right\} _{\mu\in I_{\vec{f}}^{0}} & \text{if} & f_{i}^{r}\in\{0,1\}\label{eq:0-bazis-1}\\
\vec{e}\,{}_{i_{1}\ldots i_{L}}^{r_{1}\ldots r_{L}}\in\mathrm{span}\left\{ \vec{\omega}_{\mu}\right\} _{\mu\in I_{\vec{f}}^{0}} & \text{if} & f_{i_{1}\ldots i_{L}}^{r_{1}\ldots r_{L}}\in\{0,1\}\label{eq:0-bazis-2}
\end{eqnarray}
since the corresponding inequalities (\ref{eq:elso=000020ineq-1})--(\ref{eq:utolso-elotti-1})
must hold with equality. The only case that requires a bit of reflection
is when $f_{i_{1}\ldots i_{L}}^{r_{1}\ldots r_{L}}=1$. For example,
if $f_{i_{1}i_{2}}^{r_{1}r_{2}}=1$ then (\ref{eq:utolso-elotti-1})
is satisfied with equality, so that $f_{i_{1}}^{r_{1}}=1$, therefore
(\ref{eq:f=00003D1}) is also satisfied with equality. Consequently,
\begin{eqnarray*}
\vec{\omega}_{2|_{i_{1}}^{r_{1}}} & \in & \mathrm{span}\left\{ \vec{\omega}_{\mu}\right\} _{\mu\in I_{\vec{f}}^{0}}\\
\vec{\omega}_{4|_{i_{1}i_{2}}^{r_{1}r_{2}}|{}_{i_{1}}^{r_{1}}} & \in & \mathrm{span}\left\{ \vec{\omega}_{\mu}\right\} _{\mu\in I_{\vec{f}}^{0}}
\end{eqnarray*}
At the same time, as it can be seen from (\ref{eq:omegak-3}) and
(\ref{eq:omegak-7}),
\[
\vec{e}\,_{i_{1}i_{2}}^{r_{1}r_{2}}=\vec{\omega}_{2|_{i_{1}}^{r_{1}}}+\vec{\omega}_{4|_{i_{1}i_{2}}^{r_{1}r_{2}}|{}_{i_{1}}^{r_{1}}}\in\mathrm{span}\left\{ \vec{\omega}_{\mu}\right\} _{\mu\in I_{\vec{f}}^{0}}
\]
This can be recursively continued for the triple and higher conjunction
indices.

Assume now that $\vec{f}\in\varphi\left(M,S\right)$ is such that
$J_{\vec{f}}\cup J'_{\vec{f}}\neq\emptyset$, and at the same time
it is a vertex of $\varphi\left(M,S\right)$, that is, (\ref{eq:span-3})
is satisfied. We are going to show that this leads to contradiction.

Due to (\ref{eq:0-bazis-1})--(\ref{eq:0-bazis-2}), the assumption
that $\vec{f}$ is a vertex implies that all base vectors of $\mathbb{R}^{M+\left|S\right|}$
must belong to $\mathrm{span}\left\{ \vec{\omega}_{\mu}\right\} _{\mu\in I_{\vec{f}}^{0}}$
save for some $\vec{e}\,_{i}^{r}$'s with $_{i}^{r}\in J_{\vec{f}}$
and/or some $\vec{e}\,{}_{i_{1}\ldots i_{L}}^{r_{1}\ldots r_{L}}$'s
with $_{i_{1}\ldots i_{L}}^{r_{1}\ldots r_{L}}\in J'_{\vec{f}}$.
On the other hand, $\vec{f}$ being a vertex implies that (\ref{eq:span-3})
holds, therefore
\begin{eqnarray*}
\vec{e}\,_{i}^{r} & \in & \mathrm{span}\left\{ \vec{\omega}_{\mu}\right\} _{\mu\in I_{\vec{f}}}\,\,\,\,\,\,\,\,\,\text{for all }{}_{i}^{r}\in J_{\vec{f}}\\
\vec{e}\,{}_{i_{1}\ldots i_{L}}^{r_{1}\ldots r_{L}} & \in & \mathrm{span}\left\{ \vec{\omega}_{\mu}\right\} _{\mu\in I_{\vec{f}}}\,\,\,\,\,\,\,\,\,\text{for all }{}_{i_{1}\ldots i_{L}}^{r_{1}\ldots r_{L}}\in J'_{\vec{f}}
\end{eqnarray*}
Taking into account that $I_{\vec{f}}=I_{\vec{f}}^{0}\cup I^{+}$,
it means that for all $_{i}^{r}\in J_{\vec{f}}$ and arbitrary $\tau_{i}^{r}\neq0$
there exist vectors 
\begin{eqnarray}
_{i}^{r}\vec{v} & \in & \mathrm{span}\left\{ \vec{\omega}_{\mu}\right\} _{\mu\in I_{\vec{f}}^{0}}\label{eq:v-span}
\end{eqnarray}
such that,
\begin{eqnarray}
\tau_{i}^{r}\vec{e}\,_{i}^{r} & = & _{i}^{r}\vec{v}+\sum_{s=1}^{m}{}_{i}^{r}\kappa_{s}\vec{\omega}_{5|s}+\sum_{\left\{ r_{1},\ldots r_{L'}\right\} \in\mathfrak{P}}{}_{i}^{r}\kappa'_{r_{1}\ldots r_{L'}}\vec{\omega}_{6|r_{1}\ldots r_{L'}}\nonumber \\
 &  & +\sum_{_{i'_{1}\ldots i'_{L}}^{r'_{1}\ldots r'_{L}}\in S_{0}}{}_{i}^{r}\lambda_{_{i'_{1}\ldots i'_{L}}^{r'_{1}\ldots r'_{L}}}\vec{\omega}_{7|_{i'_{1}\ldots i'_{L}}^{r'_{1}\ldots r'_{L}}}\label{eq:dekompozicio}
\end{eqnarray}
with some real numbers $_{i}^{r}\kappa_{s}$, $_{i}^{r}\kappa'_{r_{1}\ldots r_{L'}}$,
and $_{i}^{r}\lambda_{_{i'_{1}\ldots i'_{L}}^{r'_{1}\ldots r'_{L}}}$.
From the definitions of $\vec{\omega}_{5|r}$, $\vec{\omega}_{6|r_{1}\ldots r_{L}}$,
and $\vec{\omega}_{7|_{i'_{1}\ldots i'_{L}}^{r'_{1}\ldots r'_{L}}}$
in (\ref{eq:omegak-1})--(\ref{eq:omegak-12}) we can write:
\begin{align}
\tau_{i}^{r}\vec{e}\,_{i}^{r} & =_{i}^{r}\vec{v}+\sum_{_{j}^{s}\in I^{M}}{}_{i}^{r}\kappa_{s}\vec{e}\,_{j}^{s}+\sum_{_{j_{1}\ldots j_{L'}}^{s_{1}\ldots s_{L'}}\in S}{}_{i}^{r}\kappa'_{s_{1}\ldots s_{L'}}\vec{e}\,{}_{j_{1}\ldots j_{L'}}^{s_{1}\ldots s_{L'}}\nonumber \\
 & +\sum_{_{i'_{1}\ldots i'_{L}}^{r'_{1}\ldots r'_{L}}\in S_{0}}{}_{i}^{r}\lambda_{_{i'_{1}\ldots i'_{L}}^{r'_{1}\ldots r'_{L}}}\vec{e}\,{}_{i'_{1}\ldots i'_{L}}^{r'_{1}\ldots r'_{L}}\label{eq:dekompozicio2}
\end{align}
\emph{Mutatis mutandis}, we have the same equation for $\vec{e}\,{}_{i_{1}\ldots i_{L}}^{r_{1}\ldots r_{L}}$
for all $_{i_{1}\ldots i_{L}}^{r_{1}\ldots r_{L}}\in J'_{\vec{f}}$
with arbitrary $\tau{}_{i_{1}\ldots i_{L}}^{r_{1}\ldots r_{L}}\neq0$
and with some numbers $_{i_{1}\ldots i_{L}}^{r_{1}\ldots r_{L}}\kappa_{s}$,
$_{i_{1}\ldots i_{L}}^{r_{1}\ldots r_{L}}\kappa'_{r_{1}\ldots r_{L'}}$,
and $_{i_{1}\ldots i_{L}}^{r_{1}\ldots r_{L}}\lambda_{_{i'_{1}\ldots i'_{L}}^{r'_{1}\ldots r'_{L}}}$:
~
\begin{align}
\tau{}_{i_{1}\ldots i_{L}}^{r_{1}\ldots r_{L}}\vec{e}\,{}_{i_{1}\ldots i_{L}}^{r_{1}\ldots r_{L}} & =\,{}_{i_{1}\ldots i_{L}}^{r_{1}\ldots r_{L}}\vec{v}+\sum_{_{j}^{s}\in I^{M}}{}_{i_{1}\ldots i_{L}}^{r_{1}\ldots r_{L}}\kappa_{s}\vec{e}\,_{j}^{s}+\sum_{_{j_{1}\ldots j_{L'}}^{s_{1}\ldots s_{L'}}\in S}{}_{i_{1}\ldots i_{L}}^{r_{1}\ldots r_{L}}\kappa'_{s_{1}\ldots s_{L'}}\vec{e}\,{}_{j_{1}\ldots j_{L'}}^{s_{1}\ldots s_{L'}}\nonumber \\
 & +\sum_{_{i'_{1}\ldots i'_{L}}^{r'_{1}\ldots r'_{L}}\in S_{0}}{}_{i_{1}\ldots i_{L}}^{r_{1}\ldots r_{L}}\lambda_{_{i'_{1}\ldots i'_{L}}^{r'_{1}\ldots r'_{L}}}\vec{e}\,{}_{i'_{1}\ldots i'_{L}}^{r'_{1}\ldots r'_{L}}\label{eq:dekompozicio2-1}
\end{align}
With some rearrangement, from (\ref{eq:dekompozicio2}) we have
\begin{eqnarray}
 &  & \sum_{\begin{array}{c}
_{j}^{s}\in J_{\vec{f}}\\
_{j}^{s}\neq_{i}^{r}
\end{array}}{}_{i}^{r}\kappa_{s}\vec{e}\,_{j}^{s}+\left(_{i}^{r}\kappa_{r}-\tau_{i}^{r}\right)\vec{e}\,_{i}^{r}+\sum_{_{j_{1}\ldots j_{L'}}^{s_{1}\ldots s_{L'}}\in J'_{\vec{f}}}{}_{i}^{r}\kappa'_{s_{1}\ldots s_{L'}}\vec{e}\,{}_{j_{1}\ldots j_{L'}}^{s_{1}\ldots s_{L'}}\nonumber \\
 &  & \,\,\,\,\,\,\,\,=-\sum_{_{i'_{1}\ldots i'_{L}}^{r'_{1}\ldots r'_{L}}\in S_{0}}{}_{i}^{r}\lambda_{_{i'_{1}\ldots i'_{L}}^{r'_{1}\ldots r'_{L}}}\vec{e}\,{}_{i'_{1}\ldots i'_{L}}^{r'_{1}\ldots r'_{L}}-\sum_{\begin{array}{c}
_{j_{1}\ldots j_{L'}}^{s_{1}\ldots s_{L'}}\in S\\
_{j_{1}\ldots j_{L'}}^{s_{1}\ldots s_{L'}}\notin J'_{\vec{f}}
\end{array}}{}_{i}^{r}\kappa'_{s_{1}\ldots s_{L'}}\vec{e}\,{}_{j_{1}\ldots j_{L'}}^{s_{1}\ldots s_{L'}}\nonumber \\
 &  & \,\,\,\,\,\,\,\,\,\,\,\,\,\,\,\,\,\,\,\,\,\,\,\,\,\,\,\,-\sum_{\begin{array}{c}
_{j}^{s}\in I^{M}\\
_{j}^{s}\notin J_{\vec{f}}
\end{array}}{}_{i}^{r}\kappa_{s}\vec{e}\,_{j}^{s}-{}_{i}^{r}\vec{v}\label{eq:az=000020egyenlet}
\end{eqnarray}
for all $_{i}^{r}\in J_{\vec{f}}$.

Similarly, from (\ref{eq:dekompozicio2-1}) we have 
\begin{eqnarray}
 &  & \sum_{_{j}^{s}\in J_{\vec{f}}}{}_{i_{1}\ldots i_{L}}^{r_{1}\ldots r_{L}}\kappa_{s}\vec{e}\,_{j}^{s}\,\,+\sum_{\begin{array}{c}
_{j_{1}\ldots j_{L'}}^{s_{1}\ldots s_{L'}}\in J'_{\vec{f}}\\
_{j_{1}\ldots j_{L'}}^{s_{1}\ldots s_{L'}}\neq{}_{i_{1}\ldots i_{L}}^{r_{1}\ldots r_{L}}
\end{array}}{}_{i_{1}\ldots i_{L}}^{r_{1}\ldots r_{L}}\kappa'_{s_{1}\ldots s_{L'}}\vec{e}\,{}_{j_{1}\ldots j_{L'}}^{s_{1}\ldots s_{L'}}\nonumber \\
 &  & \,\,\,\,\,\,\,\,\,\,\,\,\,\,\,\,\,\,\,\,\,\,\,\,\,\,\,\,\,\,\,\,\,\,\,\,\,\,\,\,\,\,\,\,\,\,\,\,\,\,\,\,\,\,\,\,\,\,+\left(_{i_{1}\ldots i_{L}}^{r_{1}\ldots r_{L}}\kappa'_{r_{1}\ldots r_{L'}}-\tau{}_{i_{1}\ldots i_{L}}^{r_{1}\ldots r_{L}}\right)\vec{e}\,{}_{i_{1}\ldots i_{L}}^{r_{1}\ldots r_{L}}\nonumber \\
 &  & \,\,\,\,\,\,\,\,=-\sum_{_{i'_{1}\ldots i'_{L}}^{r'_{1}\ldots r'_{L}}\in S_{0}}{}_{i_{1}\ldots i_{L}}^{r_{1}\ldots r_{L}}\lambda_{_{i'_{1}\ldots i'_{L}}^{r'_{1}\ldots r'_{L}}}\vec{e}\,{}_{i'_{1}\ldots i'_{L}}^{r'_{1}\ldots r'_{L}}\nonumber \\
 &  & \,\,\,\,\,\,\,\,\,\,\,\,\,\,\,\,\,\,\,\,\,\,\,\,-\sum_{\begin{array}{c}
_{j_{1}\ldots j_{L'}}^{s_{1}\ldots s_{L'}}\in S\\
_{j_{1}\ldots j_{L'}}^{s_{1}\ldots s_{L'}}\notin J'_{\vec{f}}
\end{array}}{}_{i_{1}\ldots i_{L}}^{r_{1}\ldots r_{L}}\kappa'_{s_{1}\ldots s_{L'}}\vec{e}\,{}_{j_{1}\ldots j_{L'}}^{s_{1}\ldots s_{L'}}\nonumber \\
 &  & \,\,\,\,\,\,\,\,\,\,\,\,\,\;\,\,\,\,\,\,\,\,\,\,\,\,\,\,\,\,\,\,\,\,\,-\sum_{\begin{array}{c}
_{j}^{s}\in I^{M}\\
_{j}^{s}\notin J_{\vec{f}}
\end{array}}{}_{i_{1}\ldots i_{L}}^{r_{1}\ldots r_{L}}\kappa_{s}\vec{e}\,_{j}^{s}-{}_{i_{1}\ldots i_{L}}^{r_{1}\ldots r_{L}}\vec{v}\label{eq:az=000020egyenlet-2}
\end{eqnarray}
for all $_{i_{1}\ldots i_{L}}^{r_{1}\ldots r_{L}}\in J'_{\vec{f}}$.

Denote the right hand side of (\ref{eq:az=000020egyenlet}) by $\vec{B}_{_{i}^{r}}$
and the right hand side of (\ref{eq:az=000020egyenlet-2}) by $\vec{B}_{_{i_{1}\ldots i_{L}}^{r_{1}\ldots r_{L}}}$.
Notice that the vectors $\vec{B}_{_{i}^{r}}$ and $\vec{B}_{_{i_{1}\ldots i_{L}}^{r_{1}\ldots r_{L}}}$
are contained in $\mathrm{span}\left\{ \vec{\omega}_{\mu}\right\} _{\mu\in I_{\vec{f}}^{0}}$,
due to (\ref{eq:v-span}), and (\ref{eq:0-bazis-1})--(\ref{eq:0-bazis-2}).
So, in (\ref{eq:az=000020egyenlet})--(\ref{eq:az=000020egyenlet-2}),
together, we have a system of linear equations with vector-variables
$\left\{ \vec{e}\,_{j}^{s}\right\} _{_{j}^{s}\in J_{\vec{f}}}$ and
$\left\{ \vec{e}\,{}_{j_{1}\ldots j_{L'}}^{s_{1}\ldots s_{L'}}\right\} _{_{j_{1}\ldots j_{L'}}^{s_{1}\ldots s_{L'}}\in J'_{\vec{f}}}$,
which can be written in the following form:
\begin{eqnarray}
\sum_{_{j}^{s}\in J_{\vec{f}},\,{}_{j_{1}\ldots j_{L'}}^{s_{1}\ldots s_{L'}}\in J'_{\vec{f}}}\beta_{\Bigl({}_{i}^{r},\,{}_{i_{1}\ldots i_{L}}^{r_{1}\ldots r_{L}}\Bigr)\Bigl({}_{j}^{s},\,{}_{j_{1}\ldots j_{L'}}^{s_{1}\ldots s_{L'}}\Bigr)}\left(\vec{e}\,_{j}^{s},\,\vec{e}\,{}_{j_{1}\ldots j_{L'}}^{s_{1}\ldots s_{L'}}\right)=\left(\vec{B}_{_{i}^{r}},\vec{B}_{_{i_{1}\ldots i_{L}}^{r_{1}\ldots r_{L}}}\right)\label{eq:az=000020egyenlet-1}
\end{eqnarray}
where $\beta_{\Bigl({}_{i}^{r},\,{}_{i_{1}\ldots i_{L}}^{r_{1}\ldots r_{L}}\Bigr)\Bigl({}_{j}^{s},\,{}_{j_{1}\ldots j_{L'}}^{s_{1}\ldots s_{L'}}\Bigr)}$
is a $\left(|J_{\vec{f}}|+|J'_{\vec{f}}|\right)\times\left(|J_{\vec{f}}|+|J'_{\vec{f}}|\right)$
matrix with diagonal elements 
\begin{align*}
\beta_{_{i}^{r}{}_{i}^{r}} & =_{i}^{r}\kappa_{r}-\tau_{i}^{r}\\
\beta_{_{i_{1}\ldots i_{L}}^{r_{1}\ldots r_{L}}\,{}_{i_{1}\ldots i_{L}}^{r_{1}\ldots r_{L}}} & =_{i_{1}\ldots i_{L}}^{r_{1}\ldots r_{L}}\kappa'_{r_{1}\ldots r_{L}}-\tau{}_{i_{1}\ldots i_{L}}^{r_{1}\ldots r_{L}}
\end{align*}
The off diagonal elements depend only on $_{i}^{r}\kappa_{s}$'s and
$_{i_{1}\ldots i_{L}}^{r_{1}\ldots r_{L}}\kappa'_{s_{1}\ldots s_{L'}}$'s.
Since the numbers $\tau_{i}^{r}\neq0$ and $\tau{}_{i_{1}\ldots i_{L}}^{r_{1}\ldots r_{L}}\neq0$
in the diagonal can be chosen arbitrarily, we may assume that $\det\beta_{\Bigl({}_{i}^{r},\,{}_{i_{1}\ldots i_{L}}^{r_{1}\ldots r_{L}}\Bigr)\Bigl({}_{j}^{s},\,{}_{j_{1}\ldots j_{L'}}^{s_{1}\ldots s_{L'}}\Bigr)}\neq0$.
Therefore, the system of linear equations (\ref{eq:az=000020egyenlet-1})
has a unique solution for all vector-variables $\vec{e}\,_{i}^{r}$
and $\vec{e}\,{}_{i_{1}\ldots i_{L}}^{r_{1}\ldots r_{L}}$, namely,
\[
\left(\vec{e}\,{}_{i}^{r},\,\vec{e}\,{}_{i_{1}\ldots i_{L}}^{r_{1}\ldots r_{L}}\right)=\sum_{_{j}^{s}\in J_{\vec{f}},\,{}_{j_{1}\ldots j_{L'}}^{s_{1}\ldots s_{L'}}\in J'_{\vec{f}}}\beta_{\Bigl({}_{i}^{r},\,{}_{i_{1}\ldots i_{L}}^{r_{1}\ldots r_{L}}\Bigr)\Bigl({}_{j}^{s},\,{}_{j_{1}\ldots j_{L'}}^{s_{1}\ldots s_{L'}}\Bigr)}^{-1}\left(\vec{B}_{_{j}^{s}},\,\vec{B}_{_{j_{1}\ldots j_{L'}}^{s_{1}\ldots s_{L'}}}\right)
\]
Taking into account that $\vec{B}_{_{i}^{r}},\vec{B}_{_{i_{1}\ldots i_{L}}^{r_{1}\ldots r_{L}}}\in\mathrm{span}\left\{ \vec{\omega}_{\mu}\right\} _{\mu\in I_{\vec{f}}^{0}},$
this all means that for all $_{i}^{r}\in J_{\vec{f}}$, the base vectors
$\vec{e}\,_{i}^{r}$, and for all $_{i_{1}\ldots i_{L}}^{r_{1}\ldots r_{L}}\in J'_{\vec{f}}$,
the base vectors $\vec{e}\,{}_{i_{1}\ldots i_{L}}^{r_{1}\ldots r_{L}}$
can be expressed as linear combinations of vectors contained in $\mathrm{span}\left\{ \vec{\omega}_{\mu}\right\} _{\mu\in I_{\vec{f}}^{0}}$.
As all the rest of base vectors belong to $\mathrm{span}\left\{ \vec{\omega}_{\mu}\right\} _{\mu\in I_{\vec{f}}^{0}}$
(as we have already mentioned above), we have 
\[
\mathrm{span}\left\{ \vec{\omega}_{\mu}\right\} _{\mu\in I_{\vec{f}}^{0}}=\mathbb{R}^{M+\left|S\right|}
\]
meaning that $\vec{f}$ must be a vertex of $l\left(M,S\right)$.
Due to the fact that all components of a vertex of $l\left(M,S\right)$
are necessarily $0$ or $1$, there cannot exists a vertex $\vec{f}\in\varphi\left(M,S\right)$
with $0<f_{i}^{r}<1$ and/or $0<f_{i_{1}\ldots i_{L}}^{r_{1}\ldots r_{L}}<1$.
\end{proof}
\noindent All this means that the vertices of $\varphi\left(M,S\right)$
are those vertices of $l\left(M,S\right)$ that satisfy the further
restrictions (\ref{eq:elso+-1})--(\ref{eq:utolso+-1}).

To sum up, the ``space'' of possible states is a closed convex polytope
$\varphi\left(M,S\right)\subset\mathbb{R}^{M+\left|S\right|}$ whose
vertices are the vectors $\vec{w}\in\mathbb{R}^{M+\left|S\right|}$
such that
\begin{itemize}
\item [(a)]$w_{i}^{r},w_{i_{1}\ldots i_{L}}^{r_{1}\ldots r_{L}}\in\left\{ 0,1\right\} $
for all $_{i}^{r}\in I^{M}$ and $_{i_{1}\ldots i_{L}}^{r_{1}\ldots r_{L}}\in S$.
\item [(b)]$w_{i_{1}\ldots i_{L}}^{r_{1}\ldots r_{L}}\leq\underset{^{_{\left\{ \gamma_{1},\ldots\gamma_{L-1}\right\} \subset\,\left\{ 1,\ldots L\right\} }}}{\prod}w_{i_{\gamma_{1}}\ldots i_{\gamma_{L-1}}}^{r_{\gamma_{1}}\ldots r_{\gamma_{L-1}}}$
for all $_{i_{1}\ldots i_{L}}^{r_{1}\ldots r_{L}}\in S$.
\item [(c)]$w_{i_{1}\ldots i_{L}}^{r_{1}\ldots r_{L}}=0$ for all $_{i_{1}\ldots i_{L}}^{r_{1}\ldots r_{L}}\in S_{0}$.
\item [(d)]For all $1\leq r\leq m$ there is exactly one $1\leq i_{*}^{r}\leq n_{r}$
such that $w_{i_{*}^{r}}^{r}=1$.
\item [(e)]For all $\left\{ r_{1},\ldots r_{L}\right\} \in\mathfrak{P}$
there is exactly one $_{i_{1}^{*}\ldots i_{L}^{*}}^{r_{1}\ldots r_{L}}\in S$
such that $w_{i_{1}^{*}\ldots i_{L}^{*}}^{r_{1}\ldots r_{L}}=1$.
\end{itemize}
(We note in advance that property (d) will be crucial in the proof
of Theorem~\ref{Theorem:Main}.) Let $\mathcal{W}=\left\{ \vec{w}_{\vartheta}\right\} _{\vartheta\in\Theta}$
denote the set of vertices of $\varphi\left(M,S\right)$.

Again, we emphasize that $\varphi\left(M,S\right)$ is the total space
of theoretically possible states, determined by the totality of possible
relative frequency functions over $\mathcal{A}$ that satisfy conditions
\textbf{(E1)}--\textbf{(E3)}. It may be that the empirically determined
possible states of the system for different physical preparations
constitute only a subset $\varPhi\subseteq\varphi\left(M,S\right)$.
Note that such a subset is not necessarily a convex subset in $\varphi\left(M,S\right)$---notice
that Lemma~\ref{Lemma:convex} is about the whole $\varphi\left(M,S\right)$.

As a simple illustration, consider a coin made of a very light material,
which is empty inside. Inside we install a heavy metal disk, such
that the disk can be fixed in different positions. (Fig.~\ref{coin})
In this way, the system can be prepared in different states. There
is only one measurement we may perform, tossing the coin, with two
possible outcomes, Heads and Tails, and one conjunction. 
\begin{figure}
\begin{centering}
\includegraphics[width=0.8\columnwidth]{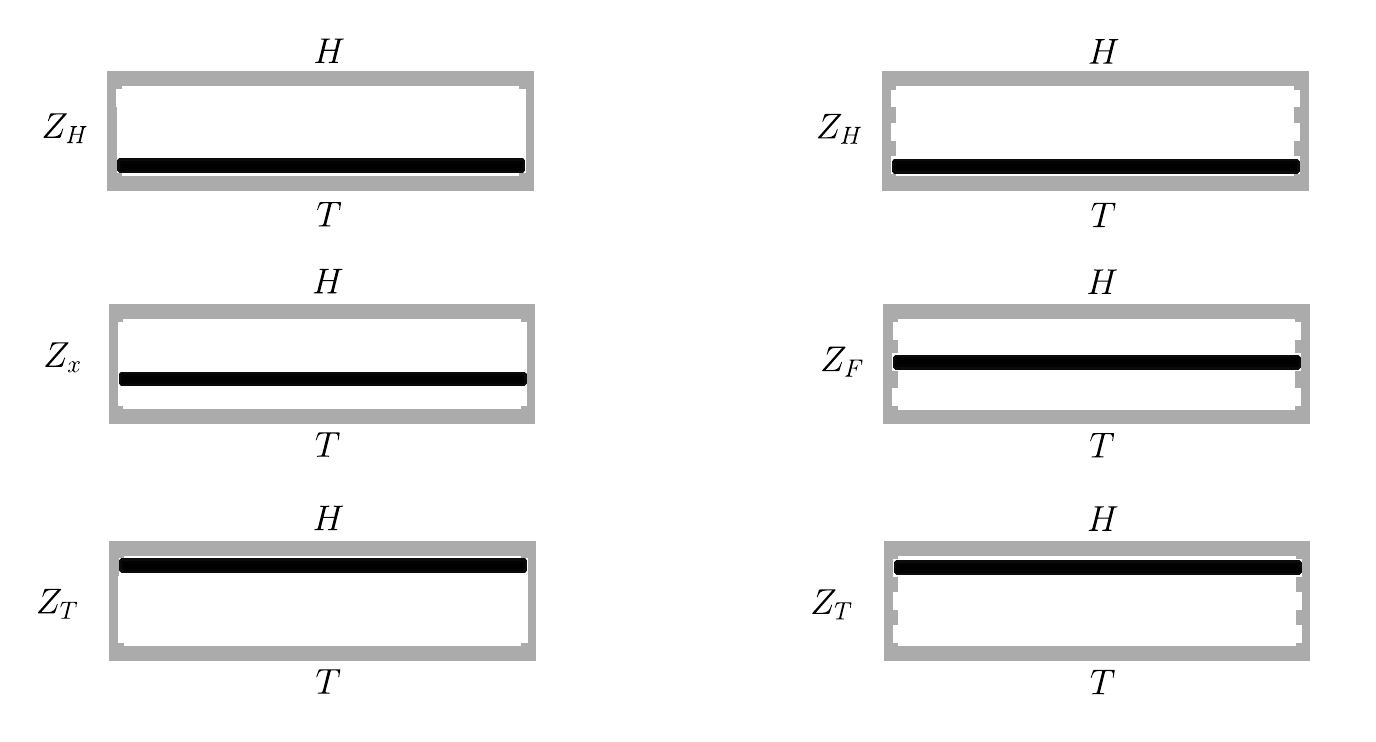}
\par\end{centering}
\caption{Baised coin}
\label{coin}
\end{figure}
 Keeping the preparation fixed, that is, keeping the position of the
disk fixed, we can take the statistics of Heads and Tails. The observed
data will perfectly satisfy \textbf{(E1)}--\textbf{(E3)}, and the
state space $\varphi\left(M,S\right)$ (where $M=2$ and $S=\left\{ _{1\,2}^{1\,1}\right\} $)
is a one-dimensional polytope in $\mathbb{R}^{3}$, determined by
two vertices $\left(1,0,0\right)$ and $\left(0,1,0\right)$. Denote
an arbitrary state vector by $\vec{Z}=(z_{H},z_{T},0)$, meaning that
whenever we perform the coin-toss and the coin is in state $\vec{Z}$,
we get Heads with relative frequency $z_{H}$, Tails with $z_{T}$,
but never the Heads and Tails in conjunction.

Consider now the particular physical situation on the left hand side
of Fig.~\ref{coin}. The position of the disk can be continuously
varied between two extreme positions; one in which the coin is maximally
biased for Heads,
\[
\vec{Z}_{H}=(0.8,0.2,0)
\]
the other in which the coin is maximally biased for Tails, 
\[
\vec{Z}_{T}=(0.2,0.8,0)
\]
Each intermediate position is possible and the corresponding state
vector $\vec{Z}_{x}$ falls between vectors $\vec{Z}_{H}$ and $\vec{Z}_{T}$.
So, the physically possible sates are restricted to a proper subset
$\varPhi\subset\varphi\left(M,S\right)$, the line segment between
$\vec{Z}_{H}$ and $\vec{Z}_{T}$; but, $\varPhi$ is closed under
convex combination.

Consider now another physical situation shown on the right hand side
of Fig.~\ref{coin}. There are three separate slots where the disk
can be clicked. Accordingly, there are only three possible positions,
the two extremes and a third in the middle. The corresponding state
vectors are $\vec{Z}_{H}$, $\vec{Z}_{T}$, and the ``fair'' state
\[
\vec{Z}_{F}=(0.5,0.5,0)
\]
Now, 
\begin{equation}
\varPhi=\left\{ \vec{Z}_{H},\vec{Z}_{F},\vec{Z}_{T}\right\} \subset\varphi\left(M,S\right)\label{eq:coin-diszkret}
\end{equation}
and it is obviously not closed under convex combination. While for
example $\frac{1}{2}\vec{Z}_{H}+\frac{1}{2}\vec{Z}_{T}=\vec{Z}_{F}$
is indeed contained in $\varPhi$, $\frac{2}{3}\vec{Z}_{H}+\frac{1}{3}\vec{Z}_{T}=(0.6,0.4,0)$
is not.

Thus, the set of physically possible states $\varPhi$ can be a strongly
restricted subset of the total space of theoretically possible states
$\varphi\left(M,S\right)$; and it is not necessary closed under convex
combination. (We do not see any reason to adopt for example the a
priori argumentation in the GPT literature that $\varPhi$ must be
convex; see Appendix~2.) The actual content of $\varPhi\subseteq\varphi\left(M,S\right)$
is determined by further physical facts beyond the\textbf{ }stipulated
\textbf{(E1)}--\textbf{(E3)}. We do not wish to impose such an additional
restriction, even if it could be empirically justified for certain
types of physical systems. The point of assumptions \textbf{(E1)}-\textbf{(E3)}
is precisely that they are general; we know of no physical system
whose description in operational terms does not satisfy them.

Keeping all this in mind, for the sake of generality, we will consider
$\varphi\left(M,S\right)$ as if it were the space of states without
any restrictions. For the purposes of our analysis below, this is
of no particular relevance, and all of our results below can be easily
modified for a particular subset $\varPhi\subset\varphi\left(M,S\right)$.

\subsection{The State Space -- Manifold View\protect\label{sec:The-State-Space-manifold}}

A closed convex polytope like $\varphi\left(M,S\right)\subset\mathbb{R}^{M+\left|S\right|}$
is a $\text{dim}\left(\varphi\left(M,S\right)\right)$-dimensional
manifold with boundary. Any coordinate system in the affine hull of
$\varphi\left(M,S\right)$ can be a natural coordination of $\varphi\left(M,S\right)$.

Thus, $\varphi\left(M,S\right)$ as a manifold with boundary is a
perfect mathematical representation of the states of the system; in
fact, it is the most straightforward one, expressible directly in
empirical terms. This is however not the only one. Any mathematical
object can represent the state of the system that determines the system's
probabilistic behavior against all possible measurement operations.
For example, for our later purposes the convex decomposition 
\begin{equation}
\vec{Z}=\sum_{\vartheta\in\Theta}\lambda_{\vartheta}\vec{w}_{\vartheta}\,\,\,\,\,\,\,\lambda_{\vartheta}\geq0,\sum_{\vartheta\in\Theta}\lambda_{\vartheta}=1\label{eq:Z-composition}
\end{equation}
will be a more suitable characterization of a point of the state space.
However, in general, this decomposition is not unique. In fact there
are continuum many ways of such decomposition for all $\vec{Z}\in\text{Int}\,\varphi\left(M,S\right)$;
and a unique one if $\vec{Z}$ is on the boundary. As we will show,
there are various good solutions for obtaining a unique representation
of states in terms of their vertex decomposition (\ref{eq:Z-composition}).

Introduce the following notation: $\vec{\lambda}=\left(\lambda_{\vartheta}\right)_{\vartheta\in\Theta}\in\mathbb{R}^{\left|\Theta\right|}$.
Let
\[
\Lambda=\left\{ \vec{\lambda}\in\mathbb{R}^{\left|\Theta\right|}\left|\lambda_{\vartheta}\geq0,\sum_{\vartheta\in\Theta}\lambda_{\vartheta}=1\right.\right\} 
\]
$\Lambda$ is the $\left(\left|\Theta\right|-1\right)$-dimensional
standard simplex in $\mathbb{R}^{\left|\Theta\right|}$. Obviously,
\begin{equation}
D:\Lambda\rightarrow\varphi\left(M,S\right);\,\,\,\,D\left(\vec{\lambda}\right)=\sum_{\vartheta\in\Theta}\lambda_{\vartheta}\vec{w}_{\vartheta}\label{eq:nyalab-projekcio}
\end{equation}
is a continuous projection, and it preserves convex combination. 

\begin{lem}
\label{D-1_politop}For all $\vec{Z}\in\varphi\left(M,S\right)$,
$D^{-1}\left(\vec{Z}\right)$ is a polytope contained in $\Lambda$.
\end{lem}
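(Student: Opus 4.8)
The plan is to exploit the fact that $D$ is nothing but the restriction to the simplex $\Lambda$ of a \emph{linear} map, so that each fibre $D^{-1}\left(\vec{Z}\right)$ is the intersection of a polytope with an affine subspace---and such an intersection is itself a polytope.

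First I would make the linearity explicit. Let $W$ denote the $\left(M+\left|S\right|\right)\times\left|\Theta\right|$ matrix whose columns are the vertex vectors $\vec{w}_{\vartheta}$, $\vartheta\in\Theta$. Then $D\left(\vec{\lambda}\right)=\sum_{\vartheta\in\Theta}\lambda_{\vartheta}\vec{w}_{\vartheta}=W\vec{\lambda}$, so $D$ is the restriction to $\Lambda$ of the linear map $\vec{\lambda}\mapsto W\vec{\lambda}$ on $\mathbb{R}^{\left|\Theta\right|}$. Consequently, for a fixed $\vec{Z}\in\varphi\left(M,S\right)$,
\[
D^{-1}\left(\vec{Z}\right)=\left\{ \vec{\lambda}\in\Lambda\,\left|\,W\vec{\lambda}=\vec{Z}\right.\right\} =\Lambda\cap H_{\vec{Z}},
\]
where $H_{\vec{Z}}=\left\{ \vec{\lambda}\in\mathbb{R}^{\left|\Theta\right|}\,\left|\,W\vec{\lambda}=\vec{Z}\right.\right\} $ is an affine subspace, namely the solution set of the inhomogeneous system $W\vec{\lambda}=\vec{Z}$.

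Second, I would assemble the defining inequalities. The simplex $\Lambda$ is cut out by the finitely many inequalities $\lambda_{\vartheta}\geq0$ ($\vartheta\in\Theta$) together with the single equality $\sum_{\vartheta\in\Theta}\lambda_{\vartheta}=1$; the affine subspace $H_{\vec{Z}}$ is cut out by the finitely many equalities comprising $W\vec{\lambda}=\vec{Z}$ (each equality being a pair of opposite inequalities). Hence $D^{-1}\left(\vec{Z}\right)$ is the solution set of a finite system of linear inequalities and equalities, i.e.\ a polyhedron. Since $D^{-1}\left(\vec{Z}\right)\subseteq\Lambda$ and $\Lambda$ is bounded, this polyhedron is bounded, and a bounded polyhedron is by definition a polytope. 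The inclusion $D^{-1}\left(\vec{Z}\right)\subseteq\Lambda$ is immediate from the displayed description, which settles the second assertion of the lemma.

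Finally, for completeness I would observe that the fibre is nonempty: because every point of a polytope is a convex combination of its vertices, $D$ maps $\Lambda$ \emph{onto} $\varphi\left(M,S\right)$, so the assumption $\vec{Z}\in\varphi\left(M,S\right)$ guarantees at least one preimage. The only step worth flagging is the appeal to the standard fact that the intersection of a polytope with an affine subspace---equivalently, a bounded polyhedron---is a polytope (the Minkowski--Weyl equivalence of the half-space and vertex descriptions). This is entirely routine and presents no genuine obstacle; the whole content of the lemma reduces to the affinity of $D$ together with the boundedness inherited from $\Lambda$.
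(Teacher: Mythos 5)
Your proof is correct and follows essentially the same route as the paper: both identify $D^{-1}\left(\vec{Z}\right)$ as the intersection of the simplex $\Lambda$ with the affine solution set of the linear system $\sum_{\vartheta}\lambda_{\vartheta}\vec{w}_{\vartheta}=\vec{Z}$, and then invoke the fact that a polytope intersected with an affine subspace is a polytope (the paper cites Henk \emph{et al.}~2004 for this, whereas you unwind it via the bounded-polyhedron description). The added remarks on nonemptiness and on Minkowski--Weyl are harmless elaborations, not a different argument.
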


\begin{proof}
To satisfy (\ref{eq:Z-composition}), beyond being contained in $\Lambda$,
$\vec{\lambda}$ has to satisfy the following system of linear equations:
\begin{eqnarray}
\sum_{\vartheta\in\Theta}\lambda_{\vartheta}w_{\vartheta}{}_{i}^{r} & = & Z_{i}^{r}\,\,\,\,\,\,\,\,\,\,\,\,\,\,\,\,\,\,\,\,\,\,\,\,\,{}_{i}^{r}\in I^{M}\label{eq:az_egyenlet}\\
\sum_{\vartheta\in\Theta}\lambda_{\vartheta}w_{\vartheta}{}_{i_{1}\ldots i_{L}}^{r_{1}\ldots r_{L}} & = & Z_{i_{1}\ldots i_{L}}^{r_{1}\ldots r_{L}}\,\,\,\,\,\,\,\,\,\,\,\,\,{}_{i_{1}\ldots i_{L}}^{r_{1}\ldots r_{L}}\in S\label{eq:az_egyenlet2}
\end{eqnarray}
For a given $\vec{Z}$, the set of solutions constitute an affine
subspace $\mathfrak{a}_{\vec{Z}}\subset\mathbb{R}^{\left|\Theta\right|}$
with difference space $\mathcal{B}\subset\mathbb{R}^{\left|\Theta\right|}$
constituted by the solutions of the homogeneous equations 
\begin{eqnarray*}
\sum_{\vartheta\in\Theta}\lambda_{\vartheta}w_{\vartheta}{}_{i}^{r} & = & 0\,\,\,\,\,\,\,\,\,\,\,\,\,{}_{i}^{r}\in I^{M}\\
\sum_{\vartheta\in\Theta}\lambda_{\vartheta}w_{\vartheta}{}_{i_{1}\ldots i_{L}}^{r_{1}\ldots r_{L}} & = & 0\,\,\,\,\,\,\,\,\,\,\,\,\,{}_{i_{1}\ldots i_{L}}^{r_{1}\ldots r_{L}}\in S
\end{eqnarray*}
Notice that $D^{-1}\left(\vec{Z}\right)=\Lambda\cap\mathfrak{a}_{\vec{Z}}$.
Due to the fact that an intersection of a polytope with an affine
subspace is a polytope (Henk \emph{et al}.~2004), each $D^{-1}\left(\vec{Z}\right)$
is a polytope contained in $\Lambda$.
\end{proof}
\begin{lem}
\label{D-1_folytonos}$D^{-1}\left(\vec{Z}\right)$, as a subset of
$\mathbb{R}^{\left|\Theta\right|}$, continuously depends on $\vec{Z}$
in the following sense:
\begin{eqnarray}
\underset{\vec{Z}'\rightarrow\vec{Z}}{\text{lim}}\,\,\,\underset{\vec{\lambda}\in D^{-1}\left(\vec{Z}\right)}{\text{max}}\,d\left(\vec{\lambda},D^{-1}\left(\vec{Z}'\right)\right) & = & 0\label{eq:egy-a-masiktol-1}\\
\underset{\vec{Z}'\rightarrow\vec{Z}}{\text{lim}}\,\,\,\underset{\vec{\lambda}\in D^{-1}\left(\vec{Z}'\right)}{\text{max}}\,d\left(\vec{\lambda},D^{-1}\left(\vec{Z}\right)\right) & = & 0\label{eq:masik-az-egyiktol-1-2}
\end{eqnarray}
where $d\left(\,,\,\right)$ denotes the usual distance of a point
from a set.
\end{lem}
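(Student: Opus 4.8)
The two displayed limits are exactly the two halves of Hausdorff continuity of the set-valued map $\vec{Z}\mapsto D^{-1}(\vec{Z})$ on the state space $\varphi\left(M,S\right)$ (throughout I take $\vec{Z},\vec{Z}'\in\varphi\left(M,S\right)$, so that both fibres are nonempty). Condition (\ref{eq:masik-az-egyiktol-1-2}) is the upper-hemicontinuity half and yields to a soft compactness argument, whereas (\ref{eq:egy-a-masiktol-1}) is the lower-hemicontinuity half and is the substantive part, for which I would invoke a quantitative error bound. My plan is therefore to dispatch (\ref{eq:masik-az-egyiktol-1-2}) by compactness and to derive (\ref{eq:egy-a-masiktol-1}) from Hoffman's lemma, which in fact delivers both at once with a uniform constant.

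For (\ref{eq:masik-az-egyiktol-1-2}), suppose it fails. Then there are $\varepsilon>0$, a sequence $\vec{Z}_{n}\to\vec{Z}$, and points $\vec{\lambda}_{n}\in D^{-1}(\vec{Z}_{n})$ --- maximizers, which exist since each fibre is a polytope (Lemma~\ref{D-1_politop}) and hence compact --- with $d(\vec{\lambda}_{n},D^{-1}(\vec{Z}))\geq\varepsilon$. As $\Lambda$ is compact I may pass to a subsequence with $\vec{\lambda}_{n}\to\vec{\lambda}^{*}\in\Lambda$. Continuity of $D$ gives $D(\vec{\lambda}^{*})=\lim_{n}D(\vec{\lambda}_{n})=\lim_{n}\vec{Z}_{n}=\vec{Z}$, so $\vec{\lambda}^{*}\in D^{-1}(\vec{Z})$; but then $d(\vec{\lambda}_{n},D^{-1}(\vec{Z}))\leq\|\vec{\lambda}_{n}-\vec{\lambda}^{*}\|\to0$, contradicting the choice of $\vec{\lambda}_{n}$. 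This step uses nothing beyond compactness of $\Lambda$ and continuity of $D$, both already established.

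The main obstacle is (\ref{eq:egy-a-masiktol-1}): a point of the fixed fibre $D^{-1}(\vec{Z})$ must be approximated by \emph{genuine} points of the neighbouring fibre $D^{-1}(\vec{Z}')$, which no compactness argument can supply. I would obtain it from Hoffman's error bound for systems of linear inequalities. Write $D^{-1}(\vec{Z})$ as the solution set of (\ref{eq:az_egyenlet})--(\ref{eq:az_egyenlet2}) together with $\sum_{\vartheta}\lambda_{\vartheta}=1$ and $\lambda_{\vartheta}\geq0$, cast in the standard form $C\vec{\lambda}\leq\vec{d}(\vec{Z})$; here the coefficient matrix $C$ is fixed and the parameter $\vec{Z}$ enters only on the right-hand side, through the rows coming from (\ref{eq:az_egyenlet})--(\ref{eq:az_egyenlet2}). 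Hoffman's lemma then furnishes a constant $\mu$, depending only on $C$, with $d(\vec{\lambda},D^{-1}(\vec{Z}'))\leq\mu\,\|(C\vec{\lambda}-\vec{d}(\vec{Z}'))^{+}\|$ for every $\vec{\lambda}$, where $(\,\cdot\,)^{+}$ is the componentwise positive part. Evaluating the residual at a point $\vec{\lambda}\in D^{-1}(\vec{Z})$: every constraint independent of the parameter (the normalization and the nonnegativities) is met exactly, so only the parameter-carrying rows contribute, and their residual is bounded by $\|\vec{Z}-\vec{Z}'\|$. Hence $d(\vec{\lambda},D^{-1}(\vec{Z}'))\leq\mu\,\|\vec{Z}-\vec{Z}'\|$ for all $\vec{\lambda}\in D^{-1}(\vec{Z})$; maximizing over the compact fibre and letting $\vec{Z}'\to\vec{Z}$ gives (\ref{eq:egy-a-masiktol-1}).

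Finally, since the Hoffman constant $\mu$ depends only on the fixed matrix $C$, running the same estimate with the roles of $\vec{Z}$ and $\vec{Z}'$ exchanged re-proves (\ref{eq:masik-az-egyiktol-1-2}) as well, and in fact upgrades the conclusion to Lipschitz continuity in the Hausdorff metric, $d_{H}(D^{-1}(\vec{Z}),D^{-1}(\vec{Z}'))\leq\mu\,\|\vec{Z}-\vec{Z}'\|$. The only external ingredient is Hoffman's classical bound; everything else is the polyhedral description of the fibres already set up in Lemma~\ref{D-1_politop} and the continuity of $D$ recorded around (\ref{eq:nyalab-projekcio}), whose surjectivity onto $\varphi\left(M,S\right)$ (each point of the polytope being a convex combination of the vertices $\vec{w}_{\vartheta}$) guarantees that all the fibres in question are nonempty.
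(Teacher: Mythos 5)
Your proposal is correct, and for the substantive half it takes a genuinely different route from the paper. For (\ref{eq:masik-az-egyiktol-1-2}) both arguments are compactness arguments; yours (extract a convergent subsequence of maximizers, use continuity of $D$ to place the limit in $D^{-1}(\vec{Z})$) is a clean sequential version of what the paper does by contradiction via the closedness of $\Lambda$ and the identity $D^{-1}(\vec{Z})=\Lambda\cap\mathfrak{a}_{\vec{Z}}$. For (\ref{eq:egy-a-masiktol-1}) the paper argues by an explicit convex interpolation: given $\vec{\lambda}\in D^{-1}(\vec{Z})$ and the nearest point $\vec{\lambda}'$ of the fibre over $\vec{Z}-\varDelta\vec{Z}$, the segment point $\vec{\lambda}+t(\vec{\lambda}'-\vec{\lambda})$ lies in $D^{-1}(\vec{Z}-t\varDelta\vec{Z})$ because the defining constraints are affine in $(\vec{\lambda},\vec{Z})$ and $\Lambda$ is convex, which yields the bound $t\left|\vec{\lambda}'-\vec{\lambda}\right|$ along each ray; this is entirely elementary but is phrased as a family of directional limits, so the uniformity over directions is left somewhat implicit. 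Your Hoffman-bound argument instead delivers a single constant $\mu$ depending only on the fixed coefficient matrix and the uniform estimate $d\bigl(\vec{\lambda},D^{-1}(\vec{Z}')\bigr)\leq\mu\left\Vert \vec{Z}-\vec{Z}'\right\Vert $ for all $\vec{\lambda}\in D^{-1}(\vec{Z})$, hence Lipschitz continuity in the Hausdorff metric --- a strictly stronger conclusion that sidesteps the directional-limit issue altogether --- at the price of importing a nontrivial external result where the paper needs only convexity and linearity. You correctly note the one feasibility hypothesis Hoffman's lemma requires, namely that $D^{-1}(\vec{Z}')\neq\emptyset$ for $\vec{Z}'\in\varphi\left(M,S\right)$, which holds since every point of the polytope is a convex combination of the vertices $\vec{w}_{\vartheta}$.
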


\begin{proof}
We have to show that (\ref{eq:egy-a-masiktol-1})--(\ref{eq:masik-az-egyiktol-1-2})
hold approaching from all possible directions to $\vec{Z}$. In other
words, if $t\in[0,1]$ and $\varDelta\vec{Z}\in\mathbb{R}^{M+\left|S\right|}$
is an arbitrary non-zero vector such that $\vec{Z}-\varDelta\vec{Z}\in\varphi\left(M,S\right)$,
then
\begin{eqnarray}
\underset{t\rightarrow0}{\text{lim}}\,\,\,\underset{\vec{\lambda}\in D^{-1}\left(\vec{Z}\right)}{\text{max}}\,d\left(\vec{\lambda},D^{-1}\left(\vec{Z}-t\varDelta\vec{Z}\right)\right) & = & 0\label{eq:egy-a-masiktol-2}\\
\underset{t\rightarrow0}{\text{lim}}\,\,\,\underset{\vec{\lambda}\in D^{-1}\left(\vec{Z}-t\varDelta\vec{Z}\right)}{\text{max}}\,d\left(\vec{\lambda},D^{-1}\left(\vec{Z}\right)\right) & = & 0\label{eq:masik-az-egyiktol-1}
\end{eqnarray}

Let $\varDelta\vec{\lambda}$ be a solution of equations (\ref{eq:az_egyenlet})--(\ref{eq:az_egyenlet2})
with $\varDelta\vec{Z}$:
\begin{eqnarray}
\sum_{\vartheta\in\Theta}\varDelta\lambda_{\vartheta}w_{\vartheta}{}_{i}^{r} & = & \varDelta Z_{i}^{r}\,\,\,\,\,\,\,\,\,\,\,\,\,\,\,\,\,\,\,\,\,\,\,\,\,{}_{i}^{r}\in I^{M}\label{eq:deltaz1}\\
\sum_{\vartheta\in\Theta}\varDelta\lambda_{\vartheta}w_{\vartheta}{}_{i_{1}\ldots i_{L}}^{r_{1}\ldots r_{L}} & = & \varDelta Z_{i_{1}\ldots i_{L}}^{r_{1}\ldots r_{L}}\,\,\,\,\,\,\,\,\,\,\,\,\,{}_{i_{1}\ldots i_{L}}^{r_{1}\ldots r_{L}}\in S\label{eq:deltaz2}
\end{eqnarray}
$\varDelta\vec{\lambda}$ can be orthogonally decomposed as follows:
\[
\varDelta\vec{\lambda}=\varDelta\vec{\lambda}^{\bot}+\varDelta\vec{\lambda}^{\Vert}\,\,\,\,\,\,\,\,\,\varDelta\vec{\lambda}^{\bot}\in\mathcal{B}^{\bot}\text{ and }\varDelta\vec{\lambda}^{\Vert}\in\mathcal{B}
\]
Obviously, $\varDelta\vec{\lambda}^{\bot}$ is uniquely determined
by $\varDelta\vec{Z}$; accordingly, replacing $\varDelta\vec{Z}$
with $t\varDelta\vec{Z}$ on the right hand side of (\ref{eq:deltaz1})--(\ref{eq:deltaz2})
we get $t\varDelta\vec{\lambda}^{\bot}$ in place of $\varDelta\vec{\lambda}^{\bot}$.
Notice that $\left|t\varDelta\vec{\lambda}^{\bot}\right|$ is the
distance between the affine subspaces of solutions $\mathfrak{a}_{\vec{Z}}$
and $\mathfrak{a}_{\vec{Z}-t\varDelta\vec{Z}}$; tending to zero if
$t\rightarrow0$.

Let $\vec{\lambda}$ be an arbitrary point in $D^{-1}\left(\vec{Z}\right)$
and let $\vec{\lambda}'$ be the point in $D^{-1}\left(\vec{Z}-\varDelta\vec{Z}\right)$
closest to $\vec{\lambda}$, that is,

\begin{eqnarray*}
d\left(\vec{\lambda},D^{-1}\left(\vec{Z}-\varDelta\vec{Z}\right)\right) & = & \left|\vec{\lambda}'-\vec{\lambda}\right|
\end{eqnarray*}
Consider the point
\[
\vec{\lambda}_{t}=\vec{\lambda}+t\left(\vec{\lambda}'-\vec{\lambda}\right)
\]
Obviously, $\vec{\lambda}_{t}\in\Lambda$ and $\vec{\lambda}_{t}\in\mathfrak{a}_{\vec{Z}-t\varDelta\vec{Z}}$
for all $t\in[0,1]$, that is, 
\[
\vec{\lambda}_{t}\in D^{-1}\left(\vec{Z}-t\varDelta\vec{Z}\right)
\]
Therefore,
\[
d\left(\vec{\lambda},D^{-1}\left(\vec{Z}-t\varDelta\vec{Z}\right)\right)\leq t\left|\vec{\lambda}'-\vec{\lambda}\right|
\]
which implies (\ref{eq:egy-a-masiktol-2}).

Also, notice that 
\[
\underset{t\rightarrow0}{\text{lim}}\,\,\,\underset{\vec{\lambda}\in D^{-1}\left(\vec{Z}-t\varDelta\vec{Z}\right)}{\text{max}}\,d\left(\vec{\lambda},\mathfrak{a}_{\vec{Z}}\right)=0
\]
which implies (\ref{eq:masik-az-egyiktol-1}), otherwise there would
exist a convergent sequence of points from different $D^{-1}\left(\vec{Z}-t\varDelta\vec{Z}\right)$
sets such that the limiting point is not contained in $D^{-1}\left(\vec{Z}\right)$,
contradicting to the facts that $\Lambda$ is closed and $D^{-1}\left(\vec{Z}\right)=\Lambda\cap\mathfrak{a}_{\vec{Z}}$.
\end{proof}
Lemma~\ref{D-1_politop} and \ref{D-1_folytonos} mean that the states
of the system can be represented in a continuous way by a disjoint
family of polytopes contained in $\Lambda$.  This is of course a
very unusual and inconvenient way of representation. However, we can
easily make it more convenient by assigning a point in each $D^{-1}\left(\vec{Z}\right)$
representing the entire polytope. There are several possibilities:
for example, the center of mass, or any other notion of the center
of a polytope. Here we will use the notion of the point of maximal
entropy, which is perhaps physically also meaningful (Pitowsky~1989,
p.~47).

The point of maximal entropy of an arbitrary polytope $\mathcal{S}\subset\Lambda$:
\[
\vec{c}(\mathcal{S})=\,\begin{cases}
\text{maximize} & H\left(\vec{\lambda}\right)=-\sum_{\vartheta\in\Theta}\lambda_{\vartheta}\text{log}\lambda_{\vartheta}\\
\text{subject to} & \vec{\lambda}\in\mathcal{S}
\end{cases}
\]
Since $\mathcal{S}$ is contained in $\Lambda$, this maximization
problem always has a solution. Meaning that $\vec{c}(\mathcal{S})$
is uniquely determined and always contained in $\mathcal{S}$.
\begin{lem}
Let us define the following section of the bundle projection (\ref{eq:nyalab-projekcio}):
\begin{eqnarray}
 &  & \sigma:\varphi\left(M,S\right)\rightarrow\Lambda\nonumber \\
 &  & \,\,\,\,\,\,\,\,\,\sigma\left(\vec{Z}\right)=\vec{c}\left(D^{-1}\left(\vec{Z}\right)\right)\,\,\,\in D^{-1}\left(\vec{Z}\right)\label{eq:section}
\end{eqnarray}

Then, $\sigma\left(\vec{Z}\right)$ is continuous in $\vec{Z}$, that
is, for all $\vec{Z},\vec{Z}'\in\varphi\left(M,S\right)$,
\[
\underset{\vec{Z}'\rightarrow\vec{Z}}{\text{lim}}\,\sigma\left(\vec{Z'}\right)=\sigma\left(\vec{Z}\right)
\]

\end{lem}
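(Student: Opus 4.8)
The plan is to read this as a continuity-of-the-argmax statement and to establish it by a subsequence argument, feeding on the Hausdorff continuity of the fibres $\vec{Z}\mapsto D^{-1}(\vec{Z})$ provided by Lemma~\ref{D-1_folytonos} together with the strict concavity of the entropy. First I would record the two structural facts that make $\vec{c}$ well behaved. The entropy $H(\vec{\lambda})=-\sum_{\vartheta\in\Theta}\lambda_{\vartheta}\log\lambda_{\vartheta}$ is continuous on $\Lambda$ and \emph{strictly} concave (each summand $-x\log x$ is strictly concave, and along any nondegenerate segment at least one coordinate varies), so on every nonempty compact convex set $\mathcal{S}\subseteq\Lambda$ it attains its maximum at a \emph{unique} point $\vec{c}(\mathcal{S})$. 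By Lemma~\ref{D-1_politop} each fibre $D^{-1}(\vec{Z})$ is such a set, whence $\sigma(\vec{Z})=\vec{c}(D^{-1}(\vec{Z}))$ is well defined and lies in $D^{-1}(\vec{Z})$.

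Since $\Lambda$ is compact it suffices to prove sequential continuity, and in a compact space this follows once we show that for any $\vec{Z}_{n}\to\vec{Z}$ every convergent subsequence of $\sigma(\vec{Z}_{n})$ has limit $\sigma(\vec{Z})$. So I fix $\vec{Z}_{n}\to\vec{Z}$ and pass to a subsequence (not relabelled) with $\sigma(\vec{Z}_{n})\to\vec{\mu}\in\Lambda$. I would first check that $\vec{\mu}\in D^{-1}(\vec{Z})$: since $D$ is continuous and $D(\sigma(\vec{Z}_{n}))=\vec{Z}_{n}\to\vec{Z}$, we get $D(\vec{\mu})=\vec{Z}$, while $\vec{\mu}\in\Lambda$ because $\Lambda$ is closed; hence $\vec{\mu}\in\Lambda\cap\mathfrak{a}_{\vec{Z}}=D^{-1}(\vec{Z})$. (The same conclusion also drops out of the upper half (\ref{eq:masik-az-egyiktol-1-2}) of the continuity lemma.)

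The heart of the argument is to show that $\vec{\mu}$ maximizes $H$ over $D^{-1}(\vec{Z})$. Suppose not: then there is $\vec{\nu}\in D^{-1}(\vec{Z})$ with $H(\vec{\nu})>H(\vec{\mu})$. Here I invoke the lower half (\ref{eq:egy-a-masiktol-1}) of Lemma~\ref{D-1_folytonos}, which says points of $D^{-1}(\vec{Z})$ are approximable from the nearby fibres, and thereby produce a sequence $\vec{\nu}_{n}\in D^{-1}(\vec{Z}_{n})$ with $\vec{\nu}_{n}\to\vec{\nu}$. Because $\sigma(\vec{Z}_{n})$ maximizes $H$ over $D^{-1}(\vec{Z}_{n})$ and $\vec{\nu}_{n}$ lies in that fibre, we have $H(\sigma(\vec{Z}_{n}))\ge H(\vec{\nu}_{n})$; letting $n\to\infty$ and using continuity of $H$ yields $H(\vec{\mu})\ge H(\vec{\nu})$, contradicting the strict inequality. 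Thus $\vec{\mu}$ is a maximizer of $H$ on $D^{-1}(\vec{Z})$, and by the uniqueness coming from strict concavity $\vec{\mu}=\vec{c}(D^{-1}(\vec{Z}))=\sigma(\vec{Z})$. As the limit is forced to equal $\sigma(\vec{Z})$ independently of the subsequence, the whole sequence converges, which is the claimed continuity.

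I expect the only genuinely delicate point to be the selection of the approximating sequence $\vec{\nu}_{n}\to\vec{\nu}$, and this is precisely where Lemma~\ref{D-1_folytonos} is indispensable: without the Hausdorff continuity of $\vec{Z}\mapsto D^{-1}(\vec{Z})$ one would obtain at best upper hemicontinuity of the maximizer and could not exclude a jump. Everything else is the routine interplay of compactness of $\Lambda$, continuity of $H$, and the strict-concavity uniqueness that upgrades upper hemicontinuity to genuine continuity---in effect a hands-on instance of Berge's maximum theorem.
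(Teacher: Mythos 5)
Your proof is correct, but it follows a genuinely different route from the paper's. The paper argues by discretization: it covers $\Lambda$ by small cubes of volume $\varDelta V$, approximates $\vec{c}\left(\mathcal{S}\right)$ by maximizing $H$ over the centers of the cubes meeting $\mathcal{S}$, and then invokes Lemma~\ref{D-1_folytonos} to claim that for $\vec{Z}'$ close enough to $\vec{Z}$ the two fibres meet exactly the same cubes, so the discretized maximizers coincide. Your argument is instead a Berge-maximum-theorem scheme: compactness of $\Lambda$ reduces the claim to showing every convergent subsequence of $\sigma\left(\vec{Z}_{n}\right)$ tends to $\sigma\left(\vec{Z}\right)$; continuity of $D$ places the limit in $D^{-1}\left(\vec{Z}\right)$; the lower half (\ref{eq:egy-a-masiktol-1}) of Lemma~\ref{D-1_folytonos} supplies approximants $\vec{\nu}_{n}\in D^{-1}\left(\vec{Z}_{n}\right)$ of any competitor $\vec{\nu}$, which rules out a drop in entropy at the limit; and strict concavity of $H$ upgrades "some maximizer" to "the maximizer." Two things your version buys: it makes explicit why $\vec{c}\left(\mathcal{S}\right)$ is \emph{unique} (strict concavity of $-x\log x$ summed over coordinates), which the paper asserts without justification, and it avoids the delicate step in the paper's proof where the equivalence $D^{-1}\left(\vec{Z}'\right)\cap C_{i}\neq\emptyset\Leftrightarrow D^{-1}\left(\vec{Z}\right)\cap C_{i}\neq\emptyset$ is claimed for all cubes uniformly---a claim that is fragile when a fibre touches a cube only along its boundary, so that arbitrarily small perturbations can destroy the intersection. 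Your subsequence argument needs no such uniform combinatorial stability and is, if anything, the more robust of the two proofs.
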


\begin{proof}
Consider a sufficiently fine division of the unit cube $C^{\left|\Theta\right|}\subset\mathbb{R}^{\left|\Theta\right|}$
into equally sized small cubes of volume $\varDelta V$. Denote the
$i$-th such elementary cube by $C_{i}$. The point of maximal entropy
of a polytope $\mathcal{S}\subset\Lambda\subset C^{\left|\Theta\right|}$
can be approximated with arbitrary precision in the following way:
\begin{equation}
\vec{c}\left(\mathcal{S}\right)\simeq\,\begin{cases}
\text{maximize} & H\left(^{i}\vec{\lambda}\right)=-\sum_{\vartheta\in\Theta}{}^{i}\lambda_{\vartheta}\text{log}{}^{i}\lambda_{\vartheta}\\
\text{subject to} & i\in\left\{ j\,|\,\mathcal{S}\cap C_{j}\neq\textrm{Ø}\right\} 
\end{cases}\label{eq:approximation}
\end{equation}
where $^{i}\vec{\lambda}$ is, say, the center of $C_{i}$. Due to
Lemma~\ref{D-1_folytonos}, for all $\varDelta V>0$ there is an
$\varepsilon>0$ such that, for all elementary cube $C_{i}$,
\[
D^{-1}\left(\vec{Z}'\right)\cap C_{i}\neq\textrm{Ø}\,\,\Leftrightarrow\,\,D^{-1}\left(\vec{Z}\right)\cap C_{i}\neq\textrm{Ø}\text{\,\,\,\,\,\,\, if \ensuremath{\left|\vec{Z}'-\vec{Z}\right|<\varepsilon}}
\]
Meaning that, for a sufficiently small $\varepsilon$, approximation
(\ref{eq:approximation}) leads to the same result for $D^{-1}\left(\vec{Z}'\right)$
and $D^{-1}\left(\vec{Z}\right)$. Therefore,
\[
\underset{\vec{Z}'\rightarrow\vec{Z}}{\text{lim}}\,\vec{c}\left(D^{-1}\left(\vec{Z}'\right)\right)=\vec{c}\left(D^{-1}\left(\vec{Z}\right)\right)
\]
\end{proof}
By means of $\sigma$ (or any similar continuous section) the whole
state space $\varphi\left(M,S\right)$ can be lifted into a $\text{dim}\left(\varphi\left(M,S\right)\right)$-dimensional
submanifold with boundary: 
\begin{equation}
\Lambda_{\sigma}=\sigma\left(\varphi\left(M,S\right)\right)\subset\Lambda\label{eq:Lambda0}
\end{equation}

\section{Dynamics\protect\label{sec:Dynamics}}

So far, nothing has been said about the dynamics of the system, that
is, about the time evolution of the state $\vec{Z}$. First we have
to introduce the concept of time evolution in general operational
terms. Let us start with the most general case.

Imagine that the system is in state $\vec{Z}\left(t_{0}\right)$ after
a certain physical preparation at time $t_{0}$. According to the
definition of state, this means that the system responds to the various
measurement operations right after time $t_{0}$ in a way described
in (\ref{eq:pi_is_such-1})--(\ref{eq:pi_is_such-2}). Let then the
system evolve under a given set of circumstances until time $t$.
Let $\vec{Z}\left(t\right)$ be the system's state at moment $t$.
Again, this means that the system responds to the various measurement
operations right after time $t$ in a way described in (\ref{eq:pi_is_such-1})--(\ref{eq:pi_is_such-2})
with $\vec{Z}\left(t\right)$. Thus, we have a temporal path of the
system in the space of states $\varphi\left(M,S\right)$. 

By means of a continuous cross section like (\ref{eq:section}), of
course, $\vec{Z}\left(t\right)$ can be lifted into $\Lambda_{\sigma}$
and expressed as a curve $\sigma\left(\vec{Z}\left(t\right)\right)$
on $\Lambda_{\sigma}$.

At this level of generality, we say nothing about the temporal path
$\vec{Z}\left(t\right)$. Whether it has some specific feature or
the time evolution of the system shows any regularity whatsoever,
is a matter of empirical facts reflected in the observed relative
frequencies under various circumstances. As a possible empirically
observed such regularity, we formulate a typical situation when the
time evolution $\vec{Z}\left(t\right)$ can be generated by a one-parameter
group of transformations of $\varphi\left(M,S\right)$.
\begin{lyxlist}{0.0.0}
\item [{\textbf{(E4)}}] The time evolutions of states are such that there
exists a one-parameter group of transformations of $\varphi\left(M,S\right)$,
$F_{t}$, satisfying the following conditions:
\begin{itemize}
\item [] $F_{t}:\varphi\left(M,S\right)\rightarrow\varphi\left(M,S\right)$
is one-to-one
\item [] $F:\mathbb{R}\times\varphi\left(M,S\right)\rightarrow\varphi\left(M,S\right);\,\left(t,\vec{Z}\right)\mapsto F_{t}\left(\vec{Z}\right)$
is continuous
\item [] $F_{t+s}=F_{s}\circ F_{t}$
\item [] $F_{-t}=F_{t}^{-1}$; consequently, $F_{0}=id_{\varphi\left(M,S\right)}$
\end{itemize}
and the time evolution of an arbitrary initial state $\vec{Z}(t_{0})\in\varphi\left(M,S\right)$
is $\vec{Z}(t)=F_{t-t_{0}}\left(\vec{Z}(t_{0})\right)$.
\end{lyxlist}
It is worth mentioning that although the state space $\varphi\left(M,S\right)$
is closed under convex combination, and, in some cases, the subset
$\varPhi\subseteq\varphi\left(M,S\right)$ of the actually observable
states may be closed under convex combination, the stipulated empirical
facts \textbf{(E1)}--\textbf{(E3)} do not imply that the time evolution
should preserve convex combinations. That is, 
\[
\vec{Z}_{3}\left(t_{0}\right)=\lambda_{1}\vec{Z}_{1}\left(t_{0}\right)+\lambda_{2}\vec{Z}_{2}\left(t_{0}\right)\,\,\,\,\,\,\,\,\,\,\,\lambda_{1},\lambda_{2}\geq0;\lambda_{1}+\lambda_{2}=1
\]
at time $t_{0}$ generally does not imply that 
\[
\vec{Z}_{3}\left(t\right)=\lambda_{1}\vec{Z}_{1}\left(t\right)+\lambda_{2}\vec{Z}_{2}\left(t\right)
\]
at any other moment of time $t$. This does not follow even if the
time evolution additionally satisfies condition \textbf{(E4)}.

As an illustration, consider our previously discussed example with
the coin that can be prepared into different biased states, shown
on the left hand side of Fig.~\ref{coin}, with the following modification.
Imagine that the disk is mounted on a threaded rod that rotates in
a given direction at a constant angular velocity. (Fig.~\ref{fig:csavar})
The threading of the rod is not uniform, but becomes denser and denser
towards the two ends, and the density of threads tends to infinity
as the two extreme disk positions are approached. Due to the rotation
of the rod, the disk, when placed in an arbitrary initial position,
moves upwards with velocity determined by the threading. 
\begin{figure}
\begin{centering}
\includegraphics[width=0.5\columnwidth]{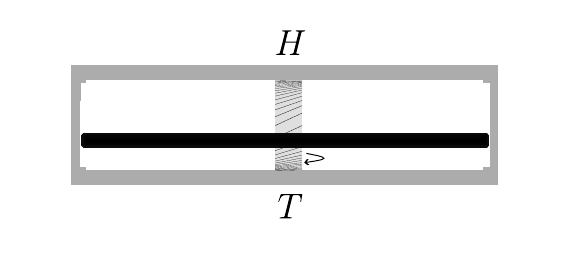}
\par\end{centering}
\caption{Moving disk}
\label{fig:csavar}
\end{figure}
 Meaning, that the coin continuously evolves from being maximally
biased for Heads towards being maximally biased for Tails. One can
imagine continuously many ways of such threading. For example, let
us assume that the resulted change of probabilities of Heads and Tails,
$F_{t}:\left(z_{H},z_{T},0\right)\mapsto\left(F_{t}\left(z_{H}\right),F_{t}\left(z_{T}\right),0\right)$,
is something like this:

\begin{eqnarray}
F_{t}(z_{H}) & = & 0.8-0.6\frac{\frac{\left(0.8-z_{H}\right)}{0.6}}{\frac{\left(0.8-z_{H}\right)}{0.6}+\left(1-\frac{\left(0.8-z_{H}\right)}{0.6}\right)\exp\left(-t\right)}\label{eq:F_t-formula1}\\
F_{t}(z_{T}) & = & 1-F_{t}(z_{H})\label{eq:F_t-formula2}
\end{eqnarray}
This is a physically entirely possible dynamics, and it satisfies
\textbf{(E4)}. But, as it must be obvious from the formula (\ref{eq:F_t-formula1})
itself, it does not preserve convex combination. (For a concrete numeric
example, see the end of Appendix~3.)

Thus, in our general operational-probabilistic framework based on
the assumptions \textbf{(E1)}--\textbf{(E3)}, even if \textbf{(E4)}
is satisfied, the time-evolution of states does not necessarily preserve
convex combination; and, as the above simple example suggests, there
is no reason to assume that experience shows such a restriction on
the possible dynamics. (This is in stark contrast to stipulations
in the GPT framework; see Appendix~3.)

By means of the continuous cross section (\ref{eq:section}), $F_{t}$
generates a one-parameter group of transformations on $\Lambda_{\sigma}$,
$K_{t}=\sigma\circ F_{t}\circ D$, with exactly the same properties:
\begin{itemize}
\item [] $K_{t}:\Lambda_{\sigma}\rightarrow\Lambda_{\sigma}$ is one-to-one
\item [] $K:\mathbb{R}\times\Lambda_{\sigma}\rightarrow\Lambda_{\sigma};\,\left(t,\vec{Z}\right)\mapsto K_{t}\left(\vec{Z}\right)$
is continuous
\item [] $K_{t+s}=K_{s}\circ K_{t}$
\item [] $K_{-t}=K_{t}^{-1}$; consequently, $K_{0}=id_{\Lambda_{\sigma}}$
\end{itemize}
Despite all the mathematical attractiveness of \textbf{(E4)}, and
despite the fact that it is satisfied for many physical systems, \textbf{(E4)}
is not taken to be satisfied by the time-evolution of a system in
general, and so will not be assumed in the remainder of this paper---with
some rare exceptions where we will indicate this. The main reason
is that \textbf{(E4)} is too strong a requirement, the violation of
which doesn't mean that the system cannot have a meaningful time-evolution.
For instance, consider the example depicted in Fig.~\ref{fig:csavar}
with the modification that the density of threads does not go to infinity
at the ends, therefore the disk reaches its extreme positions in finite
time, but there is a mechanism which changes the direction of the
rod's rotation whenever the disk reaches an endpoint. In this way,
the state $\left(z_{H},z_{T},0\right)$ will continuously oscillate
between the two extremes $(0.8,0.2,0)$ and $(0.2,0.8,0)$. Consequently,
the state---in the operational--probabilistic sense---does not
in itself determine its subsequent value, since that will depend on
whether it is in the downward or upward period. So, \textbf{(E4)}
is not satisfied because there does not exist such an $F_{t}:\varphi\left(M,S\right)\rightarrow\varphi\left(M,S\right)$
function. Yet, it is worth emphasizing that there is a definite regularity
according to which temporal development takes place in the underlying
ontology. This underlying temporal development is, in the above example,
deterministic and Markovian (cf. Barandes~2023). The only lesson
is that the operational--probabilistic notion of state is conceptually
different from the notion of Cauchy data for the underlying dynamics.

\section{Ontology \protect\label{sec:Hidden}}

So, at a given time instant, the operational--probabilistic state
fully characterizes the probabilistic behavior of the system with
respect to all possible measurements at that time instant---according
to Theorem~\ref{State--together}. In general, however, such a probabilistic
state admits different underlying ontological pictures even at the
given time instant. Though, as we will see, some of those underlying
ontologies imply further conditions on the observed relative frequencies.
We will mention three important cases, but various combinations are
conceivable.

\paragraph*{Case~1}

In the most general case, without any further restriction on the observed
relative frequencies, the outcomes of the measurements are random
events produced in the measurement process itself. The state $\vec{Z}$
characterizes the system in a dispositional sense: the system has
a propensity to behave in a certain way, that is, to produce a certain
statistics of outcomes, if a given combination of measurements is
performed. In general, the produced statistics is such that, for example,
\begin{equation}
p\left(X_{i}^{r}|a_{r}\wedge a_{r'}\right)\neq p\left(X_{i}^{r}|a_{r}\right)\,\,\,\,\,\,\,\,\,\,\,\,\,\left\{ r,r'\right\} \in\mathfrak{P}\label{eq:kontex}
\end{equation}
meaning that the underlying process is ``contextual'' in the sense
that the system's statistical behavior against measurement $a_{r}$
can be influenced by the performance of another measurement $a_{r'}$.

\paragraph*{Case~2}

In the second case we assume that there is no such cross-influence
in the underlying ontology. That is, the observed relative frequencies
satisfy the following general condition: 
\begin{eqnarray}
 &  & p\left(X_{i_{1}}^{r_{1}}\wedge\ldots\wedge X_{i_{L}}^{r_{L}}|a_{r_{1}}\wedge\ldots\wedge a_{r_{L}}\wedge a_{r'_{1}}\wedge\ldots\wedge a_{r'_{L'}}\right)\nonumber \\
 &  & \,\,\,\,\,\,\,\,\,\,\,\,\,\,\,\,\,\,\,\,\,\,\,\,=p\left(X_{i_{1}}^{r_{1}}\wedge\ldots\wedge X_{i_{L}}^{r_{L}}|a_{r_{1}}\wedge\ldots\wedge a_{r_{L}}\right)\label{eq:2'-1}
\end{eqnarray}
for all $L,L'$, $2\leq L+L'\leq m$, $_{i_{1}\ldots i_{L}}^{r_{1}\ldots r_{L}}\in S$,
and $\left\{ r_{1},\ldots r_{L},r'_{1},\ldots r'_{L'}\right\} \in\mathfrak{P}$.
This does not mean that there cannot be correlation between the outcomes
$X_{i_{1}}^{r_{1}}\wedge\ldots\wedge X_{i_{L}}^{r_{L}}$ and the performance
of measurement $a_{r'_{1}}\wedge\ldots\wedge a_{r'_{L'}}$ It only
means that the correlation must be the consequence of the fact that
the measurement operations $a_{r'_{1}}\wedge\ldots\wedge a_{r'_{L'}}$
and $a_{r_{1}}\wedge\ldots\wedge a_{r_{L}}$ are correlated; $a_{r_{1}}\wedge\ldots\wedge a_{r_{L}}$
must be the common cause responsible for the correlation. Indeed,
(\ref{eq:2'-1}) is equivalent with the following ``screening off''
condition: 
\begin{eqnarray}
 &  & p\left(a_{r'_{1}}\wedge\ldots\wedge a_{r'_{L'}}\wedge X_{i_{1}}^{r_{1}}\wedge\ldots\wedge X_{i_{L}}^{r_{L}}|a_{r_{1}}\wedge\ldots\wedge a_{r_{L}}\right)\nonumber \\
 &  & \,\,\,\,\,\,\,\,\,\,\,\,\,\,\,\,\,\,\,\,\,\,\,\,\,\,\,\,\,\,\,\,=p\left(a_{r'_{1}}\wedge\ldots\wedge a_{r'_{L'}}|a_{r_{1}}\wedge\ldots\wedge a_{r_{L}}\right)\nonumber \\
 &  & \,\,\,\,\,\,\,\,\,\,\,\,\,\,\,\,\,\,\,\,\,\,\,\,\,\,\,\,\,\,\,\,\,\,\,\,\,\times\,p\left(X_{i_{1}}^{r_{1}}\wedge\ldots\wedge X_{i_{L}}^{r_{L}}|a_{r_{1}}\wedge\ldots\wedge a_{r_{L}}\right)\label{eq:2-1}
\end{eqnarray}
for all $L,L'$, $2\leq L+L'\leq m$, $_{i_{1}\ldots i_{L}}^{r_{1}\ldots r_{L}}\in S$,
and $\left\{ r_{1},\ldots r_{L},r'_{1},\ldots r'_{L'}\right\} \in\mathfrak{P}$.

All this means that the state of the system $\vec{Z}$ reflects the
propensities of the system to produce a certain statistics of outcomes
against each possible measurement/measurement combination, separately.
The observed statistics reveals the propensity in question, but, in
general, we are not entitled to say that a single outcome (of a measurement/measurement
combination) reveals an element of reality existing independently
of the measurement(s). As we will see below, that would require a
stronger restriction on the observed frequencies.

\paragraph*{Case~3}

Assume that the underlying ontology contains such elements of reality.
Let us denote them by $\#X_{i}^{r}$ ($_{i}^{r}\in I^{M}$). More
precisely, let $\#X_{i}^{r}$ denote the event that the element of
reality revealed in the outcome $X_{i}^{r}$ is present in the given
run of the experiment. Certainly, every such event $\#X_{i}^{r}$,
even if hidden to us, must have some relative frequency. That is to
say, there must exists a relative frequency function $p'$ on the
extended free Boolean algebra $\mathcal{A}'$ generated by the set
\begin{equation}
G'=\left\{ a_{r}\right\} _{r=1,2,\ldots m}\cup\left\{ X_{i}^{r}\right\} _{_{i}^{r}\in I^{M}}\cup\left\{ \#X_{j}^{s}\right\} _{_{j}^{s}\in I^{M}}\label{eq:generator-1}
\end{equation}
such that
\begin{eqnarray}
p'\,\bigl|_{\mathcal{A}\subset\mathcal{A}'} & = & p\label{eq:egybeesik}
\end{eqnarray}

The ontological assumption that $\#X_{i}^{r}$ is revealed by the
measurement outcome $X_{i}^{r}$ means that 
\begin{eqnarray}
p'\left(X_{i}^{r}|a_{r}\wedge\#X_{i}^{r}\right) & = & 1\label{eq:element1}\\
p'\left(X_{i}^{r}|a_{r}\wedge\neg\#X_{i}^{r}\right) & = & 0\\
p'\left(a_{r}\wedge\#X_{i}^{r}\right) & = & p'\left(a_{r}\right)p'\left(\#X_{i}^{r}\right)
\end{eqnarray}
 Similarly, 
\begin{eqnarray}
p'\left(X_{i_{1}}^{r_{1}}\wedge\ldots\wedge X_{i_{L}}^{r_{L}}|a_{r_{1}}\wedge\ldots\wedge a_{r_{L}}\wedge\#X_{i_{1}}^{r_{1}}\wedge\ldots\wedge\#X_{i_{L}}^{r_{L}}\right)=1\\
p'\left(X_{i_{1}}^{r_{1}}\wedge\ldots\wedge X_{i_{L}}^{r_{L}}|a_{r_{1}}\wedge\ldots\wedge a_{r_{L}}\wedge\neg\left(\#X_{i_{1}}^{r_{1}}\wedge\ldots\wedge\#X_{i_{L}}^{r_{L}}\right)\right)=0\\
p'\left(a_{r_{1}}\wedge\ldots\wedge a_{r_{L}}\wedge\#X_{i_{1}}^{r_{1}}\wedge\ldots\wedge\#X_{i_{L}}^{r_{L}}\right)=p'\left(a_{r_{1}}\wedge\ldots\wedge a_{r_{L}}\right)\nonumber \\
\times\,p'\left(\#X_{i_{1}}^{r_{1}}\wedge\ldots\wedge\#X_{i_{L}}^{r_{L}}\right)\label{eq:element6}
\end{eqnarray}
for all $_{i_{1}\ldots i_{L}}^{r_{1}\ldots r_{L}}\in S$.

Now, (\ref{eq:element1})--(\ref{eq:element6}) and (\ref{eq:egybeesik})
imply that 
\begin{eqnarray}
p'\left(\#X_{i}^{r}\right) & = & p\left(X_{i}^{r}|a_{r}\right)=Z_{i}^{r}\label{eq:tul1}\\
p'\left(\#X_{i_{1}}^{r_{1}}\wedge\ldots\wedge\#X_{i_{L}}^{r_{L}}\right) & = & p\left(X_{i_{1}}^{r_{1}}\wedge\ldots\wedge X_{i_{L}}^{r_{L}}|a_{r_{1}}\wedge\ldots\wedge a_{r_{L}}\right)\nonumber \\
 &  & =Z_{i_{1}\ldots i_{L}}^{r_{1}\ldots r_{L}}\label{eq:tul2}
\end{eqnarray}
for all $_{i}^{r}\in I^{M}$ and $_{i_{1}\ldots i_{L}}^{r_{1}\ldots r_{L}}\in S$.

Notice that on the right hand side of (\ref{eq:tul1})--(\ref{eq:tul2})
we have the components of $\vec{Z}$. At the same time, on the left
hand side of (\ref{eq:tul1})--(\ref{eq:tul2}) we have numbers that
are values of relative frequencies. Therefore the components of $\vec{Z}$
must constitute values of relative frequencies (of the occurrences
of elements of reality $\#X_{i}^{r}$ and $\#X_{i_{1}}^{r_{1}}\ldots\wedge\#X_{i_{L}}^{r_{L}}$).
Since values of relative frequencies satisfy the Kolmogorovian laws
of classical probability, $\vec{Z}$ must be in the so-called classical
correlation polytope (Pitowsky~1989, Ch.~2):
\begin{equation}
\vec{Z}\in c\left(M,S\right)\label{eq:Z*in-1}
\end{equation}
(Equivalently, the components of $\vec{Z}$ must satisfy the corresponding
Bell-type inequalities.) In this case the physical state of the system
admits a more fine-grained characterization than the probabilistic
description provided by $\vec{Z}$: in each run of the experiment
the system can be thought of as being in an underlying physical state
(fixing whether the elements of reality $\#X_{i}^{r}$ and $\#X_{i_{1}}^{r_{1}}\ldots\wedge\#X_{i_{L}}^{r_{L}}$
are present or not) that predetermines the outcome of every possible
measurement, given that the measurement in question is performed.

\medskip{}
Thus, as we have seen from the above examples, the probabilistic--operational
notion of state admits different underlying ontologies, depending
on whether some further conditions are met or not. Note that condition
(\ref{eq:2'-1}) in Case~2 is sometimes called ``no-signaling condition'';
and Case~3 is usually interpreted as ``admitting deterministic non-contextual
hidden variables''. In what follows, we do not assume anything more
about the observed relative frequencies than we stipulated in \textbf{(E1)}--\textbf{(E3)}.
Meaning that we remain within the most general framework of Case~1.

\section{Quantum Representation\protect\label{sec:Quantum-Representation}}

So far in the previous sections, we have stayed within the framework
of classical Kolmogorovian probability theory; including the notion
of state, which is a simple vector constructed from classical conditional
probabilities. Meaning that any physical system---traditionally categorized
as classical or quantum, or ``more general than quantum''---that
can be described in operational terms can be described within classical
Kolmogorovian probability theory. It is worth pointing out that this
is also the case when the system is traditionally described in terms
of the Hilbert space quantum mechanical formalism. That is, all the
empirically expressible content of the quantum mechanical description
can be described in the language of classical Kolmogorovian probabilities;
including what we refer to as ``quantum probability'', given by
the usual trace formula, which can be expressed simply as classical
conditional probability. All this is in perfect alignment with the
Kolmogorovian Censorship Hypothesis.

In the remainder of the paper we will show that the opposite is also
true: anything that can be described in operational terms can be represented
in the Hilbert space quantum mechanical formalism, if we wish. From
assumptions \textbf{(E1)}-\textbf{(E3)} alone, we will show that there
always exists:
\begin{lyxlist}{00.00.0000}
\item [{(Q1)}] a suitable Hilbert space, such that
\item [{(Q2)}] the outcomes of each measurement can be represented by a
system of pairwise orthogonal closed subspaces, spanning the whole
Hilbert space,
\item [{(Q3)}] the states of the system can be represented by pure state
operators with suitable state vectors, and
\item [{(Q4)}] the probabilities of the measurement outcomes, with arbitrarily
high precision, can be reproduced by the usual trace formula of quantum
mechanics.
\end{lyxlist}
Moreover, in the case of real-valued quantities,
\begin{lyxlist}{00.00.0000}
\item [{(Q5)}] each quantity, if we wish, can be associated with a suitable
self-adjoint operator, such that
\item [{(Q6)}] in all states of the system, the expectation value of the
quantity can be reproduced, with arbitrarily high precision, by the
usual trace formula applied to the associated self-adjoint operator,
\item [{(Q7)}] the possible measurement results are the eigenvalues of
the operator,
\item [{(Q8)}] and the corresponding outcome events are represented by
the eigenspaces pertaining to the eigenvalues respectively, according
to the spectral decomposition of the operator in question.
\end{lyxlist}
In preparation for our quantum representation theorem, first we prove
a lemma, which is a straightforward consequence of previous results
in Pitowsky's \emph{Quantum Probability -- Quantum Logic.}
\begin{lem}
\label{Lemma:for=000020all=000020f} For each vector $\vec{f}\in l\left(M,S\right)$
there exists a Hilbert space $^{(\vec{f})}H$ and closed subspaces
$^{(\vec{f})}E_{i}^{r}$ in the subspace lattice $L\left(^{(\vec{f})}H\right)$
and a pure state $P_{\Psi_{\vec{f}}}$ with a suitable unit vector
$\Psi_{\vec{f}}\in{}^{(\vec{f})}H$, such that 
\begin{eqnarray}
f_{i}^{r} & \simeq & tr\left(P_{\Psi_{\vec{f}}}{}^{(\vec{f})}E_{i}^{r}\right)\label{eq:approximate1}\\
f_{i_{1}\ldots i_{L}}^{r_{1}\ldots r_{L}} & \simeq & tr\left(P_{\Psi_{\vec{f}}}\left(^{(\vec{f})}E_{i_{1}}^{r_{1}}\wedge\ldots\wedge^{(\vec{f})}E_{i_{L}}^{r_{L}}\right)\right)\label{eq:approximate2}
\end{eqnarray}
for all $_{i}^{r}\in I^{M}$ and $_{i_{1}\ldots i_{L}}^{r_{1}\ldots r_{L}}\in S$.
\end{lem}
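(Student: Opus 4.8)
The plan is to exploit the fact that quantum correlations automatically satisfy the inequalities defining $l\left(M,S\right)$, and then to realize an \emph{arbitrary} point of that polytope by decomposing it over the vertices and reassembling the vertex realizations into a single Hilbert space. For any pure state $P_\Psi$ and any closed subspaces one has $0\le tr(P_\Psi E_i^r)\le 1$, the conjunction expectations are nonnegative, and since the lattice meet ${}^{(\vec f)}E_{i_1}^{r_1}\wedge\ldots\wedge {}^{(\vec f)}E_{i_L}^{r_L}$ is a subspace of each of its sub-conjunctions, the corresponding projections are ordered and the monotonicity inequalities (\ref{eq:utolso-elotti-1}) hold. Thus every quantum correlation vector already lies in $l\left(M,S\right)$; the lemma is the approximate converse, and the material cited from Pitowsky supplies exactly this converse at the level of vertices.

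First I would write $\vec f=\sum_{\vartheta}\lambda_\vartheta\vec v_\vartheta$ as a convex combination of the vertices of $l\left(M,S\right)$, which are precisely the $\{0,1\}$-vectors satisfying conditions~(a)--(b) recalled above (Pitowsky~1989). A \emph{classical} vertex (equality throughout (b)) is realized \emph{exactly} on the one-dimensional space $\mathbb{C}$ by setting ${}^{(\vartheta)}E_i^r=\mathbb{C}$ when $v_i^r=1$ and ${}^{(\vartheta)}E_i^r=\{0\}$ otherwise, with $\Psi_\vartheta=1$: every single-outcome expectation is then $0$ or $1$ as required, and each conjunction expectation equals the product of the component values, matching $v_{i_1\ldots i_L}^{r_1\ldots r_L}=\prod v_{\ldots}$.

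For a \emph{non-classical} vertex exact realization is impossible, and this is the origin of the $\simeq$ in (\ref{eq:approximate1})--(\ref{eq:approximate2}): if $tr(P_\Psi E)=1$ then $\Psi$ lies in the subspace $E$, so whenever all sub-conjunctions of a conjunction have expectation $1$ the conjunction itself is forced to have expectation $1$, contradicting a vertex value $0$. Here I would invoke Pitowsky's construction, which for every $\delta>0$ produces a pure state whose correlation vector lies within $\delta$ of the given vertex; in the simplest instance two \emph{distinct} lines set at a small angle have meet $\{0\}$, so the conjunction expectation is $0$, while each single-projection expectation exceeds $1-\delta$. I expect this step---the $\delta$-approximation of the genuinely non-classical vertices---to be the main obstacle, but it is exactly the content borrowed from Pitowsky.

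Finally I would reassemble. Put ${}^{(\vec f)}H=\bigoplus_\vartheta {}^{(\vartheta)}H$, define ${}^{(\vec f)}E_i^r=\bigoplus_\vartheta {}^{(\vartheta)}E_i^r$, and take $\Psi_{\vec f}=\bigoplus_\vartheta \sqrt{\lambda_\vartheta}\,\Psi_\vartheta$, which is a unit vector because $\sum_\vartheta\lambda_\vartheta=1$. The key algebraic fact is that meets split over an orthogonal direct sum, $\bigl(\bigoplus_\vartheta A_\vartheta\bigr)\wedge\bigl(\bigoplus_\vartheta B_\vartheta\bigr)=\bigoplus_\vartheta(A_\vartheta\wedge B_\vartheta)$, and likewise for higher meets; hence $tr(P_{\Psi_{\vec f}}\,{}^{(\vec f)}E_i^r)=\sum_\vartheta\lambda_\vartheta\, tr(P_{\Psi_\vartheta}\,{}^{(\vartheta)}E_i^r)$, and the same for every conjunction. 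Therefore the assembled correlation vector is \emph{exactly} $\sum_\vartheta\lambda_\vartheta\tilde{\vec v}_\vartheta$, where $\tilde{\vec v}_\vartheta$ is the realized approximation of $\vec v_\vartheta$; since each $\tilde{\vec v}_\vartheta$ is within $\delta$ of $\vec v_\vartheta$ componentwise, the assembled vector is within $\delta$ of $\vec f$, which is (\ref{eq:approximate1})--(\ref{eq:approximate2}). Choosing $\delta$ small completes the plan.
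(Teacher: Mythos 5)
Your proposal is correct, but it takes a more constructive route than the paper. The paper's own proof is essentially a one-line appeal to (a straightforward generalization of) Pitowsky's theorem that the quantum polytope $q\left(M,S\right)$ --- the set of vectors realizable with exact equality --- is a dense convex subset of $l\left(M,S\right)$ containing its interior and missing only the finitely many non-classical vertices; the approximate realizability of an arbitrary $\vec{f}\in l\left(M,S\right)$ is then immediate. You instead re-derive that density: you decompose $\vec{f}$ over the vertices of $l\left(M,S\right)$, realize classical vertices exactly on $\mathbb{C}$, invoke Pitowsky only for the $\delta$-approximation of the non-classical vertices, and reassemble by an orthogonal direct sum with amplitudes $\sqrt{\lambda_{\vartheta}}$, using the (correct) facts that lattice meets and projections split over direct sums so that the assembled correlation vector is exactly the convex combination of the realized vertex vectors. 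Both arguments lean on Pitowsky for the genuinely hard step, so there is no gap by the paper's own standard of rigor; what your version buys is an explicit account of where the $\simeq$ is forced (exact realization of a non-classical vertex is impossible, since trace $1$ on every sub-conjunction puts $\Psi$ in the intersection) and of the fact that the error is confined to the non-classical vertices. It is worth noting that your direct-sum-over-vertices construction is precisely the machinery the paper deploys later, in Steps I and II of the proof of Theorem~\ref{Theorem:Main}, so your proof in effect front-loads part of that argument into the lemma.
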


\begin{proof}
It follows from a straightforward generalization of a theorem in (Pitowsky~1989,
p.~65) that the so called quantum polytope $q\left(M,S\right)$,
constituted by the vectors satisfying (\ref{eq:approximate1})--(\ref{eq:approximate2})
with exact equality, is a dense convex subset of $l\left(M,S\right)$;
it is essentially $l\left(M,S\right)$ save for some points on the
boundary of $l\left(M,S\right)$, namely the finite number of non-classical
vertices. $q\left(M,S\right)$ contains the interior of $l\left(M,S\right)$.
This means that arbitrary vector $\vec{f}\in l\left(M,S\right)$ can
be regarded as ``an element of'' $q\left(M,S\right)$ \emph{with
arbitrary precision}. That is, there exists a Hilbert space $^{(\vec{f})}H$
and for each $_{i}^{r}\in I^{M}$ a closed subspace/projector $^{(\vec{f})}E_{i}^{r}$
in the subspace/projector lattice $L\left(^{(\vec{f})}H\right)$ and
a suitable unit vector $\Psi_{\vec{f}}\in{}^{(\vec{f})}H$, such that
the approximate equalities (\ref{eq:approximate1})--(\ref{eq:approximate2})
hold.
\end{proof}
\begin{thm}
\label{Theorem:Main}There exists a Hilbert space $H$ and for each
outcome event $X_{i}^{r}$ a closed subspace/projector $E_{i}^{r}$
in the subspace/projector lattice $L\left(H\right)$, such that for
each state $\vec{Z}$ of the system there exists a pure state $P_{\Psi_{\vec{Z}}}$
with a suitable unit vector $\Psi_{\vec{Z}}\in H$, such that 
\begin{eqnarray}
Z_{i}^{r} & \simeq & tr\left(P_{\Psi_{\vec{Z}}}E_{i}^{r}\right)\label{eq:approximate-tetel-1}\\
Z_{i_{1}\ldots i_{L}}^{r_{1}\ldots r_{L}} & \simeq & tr\left(P_{\Psi_{\vec{Z}}}\left(E_{i_{1}}^{r_{1}}\wedge\ldots\wedge E_{i_{L}}^{r_{L}}\right)\right)\label{eq:approximate-tetel-2}
\end{eqnarray}
and 
\begin{eqnarray}
 & E_{i}^{r}\bot\,E_{j}^{r} & \,\,\,\,\,\,i\neq j\label{eq:orthogonal}\\
 & \stackrel[k=1]{n_{r}}{\vee}E_{k}^{r}=H\label{eq:kiadja}
\end{eqnarray}
for all $_{i}^{r},{}_{j}^{r}\in I^{M}$ and $_{i_{1}\ldots i_{L}}^{r_{1}\ldots r_{L}}\in S$.
\end{thm}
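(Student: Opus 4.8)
The plan is to exploit the deterministic structure of the vertices $\mathcal{W}=\{\vec{w}_\vartheta\}_{\vartheta\in\Theta}$ of $\varphi\left(M,S\right)$ established above. I take $H$ to be the finite-dimensional Hilbert space with an orthonormal basis $\{\chi_\vartheta\}_{\vartheta\in\Theta}$ indexed by the vertices, and for each outcome event $X_i^r$ I define $E_i^r$ to be the coordinate projector onto
\[
\mathrm{span}\{\chi_\vartheta \mid w_{\vartheta}{}_{i}^{r}=1\}.
\]
For a given state $\vec{Z}$ I fix any convex decomposition $\vec{Z}=\sum_{\vartheta\in\Theta}\lambda_\vartheta\vec{w}_\vartheta$ into vertices (for instance the one supplied by the section $\sigma$) and set $\Psi_{\vec{Z}}=\sum_{\vartheta\in\Theta}\sqrt{\lambda_\vartheta}\,\chi_\vartheta$, which is a unit vector since $\sum_\vartheta\lambda_\vartheta=1$. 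Note that the $E_i^r$ are fixed once and for all, whereas only $\Psi_{\vec{Z}}$ depends on the state.

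First I would verify (Q2). By vertex property (d), for every measurement $r$ each vertex $\vartheta$ assigns $w_{\vartheta}{}_{i}^{r}=1$ for exactly one index $i$. Hence for $i\neq j$ the index sets $\{\vartheta\mid w_{\vartheta}{}_{i}^{r}=1\}$ and $\{\vartheta\mid w_{\vartheta}{}_{j}^{r}=1\}$ are disjoint, so $E_i^r\bot E_j^r$, which is (\ref{eq:orthogonal}); and their union over $i$ exhausts $\Theta$, so $\stackrel[k=1]{n_r}{\vee}E_k^r=H$, which is (\ref{eq:kiadja}).

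Next I compute the traces. Since all the $E_i^r$ are diagonal in the basis $\{\chi_\vartheta\}$ they mutually commute, so the lattice meet $E_{i_1}^{r_1}\wedge\ldots\wedge E_{i_L}^{r_L}$ is the coordinate projector onto $\mathrm{span}\{\chi_\vartheta\mid w_{\vartheta}{}_{i_j}^{r_j}=1\text{ for all }j\}$, and a direct evaluation gives
\[
tr\left(P_{\Psi_{\vec{Z}}}\bigl(E_{i_1}^{r_1}\wedge\ldots\wedge E_{i_L}^{r_L}\bigr)\right)=\sum_{\vartheta:\,w_{\vartheta}{}_{i_j}^{r_j}=1\ \forall j}\lambda_\vartheta,
\]
and likewise $tr(P_{\Psi_{\vec{Z}}}E_i^r)=\sum_{\vartheta:\,w_{\vartheta}{}_{i}^{r}=1}\lambda_\vartheta$. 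To identify these with the components of $\vec{Z}$ I would show that at every vertex the conjunction component factorizes, $w_{\vartheta}{}_{i_1\ldots i_L}^{r_1\ldots r_L}=\prod_j w_{\vartheta}{}_{i_j}^{r_j}$. Substituting this into $Z_{i_1\ldots i_L}^{r_1\ldots r_L}=\sum_\vartheta\lambda_\vartheta\,w_{\vartheta}{}_{i_1\ldots i_L}^{r_1\ldots r_L}$ reproduces exactly the trace above, and $Z_i^r=\sum_\vartheta\lambda_\vartheta\,w_{\vartheta}{}_{i}^{r}$ matches the single-outcome trace; thus (\ref{eq:approximate-tetel-1})--(\ref{eq:approximate-tetel-2}) hold, in fact with exact equality, so the weaker $\simeq$ is automatic.

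The step I expect to be the crux is the vertex factorization $w_{\vartheta}{}_{i_1\ldots i_L}^{r_1\ldots r_L}=\prod_j w_{\vartheta}{}_{i_j}^{r_j}$, since it is precisely what lets the fixed commuting family $\{E_i^r\}$ reproduce the joint statistics through the lattice meet. One direction is immediate from property (b), which forces $w_{\vartheta}{}_{i_1\ldots i_L}^{r_1\ldots r_L}=1$ only when every $w_{\vartheta}{}_{i_j}^{r_j}=1$; the converse needs properties (d)--(e) together, using that (e) selects a unique surviving conjunction for each $\left\{r_1,\ldots r_L\right\}\in\mathfrak{P}$ and that, by (b) applied recursively down to singletons, its lower indices must coincide with the unique outcomes selected by (d). The degenerate indices in $S_0$ are covered by property (c): there $\prod_j w_{\vartheta}{}_{i_j}^{r_j}=0$ (two distinct outcomes of one measurement), matching both $w_{\vartheta}{}_{i_1\ldots i_L}^{r_1\ldots r_L}=0$ and the vanishing trace coming from the now orthogonal, hence trivially intersecting, projectors. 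By contrast, Lemma~\ref{Lemma:for=000020all=000020f} only yields a separate Hilbert space and a generally non-orthogonal projector system for each individual vector; the genuine gain here is that the non-contextual, deterministic character of the vertices permits a single orthogonal resolution for each measurement to serve all states at once.
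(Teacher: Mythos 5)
Your proof is correct, and it takes a genuinely different and more economical route than the paper's. The paper's proof first invokes Lemma~\ref{Lemma:for=000020all=000020f} (the Pitowsky quantum--polytope embedding) to represent each vertex $\vec{w}_{\vartheta}$ approximately in its own Hilbert space ${}^{\vartheta}\tilde{H}$, then tensors with auxiliary factors $H^{n_{1}}\otimes\ldots\otimes H^{n_{m}}$ to force the orthogonality (\ref{eq:orthogonal}) within each measurement, takes the direct sum over $\Theta$, and finally repairs (\ref{eq:kiadja}) via an orthomodularity argument that absorbs the complement of $\vee_{i}\,^{*}E_{i}^{r}$ into one chosen outcome; the result is only the approximate equalities $\simeq$. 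You instead observe that, by properties (b), (d) and (e) of the vertex characterization following Lemma~\ref{The-vertices-of-fi}, every vertex of $\varphi\left(M,S\right)$ factorizes, $w_{\vartheta}{}_{i_{1}\ldots i_{L}}^{r_{1}\ldots r_{L}}=\prod_{j}w_{\vartheta}{}_{i_{j}}^{r_{j}}$, i.e.\ is classical in Pitowsky's sense. Your argument for this crux is sound: the unique surviving conjunction of (e) must, by (b) pushed down to singletons, sit over the unique surviving outcomes of (d), every other tuple gives $0$ on both sides, and $S_{0}$ is handled by (c); the only tacit step is that sub-conjunctions of indices in $\mathfrak{P}$ are again in $\mathfrak{P}$, which follows from \textbf{(E1)} together with (\ref{eq:nem-a}) and (\ref{eq:L-be_tartozas-1-2}), and which the paper itself uses implicitly. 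This lets you realize everything by commuting coordinate projectors on a space with basis indexed by $\Theta$, making (\ref{eq:orthogonal})--(\ref{eq:kiadja}) immediate and yielding (\ref{eq:approximate-tetel-1})--(\ref{eq:approximate-tetel-2}) with exact equality, with no appeal to Lemma~\ref{Lemma:for=000020all=000020f} at all. What you give up is only that the paper's $E_{i}^{r}$ inherit a possibly non-commutative structure from the ${}^{\vartheta}\tilde{E}_{i}^{r}$ while yours all commute; nothing in the theorem, nor in Theorems~\ref{trace} and \ref{spectral} which build on it, requires non-commutativity, so this is a legitimate simplification that, if anything, sharpens the paper's own moral that the Hilbert-space dressing is optional.
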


\begin{proof}
The proof is essentially based on Lemma~\ref{The-vertices-of-fi}
and proceeds in two major steps.

\subsubsection*{Step~I}

Consider the vertices of $\varphi\left(M,S\right)$, $\left\{ \vec{w}_{\vartheta}\right\} _{\vartheta\in\Theta}$.
Each $\vec{w}_{\vartheta}$ is a vector in $l\left(M,S\right).$ Therefore,
due to Lemma~\ref{Lemma:for=000020all=000020f}, for each $\vec{w}_{\vartheta}$
there exists a Hilbert space $^{\vartheta}\tilde{H}$ and closed subspaces
$^{\vartheta}\tilde{E}_{i}^{r}$ in the subspace lattice $L\left(^{\vartheta}\tilde{H}\right)$
and a pure state $P_{\tilde{\Psi}_{\vartheta}}$ with a suitable unit
vector $\tilde{\Psi}_{\vartheta}\in{}^{\vartheta}\tilde{H}$, such
that 
\begin{eqnarray}
w_{\vartheta}{}_{i}^{r} & \simeq & tr\left(P_{\tilde{\Psi}_{\vartheta}}{}^{\vartheta}\tilde{E}_{i}^{r}\right)\label{eq:approximate1-1}\\
w_{\vartheta}{}_{i_{1}\ldots i_{L}}^{r_{1}\ldots r_{L}} & \simeq & tr\left(P_{\tilde{\Psi}_{\vartheta}}\left(^{\vartheta}\tilde{E}_{i_{1}}^{r_{1}}\wedge\ldots\wedge{}^{\vartheta}\tilde{E}_{i_{L}}^{r_{L}}\right)\right)\label{eq:approximate2-1}
\end{eqnarray}
for all $_{i}^{r}\in I^{M}$ and $_{i_{1}\ldots i_{L}}^{r_{1}\ldots r_{L}}\in S$.

Now, let 
\begin{eqnarray}
^{\vartheta}H & = & H^{n_{1}}\otimes H^{n_{2}}\otimes\ldots\otimes H^{n_{m}}\otimes{}^{\vartheta}\tilde{H}\label{eq:large_H}
\end{eqnarray}
where $H^{n_{1}},H^{n_{2}},\ldots H^{n_{m}}$ are Hilbert spaces of
dimension $n_{1},n_{2},\ldots n_{m}$. Let $e_{1}^{r},e_{2}^{r},\ldots e_{n_{r}}^{r}$
be an orthonormal basis in $H^{n_{r}}$. Define the corresponding
subspace for each event $X_{i}^{r}$ as follows:
\begin{equation}
^{\vartheta}E_{i}^{r}=H^{n_{1}}\otimes\ldots H^{n_{r-1}}\otimes\left[e_{i}^{r}\right]\otimes H^{n_{r+1}}\ldots\otimes H^{n_{m}}\otimes{}^{\vartheta}\tilde{E}_{i}^{r}\label{eq:E_ri}
\end{equation}
where $\left[e_{i}^{r}\right]$ stands for the one-dimensional subspace
spanned by $e_{i}^{r}$ in $H^{n_{r}}$. Notice that, for all $r$,
\begin{equation}
^{\vartheta}E_{i}^{r}\bot{}^{\vartheta}E_{j}^{r}\,\,\,\,\,\,\,\,\,\,\text{ if \,\,\,\,}i\neq j\label{eq:orthogonality-teta}
\end{equation}
due to the fact that $e_{1}^{r},e_{2}^{r},\ldots e_{n_{r}}^{r}$ is
an orthonormal basis in $H^{n_{r}}$.

Due to Lemma~\ref{The-vertices-of-fi}, for all $1\leq r\leq m$
there is exactly one $1\leq{}^{\vartheta}i_{*}^{r}\leq n_{r}$ such
that $w_{\vartheta}{}_{^{\vartheta}i_{*}^{r}}^{r}=1$. This makes
it possible to define the state vector in $^{\vartheta}H$ as the
following unit vector:
\[
\Psi_{\vartheta}=e_{^{\vartheta}i_{*}^{1}}^{1}\otimes e_{^{\vartheta}i_{*}^{2}}^{2}\otimes\ldots\otimes e_{^{\vartheta}i_{*}^{r}}^{r}\otimes\ldots\otimes e_{^{\vartheta}i_{*}^{m}}^{m}\otimes\tilde{\Psi}_{\vartheta}
\]

Now, it is easily verifiable that 
\begin{eqnarray}
w_{\vartheta}{}_{i}^{r} & \simeq & tr\left(P_{\Psi_{\vartheta}}{}^{\vartheta}E_{i}^{r}\right)\label{eq:approximate-teta-1}\\
w_{\vartheta}{}_{i_{1}\ldots i_{L}}^{r_{1}\ldots r_{L}} & \simeq & tr\left(P_{\Psi_{\vartheta}}\left(^{\vartheta}E_{i_{1}}^{r_{1}}\wedge\ldots\wedge{}^{\vartheta}E_{i_{L}}^{r_{L}}\right)\right)\label{eq:approximate-teta-2}
\end{eqnarray}
for all $_{i}^{r}\in I^{M}$, $_{i_{1}\ldots i_{L}}^{r_{1}\ldots r_{L}}\in S$,
and for all $\vartheta\in\Theta$. For example:

\noindent If $w_{\vartheta}{}_{i}^{r}=1$, and so $i=^{\vartheta}i_{*}^{r}$,
then
\begin{eqnarray*}
tr\left(P_{\Psi_{\vartheta}}{}^{\vartheta}E_{i}^{r}\right)=\underbrace{tr\left(P_{e_{^{\vartheta}i_{*}^{1}}^{1}}H^{n_{1}}\right)}_{1}tr\left(P_{e_{^{\vartheta}i_{*}^{2}}^{2}}H^{n_{2}}\right)\ldots\underbrace{tr\left(P_{e_{^{\vartheta}i_{*}^{r}}^{r}}\left[e_{^{\vartheta}i_{*}^{r}}^{r}\right]\right)}_{1}\ldots\\
tr\left(P_{e_{^{\vartheta}i_{*}^{m}}^{m}}H^{n_{m}}\right)\underbrace{tr\left(P_{\tilde{\Psi}_{\vartheta}}{}^{\vartheta}\tilde{E}_{i}^{r}\right)}_{\,\,\,\,\,\,\,\,\,\,\simeq\,w_{\vartheta}{}_{i}^{r}=1}\simeq1
\end{eqnarray*}

\noindent If $w_{\vartheta}{}_{i}^{r}=0$, and so $i\neq{}^{\vartheta}i_{*}^{r}$,
then
\begin{eqnarray*}
tr\left(P_{\Psi_{\vartheta}}{}^{\vartheta}E_{i}^{r}\right)=\underbrace{tr\left(P_{e_{^{\vartheta}i_{*}^{1}}^{1}}H^{n_{1}}\right)}_{1}tr\left(P_{e_{^{\vartheta}i_{*}^{2}}^{2}}H^{n_{2}}\right)\ldots\underbrace{tr\left(P_{e_{^{\vartheta}i_{*}^{r}}^{r}}\left[e_{i\neq{}^{\vartheta}i_{*}^{r}}^{r}\right]\right)}_{0}\ldots\\
tr\left(P_{e_{^{\vartheta}i_{*}^{m}}^{m}}H^{n_{m}}\right)\underbrace{tr\left(P_{\tilde{\Psi}_{\vartheta}}{}^{\vartheta}\tilde{E}_{i}^{r}\right)}_{\,\,\,\,\,\,\,\,\,\,\,\simeq\,w_{\vartheta}{}_{i}^{r}=0}=0
\end{eqnarray*}

\noindent Similarly, if $w_{\vartheta}{}_{i_{1}}^{r_{1}}=0$, $w_{\vartheta}{}_{i_{2}}^{r_{2}}=1$
then
\begin{eqnarray*}
tr\left(P_{\Psi_{\vartheta}}\left(^{\vartheta}E_{i_{1}}^{r_{1}}\wedge{}^{\vartheta}E_{i_{2}}^{r_{2}}\right)\right)=\underbrace{tr\left(P_{e_{^{\vartheta}i_{*}^{1}}^{1}}(H^{n_{1}}\wedge H^{n_{1}})\right)}_{1}\ldots\\
\underbrace{tr\left(P_{e_{^{\vartheta}i_{*}^{r_{1}}}^{r_{1}}}\left(\left[e_{i_{1}\neq^{\vartheta}i_{*}^{r_{1}}}^{r_{1}}\right]\wedge H^{n_{r_{1}}}\right)\right)}_{0}\ldots\underbrace{tr\left(P_{e_{i_{*}^{r_{2}}}^{r_{2}}}\left(H^{n_{r_{2}}}\wedge\left[e_{i_{2}=^{\vartheta}i_{*}^{r_{2}}}^{r_{2}}\right]\right)\right)}_{1}\ldots\\
tr\left(P_{e_{^{\vartheta}i_{*}^{m}}^{m}}(H^{n_{m}}\wedge H^{n_{m}})\right)\underbrace{tr\left(P_{\tilde{\Psi}_{\vartheta}}\left(^{\vartheta}\tilde{E}_{i_{1}}^{r_{1}}\wedge{}^{\vartheta}\tilde{E}_{i_{2}}^{r_{2}}\right)\right)}_{\,\,\,\,\,\,\,\,\,\,\,\,\,\,\,\,\,\,\,\,\,\simeq\,w_{\vartheta}{}_{i_{1}i_{2}}^{r_{1}r_{2}}=0}=0
\end{eqnarray*}
in accordance with that $^{\vartheta}w{}_{i_{1}i_{2}}^{r_{1}r_{2}}$
must be equal to $0$, due to (\ref{eq:utolso-elotti-1}).

\noindent If $w_{\vartheta}{}_{i_{1}}^{r_{1}}=1$, $w_{\vartheta}{}_{i_{2}}^{r_{2}}=1$
then
\begin{eqnarray*}
tr\left(P_{\Psi_{\vartheta}}\left(^{\vartheta}E_{i_{1}}^{r_{1}}\wedge{}^{\vartheta}E_{i_{2}}^{r_{2}}\right)\right)=\underbrace{tr\left(P_{e_{^{\vartheta}i_{*}^{1}}^{1}}(H^{n_{1}}\wedge H^{n_{1}})\right)}_{1}\ldots\\
\underbrace{tr\left(P_{e_{^{\vartheta}i_{*}^{r_{1}}}^{r_{1}}}\left(\left[e_{i_{1}={}^{\vartheta}i_{*}^{r_{1}}}^{r_{1}}\right]\wedge H^{n_{r_{1}}}\right)\right)}_{1}\ldots\underbrace{tr\left(P_{e_{i_{*}^{r_{2}}}^{r_{2}}}\left(H^{n_{r_{2}}}\wedge\left[e_{i_{2}=^{\vartheta}i_{*}^{r_{2}}}^{r_{2}}\right]\right)\right)}_{1}\ldots\\
tr\left(P_{e_{^{\vartheta}i_{*}^{m}}^{m}}(H^{n_{m}}\wedge H^{n_{m}})\right)\underbrace{tr\left(P_{\tilde{\Psi}_{\vartheta}}\left(^{\vartheta}\tilde{E}_{i_{1}}^{r_{1}}\wedge{}^{\vartheta}\tilde{E}_{i_{2}}^{r_{2}}\right)\right)}_{\,\,\,\,\,\,\,\,\,\,\,\,\,\,\,\,\,\,\,\,\,\simeq\,w_{\vartheta}{}_{i_{1}i_{2}}^{r_{1}r_{2}}}\simeq w_{\vartheta}{}_{i_{1}i_{2}}^{r_{1}r_{2}}
\end{eqnarray*}

\subsubsection*{Step~II}

Consider an arbitrary state $\vec{Z}$. Since $\vec{Z}\in\varphi\left(M,S\right)$,
it can be decomposed in terms of the vertices $\left\{ \vec{w}_{\vartheta}\right\} _{\vartheta\in\Theta}$
in the fashion of (\ref{eq:Z-composition}) with some coefficients
$\left\{ \lambda_{\vartheta}\right\} _{\vartheta\in\Theta}$.

Now we construct the Hilbert space $H$ and the state vector $\Psi_{\vec{Z}}$:

\begin{eqnarray}
H & = & \underset{\vartheta\in\Theta}{\mathop{\oplus}}^{\vartheta}H\label{eq:linear=000020sum=000020space}\\
\Psi_{\vec{Z}} & = & \underset{\vartheta\in\Theta}{\mathop{\oplus}}\sqrt{\lambda_{\vartheta}}\Psi_{\vartheta}\label{eq:state-vector-construction}
\end{eqnarray}
Obviously, 
\[
\left\langle \Psi_{\vec{Z}},\Psi_{\vec{Z}}\right\rangle =\sum_{\vartheta\in\Theta}\lambda_{\vartheta}\left\langle \Psi_{\vartheta},\Psi_{\vartheta}\right\rangle =1
\]
The subspaces $E_{i}^{r}$ representing the outcome events will be
defined further below. First we consider the following subspaces of
$H$:
\begin{eqnarray*}
^{*}E_{i}^{r} & = & \underset{\vartheta\in\Theta}{\mathop{\oplus}}^{\vartheta}E_{i}^{r}
\end{eqnarray*}
Since
\begin{eqnarray*}
tr\left(P_{\Psi_{\vec{Z}}}{}^{*}E_{i}^{r}\right) & = & \left\langle \Psi_{\vec{Z}},{}^{*}E_{i}^{r}\Psi_{\vec{Z}}\right\rangle =\sum_{\vartheta\in\Theta}\left\langle \sqrt{\lambda_{\vartheta}}\Psi_{\vartheta},^{\vartheta}E_{i}^{r}\sqrt{\lambda_{\vartheta}}\Psi_{\vartheta}\right\rangle \\
 &  & =\sum_{\vartheta\in\Theta}\lambda_{\vartheta}tr\left(P_{\Psi_{\vartheta}}{}^{\vartheta}E_{i}^{r}\right)
\end{eqnarray*}

\begin{eqnarray*}
tr\left(P_{\Psi_{\vec{Z}}}\left(^{*}E_{i_{1}}^{r_{1}}\wedge\ldots\wedge{}^{*}E_{i_{L}}^{r_{L}}\right)\right) & = & \left\langle \Psi_{\vec{Z}},\left(^{*}E_{i_{1}}^{r_{1}}\wedge\ldots\wedge{}^{*}E_{i_{L}}^{r_{L}}\right)\Psi_{\vec{Z}}\right\rangle \\
 & = & \sum_{\vartheta\in\Theta}\left\langle \sqrt{\lambda_{\vartheta}}\Psi_{\vartheta},\left(^{\vartheta}E_{i_{1}}^{r_{1}}\wedge\ldots\wedge{}^{\vartheta}E_{i_{L}}^{r_{L}}\right)\sqrt{\lambda_{\vartheta}}\Psi_{\vartheta}\right\rangle \\
 & = & \sum_{\vartheta\in\Theta}\lambda_{\vartheta}tr\left(P_{\Psi_{\vartheta}}\left(^{\vartheta}E_{i_{1}}^{r_{1}}\wedge\ldots\wedge{}^{\vartheta}E_{i_{L}}^{r_{L}}\right)\right)
\end{eqnarray*}
from (\ref{eq:approximate-teta-1})--(\ref{eq:approximate-teta-2})
and (\ref{eq:Z-composition}) we have
\begin{eqnarray}
Z_{i}^{r} & \simeq & tr\left(P_{\Psi_{\vec{Z}}}{}^{*}E_{i}^{r}\right)\label{eq:approximate-tetel-1-1}\\
Z_{i_{1}\ldots i_{L}}^{r_{1}\ldots r_{L}} & \simeq & tr\left(P_{\Psi_{\vec{Z}}}\left(^{*}E_{i_{1}}^{r_{1}}\wedge\ldots\wedge{}^{*}E_{i_{L}}^{r_{L}}\right)\right)\label{eq:approximate-tetel-2-1}
\end{eqnarray}
Also, as direct sum preserves orthogonality, from (\ref{eq:orthogonality-teta})
we have
\begin{equation}
^{*}E_{i}^{r}\bot{}^{*}E_{j}^{r}\text{ \,\,\,\,\,if \,\,\,}i\neq j\label{eq:orthogonal-1}
\end{equation}

For all $1\leq r\leq m$, let $^{*}E_{i_{0}}^{r}\in\left\{ ^{*}E_{1}^{r},{}^{*}E_{2}^{r},\ldots{}^{*}E_{n_{r}}^{r}\right\} $
be arbitrarily chosen, and let $^{*}E_{\bot}^{r}=\left(\stackrel[i=1]{n_{r}}{\vee}{}^{*}E_{i}^{r}\right)^{\bot}=\stackrel[i=1]{n_{r}}{\wedge}\left(^{*}E_{i}^{r}\right)^{\bot}$.
We define the subspaces representing the outcome events as follows:
\begin{eqnarray}
E_{i}^{r} & = & \begin{cases}
^{*}E_{i}^{r} & i\neq i_{0}\\
^{*}E_{i_{0}}^{r}\vee{}^{*}E_{\bot}^{r} & i=i_{0}
\end{cases}\label{eq:E-definicio}
\end{eqnarray}
Obviously, (\ref{eq:orthogonal-1}) implies $^{*}E_{i_{0}}^{r}\leq\stackrel[i\neq i_{0}]{}{\wedge}\left(^{*}E_{i}^{r}\right)^{\bot}$.
Due to the orthomodularity of the subspace lattice $L(H)$, we have
\[
^{*}E_{i_{0}}^{r}\vee\left(\underbrace{\left(^{*}E_{i_{0}}^{r}\right)^{\bot}\wedge\left(\stackrel[i\neq i_{0}]{}{\wedge}\left(^{*}E_{i}^{r}\right)^{\bot}\right)}_{^{*}E_{\bot}^{r}}\right)=\stackrel[i\neq i_{0}]{}{\wedge}\left(^{*}E_{i}^{r}\right)^{\bot}
\]
meaning that
\[
E_{i_{0}}^{r}=\stackrel[i\neq i_{0}]{}{\wedge}\left(^{*}E_{i}^{r}\right)^{\bot}
\]
Therefore, taking into account (\ref{eq:orthogonal-1}) and (\ref{eq:E-definicio}),
\begin{equation}
E_{i}^{r}\bot E_{j}^{r}\text{ \,\,\,\,\,if \,\,\,}i\neq j\label{eq:orthogonal-1-1}
\end{equation}
Also, it is obviously true that 
\begin{equation}
\stackrel[i=1]{n_{r}}{\vee}E_{i}^{r}=H\label{eq:kiteszi-1}
\end{equation}
Both (\ref{eq:orthogonal-1-1}) and (\ref{eq:kiteszi-1}) hold for
all $1\leq r\leq m$. There remains to show (\ref{eq:approximate-tetel-1})--(\ref{eq:approximate-tetel-2}).

It follows from (\ref{eq:E-definicio}) that
\[
E_{i}^{r}\geq{}^{*}E_{i}^{r}
\]
for all $_{i}^{r}\in I^{M}$. Similarly, 
\[
E_{i_{1}}^{r_{1}}\wedge\ldots\wedge E_{i_{L}}^{r_{L}}\geq{}^{*}E_{i_{1}}^{r_{1}}\wedge\ldots\wedge{}^{*}E_{i_{L}}^{r_{L}}
\]
for all $_{i_{1}\ldots i_{L}}^{r_{1}\ldots r_{L}}\in S$. Therefore,
for all $\vec{Z}\in\varphi\left(M,S\right)$,
\begin{equation}
\left\langle \Psi_{\vec{Z}},E_{i}^{r}\Psi_{\vec{Z}}\right\rangle \geq\left\langle \Psi_{\vec{Z}},{}^{*}E_{i}^{r}\Psi_{\vec{Z}}\right\rangle \label{eq:minden_nagyobb-1}
\end{equation}
and 
\begin{equation}
\left\langle \Psi_{\vec{Z}},E_{i_{1}}^{r_{1}}\wedge\ldots\wedge E_{i_{L}}^{r_{L}}\Psi_{\vec{Z}}\right\rangle \geq\left\langle \Psi_{\vec{Z}},{}^{*}E_{i_{1}}^{r_{1}}\wedge\ldots\wedge{}^{*}E_{i_{L}}^{r_{L}}\Psi_{\vec{Z}}\right\rangle \label{eq:minden_nagyobb-2}
\end{equation}

Now, (\ref{eq:osszeg=00003D1}) and (\ref{eq:approximate-tetel-1-1})
imply that
\[
\sum_{\begin{array}{c}
i\\
\left(_{i}^{r}\in I^{M}\right)
\end{array}}\left\langle \Psi_{\vec{Z}},{}^{*}E_{i}^{r}\Psi_{\vec{Z}}\right\rangle \simeq1
\]
At the same time, taking into account (\ref{eq:orthogonal-1-1})--(\ref{eq:kiteszi-1}),
we have
\begin{equation}
1=\sum_{\begin{array}{c}
i\\
\left(_{i}^{r}\in I^{M}\right)
\end{array}}\left\langle \Psi_{\vec{Z}},E_{i}^{r}\Psi_{\vec{Z}}\right\rangle \geq\sum_{\begin{array}{c}
i\\
\left(_{i}^{r}\in I^{M}\right)
\end{array}}\left\langle \Psi_{\vec{Z}},{}^{*}E_{i}^{r}\Psi_{\vec{Z}}\right\rangle \simeq1\label{eq:osszeg_nagyobb-1}
\end{equation}
From (\ref{eq:minden_nagyobb-1}) and (\ref{eq:osszeg_nagyobb-1}),
therefore,
\begin{equation}
tr\left(P_{\Psi_{\vec{Z}}}E_{i}^{r}\right)\simeq tr\left(P_{\Psi_{\vec{Z}}}{}^{*}E_{i}^{r}\right)\label{eq:bovebb-trace-1}
\end{equation}
Similarly, on the one hand, (\ref{eq:1d}) and (\ref{eq:approximate-tetel-2-1})
imply that
\begin{equation}
\sum_{\begin{array}{c}
i_{1},i_{2}\ldots i_{L}\\
\left(_{i_{1}\ldots i_{L}}^{r_{1}\ldots r_{L}}\in S\right)
\end{array}}\left\langle \Psi_{\vec{Z}},{}^{*}E_{i_{1}}^{r_{1}}\wedge\ldots\wedge{}^{*}E_{i_{L}}^{r_{L}}\Psi_{\vec{Z}}\right\rangle \simeq1\label{eq:konjuncio_is_1}
\end{equation}
On the other hand, $\left\{ E_{i_{1}}^{r_{1}}\wedge\ldots\wedge E_{i_{L}}^{r_{L}}\right\} _{\begin{array}{c}
i_{1},i_{2}\ldots i_{L}\\
\left(_{i_{1}\ldots i_{L}}^{r_{1}\ldots r_{L}}\in S\right)
\end{array}}$is an orthogonal system of subspaces. Therefore,
\begin{align}
 & 1\geq\sum_{\begin{array}{c}
i_{1},i_{2}\ldots i_{L}\\
\left(_{i_{1}\ldots i_{L}}^{r_{1}\ldots r_{L}}\in S\right)
\end{array}}\left\langle \Psi_{\vec{Z}},E_{i_{1}}^{r_{1}}\wedge\ldots\wedge E_{i_{L}}^{r_{L}}\Psi_{\vec{Z}}\right\rangle \nonumber \\
 & \,\,\,\,\,\,\,\,\,\,\,\,\,\,\,\,\geq\sum_{\begin{array}{c}
i_{1},i_{2}\ldots i_{L}\\
\left(_{i_{1}\ldots i_{L}}^{r_{1}\ldots r_{L}}\in S\right)
\end{array}}\left\langle \Psi_{\vec{Z}},{}^{*}E_{i_{1}}^{r_{1}}\wedge\ldots\wedge{}^{*}E_{i_{L}}^{r_{L}}\Psi_{\vec{Z}}\right\rangle \simeq1\label{eq:osszeg_nagyobb-2}
\end{align}
From (\ref{eq:minden_nagyobb-2}) and (\ref{eq:osszeg_nagyobb-2}),
for all $_{i_{1}\ldots i_{L}}^{r_{1}\ldots r_{L}}\in S$, we have
\begin{equation}
tr\left(P_{\Psi_{\vec{Z}}}E_{i_{1}}^{r_{1}}\wedge\ldots\wedge E_{i_{L}}^{r_{L}}\right)\simeq tr\left(P_{\Psi_{\vec{Z}}}{}^{*}E_{i_{1}}^{r_{1}}\wedge\ldots\wedge{}^{*}E_{i_{L}}^{r_{L}}\right)\label{eq:bovebb-trace-2}
\end{equation}

Thus, (\ref{eq:approximate-tetel-1-1})--(\ref{eq:approximate-tetel-2-1})
together with (\ref{eq:bovebb-trace-1}) and (\ref{eq:bovebb-trace-2})
imply (\ref{eq:approximate-tetel-1})--(\ref{eq:approximate-tetel-2}).
\end{proof}
With Theorem~\ref{Theorem:Main} we have accomplished (Q1)--(Q4).
The next two theorems cover statements (Q5)--(Q8).
\begin{thm}
\label{trace}Let $a_{r}$ be the measurement of a real valued quantity
with labeling (\ref{eq:labeling-1}). On the Hilbert space $H$, there
exists a self-adjoint operator $A_{r}$ such that for every state
of the system $\vec{Z}$,
\begin{equation}
\left\langle \alpha_{r}\right\rangle _{\vec{Z}}\simeq tr\left(P_{\Psi_{\vec{Z}}}A_{r}\right)\label{eq:mean-trA}
\end{equation}
\end{thm}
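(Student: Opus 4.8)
The plan is to define $A_r$ directly as the spectral-type sum of the projectors furnished by Theorem~\ref{Theorem:Main}. Concretely, let $E_i^r$ ($1\leq i\leq n_r$) be the closed subspaces/projectors whose existence is guaranteed by that theorem, and set
\[
A_r=\sum_{i=1}^{n_r}\alpha_i^r\,E_i^r.
\]
By (\ref{eq:orthogonal}) the projectors $E_1^r,\ldots,E_{n_r}^r$ are pairwise orthogonal, and by (\ref{eq:kiadja}) their join is the whole of $H$; pairwise orthogonal projectors whose join is $H$ sum to the identity operator, so they form a resolution of the identity. Hence $A_r$ is self-adjoint, and because the labeling satisfies $\alpha_i^r\neq\alpha_j^r$ for $i\neq j$, the displayed sum is precisely the spectral decomposition of $A_r$: its eigenvalues are exactly the possible measurement results $\alpha_i^r$ and the corresponding eigenspaces are the $E_i^r$. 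This construction already secures (Q5), (Q7) and (Q8).

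Next I would compute the trace. Using linearity of the trace together with the definition of $A_r$,
\[
tr\left(P_{\Psi_{\vec{Z}}}A_r\right)=\sum_{i=1}^{n_r}\alpha_i^r\,tr\left(P_{\Psi_{\vec{Z}}}E_i^r\right).
\]
Theorem~\ref{Theorem:Main}, via (\ref{eq:approximate-tetel-1}), gives $tr\left(P_{\Psi_{\vec{Z}}}E_i^r\right)\simeq Z_i^r$ for each $i$. Substituting and recalling that $Z_i^r=p\left(X_i^r|a_r\right)$ and the definition of the conditional expectation value,
\[
tr\left(P_{\Psi_{\vec{Z}}}A_r\right)\simeq\sum_{i=1}^{n_r}\alpha_i^r\,Z_i^r=\sum_{i=1}^{n_r}\alpha_i^r\,p\left(X_i^r|a_r\right)=\left\langle \alpha_r\right\rangle_{\vec{Z}},
\]
which is exactly (\ref{eq:mean-trA}).

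The only point that genuinely requires care is the propagation of the approximate equalities $\simeq$, and this is where I expect the (routine) bulk of the work to lie. Theorem~\ref{Theorem:Main} reproduces each $Z_i^r$ with arbitrarily high precision, so for any $\varepsilon>0$ one can arrange $\left|tr\left(P_{\Psi_{\vec{Z}}}E_i^r\right)-Z_i^r\right|<\varepsilon$ simultaneously for all $1\leq i\leq n_r$. The resulting error in the weighted sum is then bounded by $\left(\sum_{i=1}^{n_r}\left|\alpha_i^r\right|\right)\varepsilon$; since $n_r$ and the labels $\alpha_i^r$ are fixed finite data, this bound tends to $0$ as $\varepsilon\to0$, so the finite linear combination preserves arbitrary precision and (\ref{eq:mean-trA}) holds in the same approximate sense as the estimates of Theorem~\ref{Theorem:Main}. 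The algebraic content is therefore immediate once $A_r$ is defined as above, and no further geometric input beyond the properties (\ref{eq:orthogonal})--(\ref{eq:kiadja}) is needed.
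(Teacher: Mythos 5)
Your proof is correct and follows essentially the same route as the paper: you define $A_{r}=\sum_{i=1}^{n_{r}}\alpha_{i}^{r}E_{i}^{r}$ exactly as in (\ref{eq:spektral}) and obtain (\ref{eq:mean-trA}) by linearity of the trace together with (\ref{eq:approximate-tetel-1}); the extra remarks on error propagation and on the spectral structure (the latter being the content of Theorem~\ref{spectral}, proved separately in the paper) are harmless additions.
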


\begin{proof}
Let 
\begin{equation}
A_{r}=\sum_{i=1}^{n_{r}}\alpha_{i}^{r}E_{i}^{r}\label{eq:spektral}
\end{equation}
$A_{r}$ is obviously a self-adjoint operator, and
\begin{eqnarray*}
\left\langle \alpha_{r}\right\rangle _{\vec{Z}} & = & \sum_{i=1}^{n_{r}}\alpha_{i}^{r}p\left(X_{i}^{r}|a_{r}\right)=\sum_{i=1}^{n_{r}}\alpha_{i}^{r}Z_{i}^{r}\simeq\sum_{i=1}^{n_{r}}\alpha_{i}^{r}tr\left(P_{\Psi_{\vec{Z}}}E_{i}^{r}\right)\\
 &  & =tr\left(P_{\Psi_{\vec{Z}}}\sum_{i=1}^{n_{r}}\alpha_{i}^{r}E_{i}^{r}\right)=tr\left(P_{\Psi_{\vec{Z}}}A_{r}\right)
\end{eqnarray*}
\end{proof}
\begin{thm}
\label{spectral}The possible measurement results of the $\alpha_{r}$-measurement
are exactly the eigenvalues of the associated operator $A_{r}.$ The
subspace $E_{i}^{r}$ representing the outcome event labeled by $\alpha_{i}^{r}$
is the eigenspace pertaining to eigenvalue $\alpha_{i}^{r}$. Accordingly,
(\ref{eq:spektral}) constitutes the spectral decomposition of $A_{r}$.
\end{thm}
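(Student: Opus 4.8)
The plan is to extract from Theorem~\ref{Theorem:Main} the two structural facts about the family $\{E_i^r\}_{i=1}^{n_r}$ that make $A_r$ transparently diagonal, and to combine them with the standing assumption that the labels are distinct, $\alpha_i^r \neq \alpha_j^r$ for $i \neq j$. First I would note that the $E_i^r$ ($1 \leq i \leq n_r$) are pairwise orthogonal by (\ref{eq:orthogonal}), so their join equals their operator sum; hence (\ref{eq:kiadja}) delivers the resolution of identity $\sum_{i=1}^{n_r} E_i^r = I_H$. Thus (\ref{eq:spektral}) exhibits $A_r$ as a \emph{finite} real-linear combination of an orthogonal, complete system of projectors, so $A_r$ is a bounded self-adjoint operator irrespective of $\dim H$.

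Next I would verify that each label is an eigenvalue and that $\mathrm{ran}\,E_i^r$ sits inside the corresponding eigenspace: for any $\psi$ with $E_i^r\psi=\psi$, orthogonality forces $E_j^r\psi=0$ for $j\neq i$, whence $A_r\psi=\sum_j \alpha_j^r E_j^r\psi=\alpha_i^r\psi$. For the converse I would take an arbitrary eigenvector, $A_r\psi=\lambda\psi$, and expand $\psi=\sum_i E_i^r\psi$ via the resolution of identity; applying $A_r$ and subtracting $\lambda\psi$ gives $\sum_i(\alpha_i^r-\lambda)E_i^r\psi=0$. Since the summands lie in mutually orthogonal subspaces, each vanishes separately, so $(\alpha_i^r-\lambda)E_i^r\psi=0$ for every $i$.

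Two conclusions follow from this identity. Because $\psi\neq 0$ there is some $i$ with $E_i^r\psi\neq 0$, forcing $\lambda=\alpha_i^r$; so no number outside $\{\alpha_i^r\}_{i=1}^{n_r}$ is an eigenvalue, which together with the previous paragraph shows the eigenvalue set is exactly $\{\alpha_i^r\}_{i=1}^{n_r}$ and establishes the first claim. Moreover, if $\lambda=\alpha_i^r$ then for every $j\neq i$ the distinctness $\alpha_j^r\neq\alpha_i^r$ yields $E_j^r\psi=0$, so $\psi=E_i^r\psi\in\mathrm{ran}\,E_i^r$; combined with the inclusion already proved this gives $\ker(A_r-\alpha_i^r I)=\mathrm{ran}\,E_i^r$, i.e.\ $E_i^r$ is precisely the eigenprojector of $\alpha_i^r$. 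This is the second claim, and with it (\ref{eq:spektral}) is a sum over the \emph{distinct} eigenvalues of the corresponding eigenprojectors, which is by definition the spectral decomposition of $A_r$, giving the third claim.

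The argument is the finite-dimensional spectral theorem carried out by hand, and it transfers verbatim to a general $H$ precisely because the index $i$ ranges over the finite set $\{1,\ldots,n_r\}$, so no convergence subtleties arise. The only step that genuinely uses the hypotheses rather than formal manipulation is the reverse eigenspace inclusion: without distinctness one would only recover that the eigenspace of a repeated value is the join of the associated $E_i^r$, and (\ref{eq:spektral}) would cease to be indexed by distinct eigenvalues. I expect this to be the sole delicate point; the remaining care is merely to confirm that each $E_i^r$ is nonzero—so that every label is actually attained as an eigenvalue—which is guaranteed as soon as the listed outcomes are genuine events represented by nontrivial subspaces.
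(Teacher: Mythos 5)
Your proof is correct and follows essentially the same route as the paper's: show each $\alpha_i^r$ is an eigenvalue via orthogonality, then decompose an arbitrary eigenvector through the resolution of identity $\sum_i E_i^r = I$ and use the distinctness of the labels to pin down the eigenvalue and collapse the eigenvector into a single $E_i^r$. Your closing remark that each $E_i^r$ must be nonzero for every label to actually occur as an eigenvalue is a small point the paper leaves implicit, but it does hold in the construction of Theorem~\ref{Theorem:Main}.
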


\begin{proof}
First, let $\psi\in E_{i}^{r}$. Then, due to (\ref{eq:orthogonal}),
$A_{r}\psi=\left(\sum_{i=1}^{n_{r}}\alpha_{i}^{r}E_{i}^{r}\right)\psi=\alpha_{i}^{r}\psi$.
Meaning that every $\alpha_{i}^{r}$ is an eigenvalue of $A_{r}.$
Now consider an arbitrary eigenvector of $A_{r}$, that is, a vector
$\psi\in H$ such that 
\begin{equation}
A_{r}\psi=x\psi\label{eq:eigen}
\end{equation}
with some $x\in\mathbb{R}$. Due to (\ref{eq:orthogonal})--(\ref{eq:kiadja}),
$\left\{ E_{1}^{r},E_{2}^{r},\ldots E_{n_{r}}^{r}\right\} $ constitutes
an orthogonal decomposition of $H$, meaning that arbitrary $\psi\in H$
can be decomposed as 
\[
\psi=\sum_{i=1}^{n_{r}}\psi_{i}\,\,\,\,\,\,\,\psi_{i}\in E_{i}^{r}
\]

\noindent From (\ref{eq:spektral}) we have
\begin{equation}
\sum_{i=1}^{n_{r}}\alpha_{i}^{r}\psi_{i}=\sum_{i=1}^{n_{r}}x\psi_{i}\label{eq:valamelyik-nem-nulla}
\end{equation}
About the labeling we have assumed that $\alpha_{i}^{r}\neq\alpha_{j}^{r}$
for $i\neq j$, therefore, (\ref{eq:valamelyik-nem-nulla}) implies
that
\begin{eqnarray*}
x & = & \alpha_{i}^{r}\,\,\,\,\,\,\,\text{ for one }\alpha_{i}^{r}\\
\psi_{j} & = & 0\,\,\,\,\,\,\,\,\,\text{ for all }j\neq i
\end{eqnarray*}
that is, $\psi\in E_{i}^{r}$. Meaning that (\ref{eq:spektral}) is
the spectral decomposition of $A_{r}$.
\end{proof}
A consequence of Theorems~\ref{trace} and \ref{spectral} is that
if $f:\mathbb{R}\rightarrow\mathbb{R}$ is an arbitrary injective
function ``re-labeling'' the outcomes, then 
\begin{align*}
\left\langle f(\alpha_{r})\right\rangle _{\vec{Z}} & =\sum_{i=1}^{n_{r}}f(\alpha_{i}^{r})p\left(X_{i}^{r}|a_{r}\right)=\sum_{i=1}^{n_{r}}f(\alpha_{i}^{r})Z_{i}^{r}\simeq\sum_{i=1}^{n_{r}}f(\alpha_{i}^{r})\,tr\left(P_{\Psi_{\vec{Z}}}E_{i}^{r}\right)\\
 & =tr\left(P_{\Psi_{\vec{Z}}}\sum_{i=1}^{n_{r}}f(\alpha_{i}^{r})E_{i}^{r}\right)=tr\left(P_{\Psi_{\vec{Z}}}f(A_{r})\right)
\end{align*}

\section{Representation of Dynamics}

Notice that not all unit vectors of $H$ are involved in the representation
of states. In order to specify the ones being involved, consider the
following subspace $\mathcal{H}\subset H$:
\begin{eqnarray*}
\mathcal{H} & = & \mathsf{span}\left\{ \Psi_{\vartheta}\right\} _{\vartheta\in\Theta}
\end{eqnarray*}
where $\left\{ \Psi_{\vartheta}\right\} _{\vartheta\in\Theta}$ is
the set of vectors in the direct sum (\ref{eq:state-vector-construction}),
understood as being pairwise orthogonal, unit-length elements of $H$.
Denote by $\mathcal{O}$ the closed first hyperoctant (orthant) of
the $\left(\left|\Theta\right|-1\right)$-dimensional sphere of unit
vectors in $\mathcal{H}$:
\[
\mathcal{O}=\left\{ \sum_{\vartheta\in\Theta}o_{\vartheta}\Psi_{\vartheta}\left|o_{\vartheta}\geq0\,\,\,\,\sum_{\vartheta\in\Theta}o_{\vartheta}^{2}=1\right.\right\} 
\]
Obviously, there is a continuous one-to-one map between the $\Lambda$
and $\mathcal{O}$: 
\[
O:\Lambda\rightarrow\mathcal{O};\,\,O\left(\vec{\lambda}\right)=\sum_{\vartheta\in\Theta}\sqrt{\lambda_{\vartheta}}\Psi_{\vartheta}
\]
As we have shown, however, the states of the system actually are represented
on the $\text{dim}\left(\varphi\left(M,S\right)\right)$-dimensional
slice $\Lambda_{\sigma}\subset\Lambda$ (see (\ref{eq:Lambda0})).
Accordingly, the quantum mechanical representation of states constitutes
a $\text{dim}\left(\varphi\left(M,S\right)\right)$-dimensional submanifold
with boundary: $\mathcal{O}_{\sigma}=O\left(\Lambda_{\sigma}\right)\subset\mathcal{O}$.

Consequently, the time evolution of state $\vec{Z}\left(t\right)$
will be represented by a path in $\mathcal{O}_{\sigma}$:
\[
\Psi(t)=O\circ\sigma\left(\vec{Z}\left(t\right)\right)
\]
The representation is of course not unique, as it depends on the choice
of cross section $\sigma$. This is however inessential; just like
a choice of a coordinate system.

As emphasized at the beginning of Section~\ref{sec:Quantum-Representation},
the quantum representation was derived exclusively from the assumptions
\textbf{(E1)}-\textbf{(E3)}. If in addition \textbf{(E4)} holds, that
is the time evolution $\vec{Z}\left(t\right)$ can be generated by
a one-parameter group of transformations on $\varphi\left(M,S\right)$,
$\vec{Z}(t)=F_{t-t_{0}}\left(\vec{Z}(t_{0})\right)$, then the same
is true for $\mathcal{O}_{\sigma}$. Let $G_{t}=O\circ\sigma\circ F_{t}\circ D\circ O^{-1}$.
Obviously, $G_{t}$ is a map $\mathcal{O}_{\sigma}\rightarrow\mathcal{O}_{\sigma}$,
such that
\begin{itemize}
\item [] $G_{t}:\mathcal{O}_{\sigma}\rightarrow\mathcal{O}_{\sigma}$ is
one-to-one
\item [] $G:\mathbb{R}\times\mathcal{O}_{\sigma}\rightarrow\mathcal{O}_{\sigma};\,\left(t,\Psi\right)\mapsto G_{t}\left(\Psi\right)$
is continuous
\item [] $G_{t+s}=G_{s}\circ G_{t}$
\item [] $G_{-t}=G_{t}^{-1}$; consequently, $G_{0}=id_{\mathcal{O}_{\sigma}}$
\end{itemize}
and the time evolution of an arbitrary initial state $\Psi(t_{0})\in\mathcal{O}_{\sigma}$
is $\Psi(t)=G_{t-t_{0}}\left(\Psi(t_{0})\right)$.

\section{Questionable and Unquestionable in Quantum Mechanics\protect\label{sec:Questionable-and-Unquestionable}}

What we have \emph{proved} in the above theorems, that is, statements
(Q1)--(Q8), are nothing but the basic postulates of quantum theory.
This means that the basic postulates of quantum theory are in fact
analytic statements: they do not tell us anything about a physical
system beyond the fact that the system can be described in empirical/operational
terms---even if this logical relationship is not so evident. In this
sense, of course, these postulates of quantum theory are unquestionable.
Though, as we have seen, the Hilbert space quantum mechanical formalism
is only an optional mathematical representation of the probabilistic
behavior of a system---empirical facts do not necessitate this choice.

Nevertheless, it must be mentioned that the quantum-mechanics-like
representation, characterized by (Q1)--(Q8), is not completely identical
with standard quantum mechanics. There are several subtle deviations:
\begin{itemize}
\item [(D1)] There is no one-to-one correspondence between operationally
meaningful physical quantities and self-adjoint operators. First of
all, it is not necessarily true that every self-adjoint operator represents
some operationally meaningful quantity.
\item [(D2)]\label{-no_commutation} There is no necessary connection between
commutation of the associated self-adjoint operators and joint measurability
of the corresponding physical quantities. In general, there is no
obvious role of the mathematically definable algebraic structures
over the self-adjoint operators in the operational context. First
of all because those mathematically ``natural'' structures are mostly
meaningless in an operational sense. As we have already mentioned,
the outcome \emph{events} are ontologically prior to the labeling
of the outcomes by means of numbers; and the events themselves are
well represented in the subspace/projector lattice, prior to any self-adjoint
operator associated with a numerical coordination.

For example, consider three real-valued physical quantities with labelings
$\alpha_{r_{1}},\alpha_{r_{2}},\alpha_{r_{3}}$. The three physical
quantities reflect three different features of the system defined
by three different measurement operations. A functional relationship
$\alpha_{r_{1}}=f\left(\alpha_{r_{2}},\alpha_{r_{3}}\right)$ means
that whenever we perform the measurements $a_{r_{1}},a_{r_{2}},a_{r_{3}}$
in conjunction (meaning that $\left\{ r_{1},r_{2},r_{3}\right\} \in\mathfrak{P}$)
the outcomes $X_{i_{1}}^{r_{1}},X_{i_{2}}^{r_{2}},X_{i_{3}}^{r_{3}}$
are strongly correlated: if $X_{i_{2}}^{r_{2}}$ and $X_{i_{3}}^{r_{3}}$
are the outcomes of $a_{r_{2}}$ and $a_{r_{3}}$, labeled by $\alpha_{i_{2}}^{r_{2}}$
and $\alpha_{i_{3}}^{r_{3}}$, then the outcome of measurement $a_{r_{1}}$,
$X_{i_{1}}^{r_{1}}$, is the one labeled by $\alpha_{i_{1}}^{r_{1}}=f\left(\alpha_{i_{2}}^{r_{2}},\alpha_{i_{3}}^{r_{3}}\right)$.
That is, in probabilistic terms:
\begin{align}
p\left(\alpha_{r_{1}}^{-1}\left(f\left(\alpha_{i_{2}}^{r_{2}},\alpha_{i_{3}}^{r_{3}}\right)\right)\wedge\alpha_{r_{2}}^{-1}\left(\alpha_{i_{2}}^{r_{2}}\right)\wedge\alpha_{r_{3}}^{-1}\left(\alpha_{i_{3}}^{r_{3}}\right)\left|a_{r_{1}}\wedge a_{r_{2}}\wedge a_{r_{3}}\right.\right)\nonumber \\
=p\left(\alpha_{r_{2}}^{-1}\left(\alpha_{i_{2}}^{r_{2}}\right)\wedge\alpha_{r_{3}}^{-1}\left(\alpha_{i_{3}}^{r_{3}}\right)\left|a_{r_{1}}\wedge a_{r_{2}}\wedge a_{r_{3}}\right.\right)\label{eq:functional}
\end{align}
This contingent fact of regularity in the observed relative frequencies
of physical events is what is a part of the ontology. And it is well
reflected in our quantum mechanical representation, in spite of the
fact that the relationship (\ref{eq:functional}) is generally not
reflected in some algebraic or other functional relation of the associated
self-adjoint operators $A_{r_{1}}$, $A_{r_{2}}$ and $A_{r_{3}}$.

The fact that in our quantum-mechanics-like representation there is
no correspondence between commutation and co-measurability explains
why there is no need to require satisfaction of the Cirel'son inequalities
(cf. Cirenl'son~1980; Popescu and Rohrlich 1994; Müller~2021), which
would mean further restriction on the observed relative frequencies
beyond \textbf{(E1)}--\textbf{(E3)}.
\item [(D3)] It is worthwhile emphasizing that the Hilbert space of representation
is finite dimensional and real. It is of course no problem to embed
the whole representation into a complex Hilbert space of the same
dimension. As it follows from (\ref{eq:orthogonal}) and (\ref{eq:large_H}),
the required minimal dimension increases with increasing the number
of possible measurements $m$, and/or increasing the number of possible
outcomes $n_{r}$. In any event, it is finite until we have a finite
operational setup. Employing complex Hilbert spaces is only necessary
if, in addition to the stipulated operational setup, we have some
further algebraic requirements, for example, in the form of commutation
relations, and the likes. How those further requirements are justified
in operational terms, of course, can be a question.
\item [(D4)] There is no problem with the empirical meaning of the lattice-theoretic
meet of subspaces/projectors representing outcome events: the meet
represents the empirically meaningful conjunction of the outcome events,
regardless whether the corresponding projectors commute or not. Of
course, by definition (\ref{eq:pi_is_such-2}), the conjunctions that
do not belong to $S$ have zero probability in all states of the system.

In contrast, the lattice-theoretic joins and orthocomplements, in
general, have nothing to do with the disjunctions and negations of
the outcome events. Nevertheless, as we have seen, the quantum state
uniquely determines the probabilities on the whole event algebra,
including the conjunctions, disjunctions and negations of all events---in
the sense of Theorem~\ref{State--together}.
\item [(D5)] All possible states of the system, $\vec{Z}\in\varphi\left(M,S\right)$,
are represented by \emph{pure} states. That is to say, the quantum
mechanical notion of mixed state is not needed. The reason is very
simple. $\varphi\left(M,S\right)$ is a convex polytope being closed
under convex linear combinations. The state of the system intended
to be represented by a mixed state, say,
\[
W=\mu_{1}P_{\Psi_{\vec{Z}_{1}}}+\mu_{2}P_{\Psi_{\vec{Z}_{2}}}\,\,\,\,\,\,\,\,\,\,\,\,\,\,\mu_{1},\mu_{2}\geq0;\,\,\,\mu_{1}+\mu_{2}=1
\]
is nothing but another element of $\varphi\left(M,S\right)$,
\[
\vec{Z}_{3}=\mu_{1}\vec{Z}_{1}+\mu_{2}\vec{Z}_{2}\in\varphi\left(M,S\right)
\]
However, in our representation theorem (Theorem~\ref{Theorem:Main})
the Hilbert space and the representations of the outcome events are
constructed in a way that all states $\vec{Z}\in\varphi\left(M,S\right)$
are represented by a suitable state vector in one and the same Hilbert
space. Therefore, $\vec{Z}_{3}$ is also represented by a pure state
$P_{\Psi_{\vec{Z}_{3}}}$ with a suitably constructed state vector
$\Psi_{\vec{Z}_{3}}$. Namely, given that 
\begin{eqnarray*}
\vec{Z}_{1} & = & \sum_{\vartheta\in\Theta}\lambda_{\vartheta}^{1}\vec{w}_{\vartheta}\,\,\,\,\,\,\,\lambda_{\vartheta}^{1}\geq0,\sum_{\vartheta\in\Theta}\lambda_{\vartheta}^{1}=1\\
\vec{Z}_{2} & = & \sum_{\vartheta\in\Theta}\lambda_{\vartheta}^{2}\vec{w}_{\vartheta}\,\,\,\,\,\,\,\lambda_{\vartheta}^{2}\geq0,\sum_{\vartheta\in\Theta}\lambda_{\vartheta}^{2}=1
\end{eqnarray*}
we have 
\begin{eqnarray*}
\vec{Z}_{3} & = & \sum_{\vartheta\in\Theta}\left(\mu_{1}\lambda_{\vartheta}^{1}+\mu_{2}\lambda_{\vartheta}^{2}\right)\vec{w}_{\vartheta}
\end{eqnarray*}
therefore, from (\ref{eq:state-vector-construction}), 
\begin{eqnarray*}
\Psi_{\vec{Z}_{3}} & = & \underset{\vartheta\in\Theta}{\mathop{\oplus}}\sqrt{\mu_{1}\lambda_{\vartheta}^{1}+\mu_{2}\lambda_{\vartheta}^{2}}\Psi_{\vartheta}
\end{eqnarray*}

To avoid a possible misunderstanding, it is worthwhile mentioning
that all we said above is not in contradiction with the mathematical
fact that the density operators $W$ and $P_{\Psi_{\vec{Z}_{3}}}$
generate different ``quantum probability'' measures over the \emph{whole}
subspace lattice $L\left(H\right)$. The two measures will coincide
on those elements of $L\left(H\right)$ that represent operationally
meaningful events---$E_{i}^{r},E_{i_{1}}^{r_{1}}\wedge\ldots\wedge E_{i_{L}}^{r_{L}}$
for $_{i}^{r}\in I^{M},{}_{i_{1}\ldots i_{L}}^{r_{1}\ldots r_{L}}\in S$.
This reinforces the point in (D4) that there is no one-to-one correspondence
between the operationally meaningful events and the elements of $L\left(H\right)$.
\item [(D6)] We don't need to invoke the entire Hilbert space for representing
the totality of operationally meaningful possible states of the system;
subspace $\mathcal{H}$ is sufficient. Even in this restricted sense,
there is no one-to-one correspondence between the rays of the subspace
$\mathcal{H}\subset H$ and the states of the system. The unit vectors
involved in the representation are the ones pointing to $\mathcal{O}_{\sigma}$,
a $\text{dim}\left(\varphi\left(M,S\right)\right)$-dimensional submanifold
with boundary on the unit sphere of $\mathcal{H}$.
\item [(D7)] The so called ``superposition principle'' does not hold.
The ray determined by the linear combination of two different vectors
pointing to $\mathcal{O}_{\sigma}$ does not necessarily intersect
$\mathcal{O}_{\sigma}$; meaning that such a linear combination, in
general, has nothing to do with a third state of the system. Neither
has it anything to do with the logical/probability theoretic notion
of ``disjunction'' of events, of course. Nevertheless, as we have
already emphasized in (D4) and (D5), all possible states of the system
are well represented in $\mathcal{O}_{\sigma}$; and these states
uniquely determine the probabilities on the whole event algebra of
operationally meaningful events, including their disjunctions too.
\item [(D8)] The dynamics of the system can be well represented in the
usual way, by means of temporal evolution on the state manifold $\mathcal{O}_{\sigma}$.
In the case where (E4) is also satisfied, this temporal evolution
can be generated by a one-parameter group of transformations of $\mathcal{O}_{\sigma}$.
However, these transformations are in no way related to the unitary
transformations of $H$ (or $\mathcal{H}$); because they do not respect
the linear structure of the Hilbert space or orthogonality; but they
do respect that the state space $\mathcal{O}_{\sigma}$ is a manifold
with boundary.
\end{itemize}
It is remarkable that most of the above mentioned deviations from
the quantum mechanical folklore are related with exactly those issues
in the foundations of quantum mechanics that have been hotly debated
for long decades (e.g. Strauss~1936; Reichenbach~1944; Popper~1967;
Park and Margenau~1968; 1971; Ross~1974; Bell~1987; Gudder~1988;
Malament~1992; Leggett~1998; Griffiths~2013; Cassinelli and Lahti~2017;
Fröhlich and Pizzo~2022). The fact that so much of the core of quantum
theory can be unquestionably deduced from three elementary empirical
conditions, equally true about all physical systems whether classical
or quantum, or beyond, may shed new light on these old problems in
the foundations.

\section*{Appendices}

\subsection*{Appendix~1}

\noindent The following theorem is mentioned as an exercise in most
texts. We formulate it using the same notation we used in the proof
of Lemma~\ref{The-vertices-of-fi}.
\begin{thm}
\label{Theorem:=000020vertex}Let $P$ be a polytope in $\mathbb{R}^{d}$,
defined by the following set of linear inequalities:
\begin{eqnarray}
\left\langle \vec{\omega}_{\mu},\vec{f}\right\rangle -b_{\mu} & \leq & 0\,\,\,\,\,\text{for all }\mu\in I\label{eq:egyenlotlensegek}
\end{eqnarray}
For each $\vec{f}\in P$, define the active index set:{\small
\[
I_{\vec{f}}:=\left\{ \mu\in I\left|\,\left\langle \vec{\omega}_{\mu},\vec{f}\right\rangle -b_{\mu}=0\right.\right\} 
\]
}$\vec{f}\in P$ is a vertex of $P$ if and only if 
\begin{equation}
\mathrm{span}\left\{ \vec{\omega}_{\mu}\right\} _{\mu\in I_{\vec{f}}}=\mathbb{R}^{d}\label{eq:span-2}
\end{equation}
\begin{proof}
First, suppose $\vec{f}$ is vertex of $P$, but $\mathrm{span}\left\{ \vec{\omega}_{\mu}\right\} _{\mu\in I_{\vec{f}}}\neq\mathbb{R}^{d}$.
Then choose a non-zero $\vec{g}\in\left(\mathrm{span}\left\{ \vec{\omega}_{\mu}\right\} _{\mu\in I_{\vec{f}}}\right)^{\bot}$.
Obviously, if $\mu\not\in I_{\vec{f}}$ then there exists a neighborhood
$U$ of $\vec{f}$ such that $\mu\not\in I_{\vec{f}_{*}}$ for all
$\vec{f}_{*}\in U$. Consider the points $\vec{f}+\lambda\vec{g}$.
If $\lambda$ is small enough, both $\vec{f}+\lambda\vec{g}$ and
$\vec{f}-\lambda\vec{g}$ are in $P$, since (\ref{eq:egyenlotlensegek})
are satisfied. Now, we can write 
\[
\vec{f}=\frac{1}{2}\left(\left(\vec{f}+\lambda\vec{g}\right)+\left(\vec{f}-\lambda\vec{g}\right)\right)
\]
which contradicts the fact that $\vec{f}$ is vertex of $P$.

Second, now suppose that $\vec{f}\in P$ and $\mathrm{span}\left\{ \vec{\omega}_{\mu}\right\} _{\mu\in I_{\vec{f}}}=\mathbb{R}^{d}$.
Suppose $\vec{f}=\lambda\vec{f}_{*}+(1-\lambda)\vec{f}_{**}$ with
some $\vec{f}_{*},\vec{f}_{**}\in P$ and $0<\lambda<1$. We know
that $\mu\in I_{\vec{f}}$ implies
\[
\left\langle \vec{\omega}_{\mu},\vec{f}\right\rangle =\lambda\left\langle \vec{\omega}_{\mu},\vec{f}_{*}\right\rangle +(1-\lambda)\left\langle \vec{\omega}_{\mu},\vec{f}_{**}\right\rangle =b_{\mu}
\]
 On the other hand, from (\ref{eq:egyenlotlensegek}) we have 
\begin{eqnarray*}
\left\langle \vec{\omega}_{\mu},\vec{f}_{*}\right\rangle  & \leq & b_{\mu}\\
\left\langle \vec{\omega}_{\mu},\vec{f}_{**}\right\rangle  & \leq & b_{\mu}
\end{eqnarray*}
which implies that $\left\langle \vec{\omega}_{\mu},\vec{f}\right\rangle =\left\langle \vec{\omega}_{\mu},\vec{f}_{*}\right\rangle =\left\langle \vec{\omega}_{\mu},\vec{f}_{**}\right\rangle $
(for all $\mu\in I_{\vec{f}}$). Therefore, 
\[
\left(\vec{f}-\vec{f}_{*}\right),\left(\vec{f}-\vec{f}_{**}\right)\in\left(\mathrm{span}\left\{ \vec{\omega}_{\mu}\right\} _{\mu\in I_{\vec{f}}}\right)^{\bot}=\textrm{Ø}
\]
meaning that $\vec{f}=\vec{f}_{*}=\vec{f}_{**}$. Therefore, $\vec{f}$
is a vertex.
\end{proof}
\end{thm}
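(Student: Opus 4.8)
The plan is to prove both implications directly from the definition of a vertex as an extreme point of $P$: a point $\vec{f}\in P$ that cannot be written as $\vec{f}=\lambda\vec{f}_{*}+(1-\lambda)\vec{f}_{**}$ with distinct $\vec{f}_{*},\vec{f}_{**}\in P$ and $0<\lambda<1$. The key linear-algebraic observation is that $\mathrm{span}\{\vec{\omega}_{\mu}\}_{\mu\in I_{\vec{f}}}=\mathbb{R}^{d}$ holds precisely when the orthogonal complement of these active normals is trivial; both directions of the theorem exploit this complement in a symmetric way.

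For the forward direction I would argue by contraposition. Suppose $\mathrm{span}\{\vec{\omega}_{\mu}\}_{\mu\in I_{\vec{f}}}\neq\mathbb{R}^{d}$, so there exists a nonzero $\vec{g}$ orthogonal to every active normal. The idea is to move along $\pm\vec{g}$ without leaving $P$. For the active constraints this is automatic, since $\langle\vec{\omega}_{\mu},\vec{f}\pm\lambda\vec{g}\rangle=b_{\mu}$. For the inactive constraints the inequality $\langle\vec{\omega}_{\mu},\vec{f}\rangle<b_{\mu}$ is strict, so by continuity it survives a sufficiently small perturbation; choosing $\lambda>0$ small enough that all inactive constraints remain satisfied places both $\vec{f}+\lambda\vec{g}$ and $\vec{f}-\lambda\vec{g}$ in $P$. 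Writing $\vec{f}$ as the midpoint of these two distinct points then contradicts extremality.

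For the reverse direction, assume the active normals span $\mathbb{R}^{d}$ and suppose $\vec{f}=\lambda\vec{f}_{*}+(1-\lambda)\vec{f}_{**}$ with $\vec{f}_{*},\vec{f}_{**}\in P$ and $0<\lambda<1$. For each active index $\mu$ the value $b_{\mu}=\langle\vec{\omega}_{\mu},\vec{f}\rangle$ is a convex combination with strictly positive weights of $\langle\vec{\omega}_{\mu},\vec{f}_{*}\rangle$ and $\langle\vec{\omega}_{\mu},\vec{f}_{**}\rangle$, each of which is at most $b_{\mu}$; hence both must equal $b_{\mu}$. This forces $\vec{f}-\vec{f}_{*}$ and $\vec{f}-\vec{f}_{**}$ to lie in the orthogonal complement of $\mathrm{span}\{\vec{\omega}_{\mu}\}_{\mu\in I_{\vec{f}}}$, which by hypothesis is $\{\vec{0}\}$. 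Therefore $\vec{f}=\vec{f}_{*}=\vec{f}_{**}$, and $\vec{f}$ is a vertex.

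The only step requiring care is the perturbation in the forward direction: one must ensure a single choice of $\lambda$ works simultaneously for all inactive constraints, which is where the finiteness of $I$ (or at least of the inactive set) and the strictness of the inactive inequalities are used. I expect this bookkeeping of the inactive constraints to be the main, though essentially routine, obstacle. Note that boundedness of $P$ plays no role here, so the same argument applies verbatim to any polyhedron defined by finitely many inequalities.
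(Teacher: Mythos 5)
Your proposal is correct and follows essentially the same route as the paper's proof: in the forward direction you perturb along a nonzero vector orthogonal to the active normals and write $\vec{f}$ as a midpoint of two points of $P$, and in the reverse direction you show the active constraints must be tight at both endpoints of any convex decomposition, forcing the differences into the trivial orthogonal complement. Your extra remark about choosing a single $\lambda$ uniformly over the finitely many inactive constraints only makes explicit a point the paper leaves implicit.
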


\subsection*{Appendix~2}

\noindent The standard argumentation in the GPT literature (Hardy~2008;
Holevo~2011, pp. 4--5; Müller~2021, p.~14) that the space of physically
possible states $\varPhi$ \emph{must} be convex is based on a problematic
notion of ``statistical mixture of preparation procedures''. A ``mixture''
of two preparation procedures, resulting in states $\vec{Z}_{1}$
and $\vec{Z}_{2}$, would be a procedure in which we alternate between
two preparations in some ratio of $\lambda_{1}$ and $\lambda_{2}$
($\lambda_{1}+\lambda_{2}=1$), say randomly, with probabilities $\lambda_{1}$
and $\lambda_{2}$. Under some circumstances---excluding any tricky
correlations, for example, between the choice of preparations and
the choice of measurements, or between the system's behavior in one
run and its preparation in the previous run, etc.---the surface observed
statistics would indeed be as if the system were in state $\lambda_{1}\vec{Z}_{1}+\lambda_{2}\vec{Z}_{2}$.
``The situation described above can be considered as a special way
of state preparation'' (Holevo~2011, p.~5), the argument says,
therefore $\lambda_{1}\vec{Z}_{1}+\lambda_{2}\vec{Z}_{2}\in\varPhi$.

We believe that this argument is conceptually flawed. After all, the
notion of relative frequency itself is based on a series of measurements
that are performed in the same probabilistic setup, and not in a setup
changing from one run to the next. Taking into account what the probabilistic
setup is that is fixed in the ``mixing'' procedure, the ``mixture''
$\lambda_{1}\vec{Z}_{1}+\lambda_{2}\vec{Z}_{2}$ is not a state of
the original system, but a state of the composed system consisting
of the preparation device (denote it by $\mathscr{D}$) and the original
physical system (denote it by $\mathscr{S}$).  We should not be
mislead by the fact that different physical things can be described
by the same mathematical object. In order to avoid the confusion it
is better to use different notation for the physically different things.
When we are talking about the mixture, we are talking about states
of the composed ($\mathscr{D}$+$\mathscr{S}$) system:
\begin{lyxlist}{00.00.0000}
\item [{$^{(\mathscr{D}+\mathscr{S})}\vec{Z}_{1}\in{}^{(\mathscr{D}+\mathscr{S})}\varPhi$:}] the
state of ($\mathscr{D}$+$\mathscr{S}$) in which the preparation
device $\mathscr{D}$ has propensity $1$ to set system $\mathscr{S}$
into state $\vec{Z}_{1}\in\varPhi$
\item [{$^{(\mathscr{D}+\mathscr{S})}\vec{Z}_{2}\in{}^{(\mathscr{D}+\mathscr{S})}\varPhi$:}] the
state of ($\mathscr{D}$+$\mathscr{S}$) in which the preparation
device $\mathscr{D}$ has propensity $1$ to set system $\mathscr{S}$
into state $\vec{Z}_{2}\in\varPhi$
\item [{$^{(\mathscr{D}+\mathscr{S})}\vec{Z}_{3}\in{}^{(\mathscr{D}+\mathscr{S})}\varPhi$:}] the
state of ($\mathscr{D}$+$\mathscr{S}$) in which the preparation
device $\mathscr{D}$ has propensity $\lambda_{1}$ to set system
$\mathscr{S}$ into state $\vec{Z}_{1}\in\varPhi$ and propensity
$\lambda_{2}$ to set system $\mathscr{S}$ into state $\vec{Z}_{2}\in\varPhi$
\end{lyxlist}
Indeed, 
\[
^{(\mathscr{D}+\mathscr{S})}\vec{Z}_{3}=\lambda_{1}{}^{(\mathscr{D}+\mathscr{S})}\vec{Z}_{1}+\lambda_{2}{}^{(\mathscr{D}+\mathscr{S})}\vec{Z}_{2}
\]
and, if the preparation device is such that propensities $\lambda_{1}$
and $\lambda_{2}$ ($\lambda_{1}+\lambda_{2}=1$) can be arbitrary,
then $^{(\mathscr{D}+\mathscr{S})}\varPhi$ is closed under convex
combination.

Just take our Biased Coin example we discussed at the end of section~\ref{sec:The-State-Space-polytope}.
Consider a composed ($\mathscr{D}$+$\mathscr{S}$) system, where
$\mathscr{S}$ is the coin on the right hand side of Fig.~\ref{coin},
and $\mathscr{D}$ is a device setting the disk inside the coin into
one of the possible positions $Z_{H},Z_{F}$, or $Z_{T}$ with some
probabilities $\lambda_{H},\lambda_{F},$ and $\lambda_{T}$. Consider
only one possible measurement, tossing the coin, with two possible
outcomes, Heads and Tails, and one conjunction. Now, keeping the preparation
of the ($\mathscr{D}$+$\mathscr{S}$) system fixed, that is, the
coin always has the same three slots where the disk can be clicked
and the preparation device always has the same propensities $\lambda_{H},\lambda_{F},$
and $\lambda_{T}$, we can take the statistics of Heads and Tails.
The observed data will satisfy \textbf{(E1)}--\textbf{(E3)}, and
the state space $\varphi\left(M,S\right)$ (where $M=2$ and $S=\left\{ _{1\,2}^{1\,1}\right\} $)
is the one-dimensional polytope in $\mathbb{R}^{3}$, determined by
vertices $\left(1,0,0\right)$ and $\left(0,1,0\right)$---just as
it was the case in our coin example. Denote an arbitrary state vector
of the ($\mathscr{D}$+$\mathscr{S}$) system by $^{(\mathscr{D}+\mathscr{S})}\vec{Z}=\left(^{(\mathscr{D}+\mathscr{S})}z_{H},^{(\mathscr{D}+\mathscr{S})}z_{T},0\right)\in\varphi(M,S)$,
meaning that whenever we perform the coin-toss and the ($\mathscr{D}$+$\mathscr{S}$)
system is in state $^{(\mathscr{D}+\mathscr{S})}\vec{Z}$, we get
Heads with relative frequency $^{(\mathscr{D}+\mathscr{S})}z_{H}$,
Tails with $^{(\mathscr{D}+\mathscr{S})}z_{T}$, but never the Heads
and Tails in conjunction. Assuming that the device $\mathscr{D}$
is such that its propensities $\lambda_{H},\lambda_{F},$ and $\lambda_{T}$
can be arbitrary ($\lambda_{H}+\lambda_{F}+\lambda_{T}=1$), the set
of the physically possible states of the ($\mathscr{D}$+$\mathscr{S}$)
system, $^{(\mathscr{D}+\mathscr{S})}\varPhi\subset\varphi\left(M,S\right)$,
is the line segment between $(0.8,0.2,0)$ and $(0.2,0.8,0)$. Hence
$^{(\mathscr{D}+\mathscr{S})}\varPhi$ is closed under convex combination.
While, recall (\ref{eq:coin-diszkret}), the set of the physically
possible states of the coin, the original system $\mathscr{S}$ in
itself, is 
\[
\varPhi=\left\{ (0.8,0.2,0),(0.5,0.5,0),(0.2,0.8,0)\right\} 
\]
which is not closed under convex combination.

In contrast, consider an ($\mathscr{D}$+$\mathscr{S}$) system such
that $\mathscr{S}$ is a coin like the one on the left hand side of
Fig.~\ref{coin}, but the preparation device $\mathscr{D}$ has only
two possible propensity-states: placing the disk in position $Z_{H}$
with probability $1$, or placing the disk in position $Z_{T}$ with
probability $1$. In this case, the set of the physically possible
states of the ($\mathscr{D}$+$\mathscr{S}$) system consists of only
two points, 
\[
^{(\mathscr{D}+\mathscr{S})}\varPhi=\left\{ (0.8,0.2,0),(0.2,0.8,0)\right\} 
\]
hence it is not closed under convex combination. While, the set of
the physically possible states of the coin in itself, $\varPhi$,
is the whole line segment between $(0.8,0.2,0)$ and $(0.2,0.8,0)$,
which is closed under convex combination.

To sum up, the \textquotedblleft statistical mixture of preparation
procedures\textquotedblright{} is a misleading conception. It actually
changes the notion of the system in question from the original $\mathscr{S}$
to a composed system ($\mathscr{D}$+$\mathscr{S}$). And even in
this sense it involves unjustified a priori assumptions, without regard
to the actual physical properties of the preparation device.

\subsection*{Appendix~3}

\noindent In the GPT literature, in order to achieve the dynamics
be linear, the following three assumptions are made (Hardy~2008,
Appendix~1). In our own notations:
\begin{lyxlist}{00.00.0000}
\item [{(a)}] The time-evolution map $F_{t}$ preserves convex combination.
\item [{(b)}] The set of states contains the null vector.
\item [{(c)}] The time evolution maps the null vector into itself.
\end{lyxlist}
Notice that assumptions (b) and (c) directly contradict our assumed
empirical facts (\ref{eq:osszeg=00003D1})--(\ref{eq:1d}) in \textbf{(E2)};\textbf{
}$\varphi\left(M,S\right)$ does not contain the null vector. It is
essential to clarify this contradiction. First of all, it should be
noted that these assumptions are based on conceptually untenable claims,
for example that the null vector is that state of the system ``when
the system is not present'' (Hardy~2008, p.~5). But how do we imagine
a measurement on a physical system that is not present? Fortunately,
however, assumptions (b) and (c) are not necessary for the time-evolution
$F_{t}$ be linear. Indeed, assumption (a) implies in itself that
$F_{t}$ is an $\mathbb{R}^{M+\left|S\right|}\rightarrow\mathbb{R}^{M+\left|S\right|}$
affine transformation restricted to the convex set of states (see
Meyer and Kay~1973, Theorem~4). $F_{t}$ is linear if and only if
it satisfies (a) and
\begin{lyxlist}{00.00.0000}
\item [{(d)}] its affine extension on $\mathbb{R}^{M+\left|S\right|}$
preserves the null vector.
\end{lyxlist}
And condition (d) does not require the null vector to be in the set
of states.

In our view, of course, the fulfillment of condition (d) is a matter
of further empirical information, beyond the stipulated empirical
facts \textbf{(E1)}--\textbf{(E4)}. In any case, let us come to hypothesis
(a) itself. How is it that assumption (a) does not hold even in an
example as simple and physically completely plausible as the one with
the coin discussed in section~\ref{sec:Dynamics}? In our view, the
reason is that the usual argumentation (see Müller 2021, pp. 20-21)
in favor of assumption (a) is conceptually flawed, as it operates
with the same problematic notion of ``statistical mixture of preparation
procedures'' that we have already criticized in Appendix~2. The
argument goes as follows. Consider the following two procedures:
\begin{lyxlist}{00.00.0000}
\item [{(i)}] The preparation device prepares the state $\vec{Z}_{i}$
with probability $\lambda_{i}$. Take the ``statistical mixture of
these states $\sum_{i}\lambda_{i}\vec{Z}_{i}$''; then we let this
``mixed state'' evolve in time, resulting in some final state $\vec{Z}'$.
\item [{(ii)}] The preparation device prepares the state $\vec{Z}_{i}$
with probability $\lambda_{i}$. We let each $\vec{Z}_{i}$ evolve
in time into $F_{t}\vec{Z}_{i}$. Finally, we take the ``statistical
mixture of states $F_{t}\vec{Z}_{i}$ with the same $\lambda_{i}$-s,
$\vec{Z}''=\sum_{i}\lambda_{i}F_{t}\vec{Z}_{i}$
\end{lyxlist}
``Clearly, (i) and (ii) are different descriptions of one and the
same laboratory procedure; they must hence result in the exact same
statistics of any measurements that we may decide to perform in the
end, and therefore lead to the same final state{[}.{]}'' ( Müller
2021, p.~21) That is,
\begin{equation}
\vec{Z}'=\vec{Z}''=\sum_{i}\lambda_{i}F_{t}\vec{Z}_{i}\label{eq:ketfele-preparacio}
\end{equation}
So far, with some clarification below, we agree with the argument.
As, indeed, if the ``time-evolution'' of the ``mixed state $\sum_{i}\lambda_{i}\vec{Z}_{i}$''
makes sense at all, it probably means the same process as described
in (ii). Where we disagree is the next claim: ``But this implies
that'' (\emph{ibid.)}
\begin{equation}
F_{t}\left(\sum_{i}\lambda_{i}\vec{Z}_{i}\right)=\sum_{i}\lambda_{i}F_{t}\vec{Z}_{i}\label{eq:preserves}
\end{equation}
Because $F_{t}$ has nothing to do with a ``mixed state.'' As we
pointed out in Appendix~2, a ``mixed state $\sum_{i}\lambda_{i}\vec{Z}_{i}$''
is not the state of the system, conceptually, but the state of the
composed system consisting of the preparation device ($\mathscr{D})$
and the original system ($\mathscr{S}$)---independently of whether
or not the vector $\sum_{i}\lambda_{i}\vec{Z}_{i}$ is among the possible
states of the original system. Applying the more precise notations
we introduced in Appendix~2, (\ref{eq:ketfele-preparacio}) reads
as follows:
\begin{equation}
^{(\mathscr{D}+\mathscr{S})}\vec{Z}'={}^{(\mathscr{D}+\mathscr{S})}\vec{Z}''=\sum_{i}\lambda_{i}\,{}^{(\mathscr{D}+\mathscr{S})}\left(F_{t}\vec{Z}_{i}\right)\label{eq:ketfele-preparacio-pontosan}
\end{equation}
$F_{t}:\varPhi\rightarrow\varPhi$ is not the time-evolution map of
``mixed states,'' states of the joint system ($\mathscr{D}$+$\mathscr{S}$).
The time-evolution of the ($\mathscr{D}$+$\mathscr{S}$) system is
something else; a map $T_{t}:{}^{(\mathscr{D}+\mathscr{S})}\varPhi\rightarrow{}^{(\mathscr{D}+\mathscr{S})}\varPhi$
, which is perhaps correctly claimed to be equivalent to the procedure
described in point (ii). So, (\ref{eq:ketfele-preparacio}), more
precisely (\ref{eq:ketfele-preparacio-pontosan}), means, so to say,
by the definition of $T_{t}$, that 
\begin{equation}
T_{t}\left(\sum_{i}\lambda_{i}\,{}^{(\mathscr{D}+\mathscr{S})}\vec{Z}_{i}\right)=\sum_{i}\lambda_{i}\,{}^{(\mathscr{D}+\mathscr{S})}\left(F_{t}\vec{Z}_{i}\right)\label{eq:T-definition}
\end{equation}
instead of (\ref{eq:preserves}).

For instance, in our Biased Coin example, consider the following states:
\begin{lyxlist}{00.00.0000}
\item [{$^{(\mathscr{D}+\mathscr{S})}\vec{Z}_{1}=(0.7,0.3,0)\in{}^{(\mathscr{D}+\mathscr{S})}\varPhi$:}] the
state of ($\mathscr{D}$+$\mathscr{S}$) in which the preparation
device $\mathscr{D}$ has propensity $1$ to set the coin into state
$\vec{Z}_{1}=(0.7,0.3,0)\in\varPhi$
\item [{$^{(\mathscr{D}+\mathscr{S})}\vec{Z}_{2}=(0.3,0.7,0)\in{}^{(\mathscr{D}+\mathscr{S})}\varPhi$:}] the
state of ($\mathscr{D}$+$\mathscr{S}$) in which the preparation
device $\mathscr{D}$ has propensity $1$ to set the coin into state
$\vec{Z}_{2}=(0.3,0.7,0)\in\varPhi$
\item [{$^{(\mathscr{D}+\mathscr{S})}\vec{Z}_{mix}=(0.5,0.5,0)\in{}^{(\mathscr{D}+\mathscr{S})}\varPhi$:}] the
state of ($\mathscr{D}$+$\mathscr{S}$) in which the preparation
device $\mathscr{D}$ has propensity $\frac{1}{2}$ to to set the
coin into state $\vec{Z}_{1}=(0.7,0.3,0)\in\varPhi$ and propensity
$\frac{1}{2}$ to set the coin into state $\vec{Z}_{2}=(0.3,0.7,0)\in\varPhi$
\end{lyxlist}
Obviously, $^{(\mathscr{D}+\mathscr{S})}\vec{Z}_{mix}=\frac{1}{2}{}^{(\mathscr{D}+\mathscr{S})}\vec{Z}_{1}+\frac{1}{2}{}^{(\mathscr{D}+\mathscr{S})}\vec{Z}_{2}$.
Now, adopting the assumption that (\ref{eq:T-definition}) holds,
and applying (\ref{eq:F_t-formula1})--(\ref{eq:F_t-formula2}) with,
say, $t=2$, 
\begin{eqnarray*}
T_{2}\left(^{(\mathscr{D}+\mathscr{S})}\vec{Z}_{mix}\right)=T_{2}\left(\frac{1}{2}{}^{(\mathscr{D}+\mathscr{S})}\vec{Z}_{1}+\frac{1}{2}{}^{(\mathscr{D}+\mathscr{S})}\vec{Z}_{2}\right)\,\,\,\,\,\,\,\,\,\,\,\,\,\,\,\,\,\,\,\,\,\,\,\,\,\,\,\,\,\,\,\,\\
=\frac{1}{2}F_{2}\vec{Z}_{1}+\frac{1}{2}F_{2}\vec{Z}_{2}=\frac{1}{2}F_{2}\left(0.7,0.3,0\right)+\frac{1}{2}F_{2}\left(0.3,0.7,0\right)\,\,\,\,\,\,\,\,\,\,\,\,\,\,\,\,\,\,\,\,\,\,\,\,\,\\
\approx\frac{1}{2}(0.44,0.56,0)+\frac{1}{2}(0.22,0.78,0)=\left(0.33,0.67,0\right)\,\,\,\,\,\,\,\,\,\,\,\,\,\,\,\,\,\,\,\,\,\,\,\,\,\,\,\,\,\,\,\,\,\,\,\,\,\,\,\,\,\,
\end{eqnarray*}
At the same time, 
\[
F_{2}\left(\frac{1}{2}\vec{Z}_{1}+\frac{1}{2}\vec{Z}_{2}\right)=F_{2}\left(\frac{1}{2}\left(0.7,0.3,0\right)+\frac{1}{2}\left(0.3,0.7,0\right)\right)=F_{2}\left(0.5,0.5,0\right)\approx\left(0.27,0.73,0\right)
\]
Thus, as expected from the formulas (\ref{eq:F_t-formula1})--(\ref{eq:F_t-formula2}),
$F_{t}$ does not preserve convex combination, whether or not (\ref{eq:T-definition})
is satisfied.

\section*{Funding}

This work was supported by Hungarian National Research, Development
and Innovation Office (Grant No. K134275 ).

\section*{References}
\begin{lyxlist}{00.00.0000}
\item [{Abramsky,~S.~and~Heunen,~C.~(2016):}] Operational Theories
and Categorical Quantum Mechanics. In J. Chubb, A. Eskandarian, and
V. Harizanov (eds.), \emph{Logic and Algebraic Structures in Quantum
Computing, }Cambridge University Press, Cambridge.
\item [{Aerts,~D.~(2009):}] Operational Quantum Mechanics, Quantum Axiomatics
and Quantum Structures. In D. Greenberger, K. Hentschel and F. Wienert
(Eds.), \emph{Compendium of Quantum Physics Concepts. Experiments,
History and Philosophy,} Springer-Verlag, Berlin, Heidelberg.
\item [{Bana,~G.~and~Durt,~T.~(1997):}] Proof of Kolmogorovian Censorship,
\emph{Found. Phys.} \textbf{27}, 1355.
\item [{Barandes,~J.~(2023):}] The Stochastic-Quantum Correspondence,
https://philsci-archive.pitt.edu/22501.
\item [{Barnum,~H.,~Barrett,~J.,~Leifer,~M.,~and~Wilce,~A.~(2007):}] Generalized
No-Broadcasting Theorem, \emph{Phys. Rev. Let.} \textbf{99}, 240501.
\item [{Barnum,~H.,~Barrett,~J.,~Leifer,~M.,~and~Wilce,~A.~(2008):}] Teleportation
in General Probabilistic Theories, arXiv:0805.3553 {[}quant-ph{]}
\item [{Bell,~J.~S.~(1987):}] \emph{Speakable and unspeakable in quantum
mechanics}, Cambridge University Press, Cambridge.
\item [{Busch,~P.,~Grabowski,~M.,~and~Lahti,~P.~J.~(1995):}] \emph{Operational
Quantum Physics}, Springer-Verlag, Berlin, Heidelberg.
\item [{Cassinelli~G.~and~Lahti,~P.~(2017):}] Quantum mechanics: why
complex Hilbert space? \emph{Phil. Trans. R. Soc. A} \textbf{375},
20160393.
\item [{Cirel'son,~B.~S.~(1980):}] Quantum Generalizations Of Bell'S
Inequality, \emph{Letters in Mathematical Physics} \textbf{4}, 93--100.
\item [{Davies,\,E.~B.~(1976):}] \emph{Quantum Theory of Open Systems},
Academic Press, London.
\item [{Foulis,~D.~J.~and~Randall,~C.~H.~(1974):}] Empirical Logic
and Quantum Mechanics, \emph{Synthese}\textbf{~29}, 81--111.
\item [{Fröhlich~J.~and~Pizzo~A.~(2022):}] The Time-Evolution of States
in Quantum Mechanics according to the ETH-Approach, \emph{Commun.
Math. Phys.} \textbf{389}, 1673--1715
\item [{Gudder,~S.~(1988):}] Two-Hole Experiment (Ch.~2. Sec.~5), \emph{Quantum
probability}, Academic Press, Boston.
\item [{Griffiths,~R.~B.~(2013):}] Hilbert space quantum mechanics is
noncontextual, \emph{Studies in History and Philosophy of Modern Physics}
\textbf{44}, 174--181.
\item [{Hardy,~L.~(2008):}] Quantum Theory From Five Reasonable Axioms,
http://arxiv.org/abs/quant-ph/0101012v4.
\item [{Henk~M.,~Richter-Gebert~J.,~and~Ziegler~G.~M.~(2004):}] Basic
Properties Of Convex Polytopes, in \emph{Handbook of Discrete and
Computational Geometry}, \emph{Second Edition}, J. E. Goodman and
J. O\textquoteright Rourke (eds.), Chapman \& Hall/CRC, Boca Raton,
Fla.
\item [{Hofer-Szabó, G., Rédei, M., and Szabó, L.E. (2013):}] \emph{The
Principle of the Common Cause}, Cambridge University Press, Cambridge.
\item [{Holevo,~A.~(2011):}] \emph{Probabilistic and statistical aspects
of quantum theory}, Edizioni della Normale, Springer, Pisa.
\item [{Jauch,~I.~M.~and~~Piron,~C.~(1963):}] Can Hidden Variables
be Excluded in Quantum Mechanics?, \emph{Helv. Phys. Acta.} \textbf{36},
827--837.
\item [{Leggett,~A.~J.~(1998):}] Macroscopic Realism: What Is It, and
What Do We Know about It from Experiment?, in: \emph{Quantum Measurement:
Beyond Paradox}, University of Minnesota Press, Minneapolis.
\item [{Ludwig,~G.~(1970):}] \emph{Deutung des Begriffs \textquotedblright physikalische
Theorie\textquotedblright{} und axiomatische Grundlagung der Hilbertraumstruktur
der Quantenmechanik durch Hauptsätze des Messens}, Springer-Verlag,
Berlin, Heidelberg.
\item [{Malament,~D.~B.~(1992):}] Critical notice: Itamar Pitowsky's
`Quantum Probability -- Quantum Logic\emph{', Philosophy of Science}
\textbf{59}, 300--320.
\item [{Meyer,~W.~and~Kay,~D.~C.~(1973):}] A convexity structure
admits but one real linearization of dimension greater than one, \emph{J.
London Math. Soc.} \textbf{2}, 124--130.
\item [{Müller,~M.~P.~(2021):}] Probabilistic theories and reconstructions
of quantum theory, \emph{SciPost Phys. Lect. Notes} 28 (doi: 10.21468/SciPostPhysLectNotes.28).
\item [{Pitowsky,~I.~(1989):}] \emph{Quantum Probability -- Quantum
Logic}, Springer-Verlag, Berlin, Heidelberg.
\item [{Park,~J.~L.~and~Margenau,~H.~(1968):}] Simultaneous Measurability
in Quantum Theory, \emph{Int. J. Theoretical Physics} \textbf{1},
211--283.
\item [{Park,~J.~L.~(1970):}] The Concept of Transition in Quantum Mechanics,
\emph{Foundations of Physics} \textbf{1}, 23--33.
\item [{Park,~J.~L.~and~Margenau,~H.~(1971):}] The Logic of Noncommutability
of Quantum-Mechanical Operators--and Its Empirical Consequences,
in: \emph{Perspectives in Quantum Theory -- Essays in Honor of Alfred
Landé}, W. Yourgrau and A. van der Merwe (eds.), The MIT Press, Cambridge,
Massachusetts.
\item [{Popescu~S.~and~Rohrlich,~D.~(1994):}] Quantum Nonlocality
as an Axiom, \emph{Foundations of Physics} \textbf{24}, 379--385.
\item [{Popper,~K.~R.~(1967):}] Quantum mechanics without \textquoteleft the
observer\textquoteright , in: \emph{Quantum theory and reality, }M.
Bunge (Ed.), Springer, Berlin, Heidelberg.
\item [{Rédei,~M.~(2010):}] Kolmogorovian Censorship Hypothesis For General
Quantum Probability Theories, \emph{Manuscrito} \textbf{33,} 365-380.
\item [{Reichenbach,~H.~(1944):}] Analysis of an interference experiment
(Ch.~1~$\mathsection$~7), in: \emph{Philosophical foundations
of quantum mechanics}, University of California Press, Los Angeles.
\item [{Ross,~D.~J.~(1974):}] Operator--observable correspondence,
\emph{Synthese} \textbf{29}, 373--403.
\item [{Schmid,~D.,~Spekkens,~R.W.,~and~Wolfe,~E.~(2018):}] All
the noncontextuality inequalities for arbitrary prepare-and-measure
experiments with respect to any fixed set of operational equivalences,\emph{
Phys. Rev. A} \textbf{97}, 062103.
\item [{Schroeck,~F.~E.,~Jr.~and~Foulis,~D.~J.~(1990):}] Stochastic
Quantum Mechanics Viewed from the Language of Manuals, \emph{Foundations
of Physics }\textbf{20}\emph{, }823--858.
\item [{Spekkens,~R.W.~(2005):}] Contextuality for preparations, transformations,
and unsharp measurements, \emph{Phys. Rev. A} \textbf{71}, 052108.
\item [{Strauss,~M.~(1936):}] The logic of complementarity and the foundation
of quantum theory, in: \emph{The Logico-Algebraic Approach To Quantum
Mechanics, Volume I, Historical Evolution}, C. A. Hooker (ed.), D.
Reidel Publishing Company, Dordrecht, 1975.
\item [{Szabó,~L.~E.~(1995):}] Is quantum mechanics compatible with
a deterministic universe? Two interpretations of quantum probabilities,
\emph{Foundations of Physics Letters} \textbf{8}, 421.
\item [{Szabó,~L.~E.~(1998):}] Quantum structures do not exist in reality,
\emph{Int. J. of Theor. Phys.} \textbf{37}, 449--456.
\item [{Szabó,~L.~E.~(2001):}] Critical reflections on quantum probability
theory, in: \emph{John von Neumann and the Foundations of Quantum
Physics}, M. Rédei, M. Stoeltzner (eds.), Kluwer Academic Publishers,
Dordrecht.
\item [{Szabó,~L.~E.~(2008):}] The Einstein-{}-Podolsky-{}-Rosen Argument
and the Bell Inequalities, \emph{Internet Encyclopedia of Philosophy.
}(https://iep.utm.edu/einstein-podolsky-rosen-argument-bell-inequalities).
\item [{Szabó,~L.~E.~(2020):}] Intrinsic, extrinsic, and the constitutive
a priori, \emph{Foundations of Physics} \textbf{50}, 555--567.
\end{lyxlist}

\end{document}